\theoremstyle{plain}
\newcommand{\lang}[1]{{\mathcal{L}(#1)}}%
\newcommand{\preceqq}{\preccurlyeq}
\newcommand{\wt}{\widetilde}
\newcommand{\cC}{\mathcal{C}}
\newcommand{\cF}[1]{\mathsf{Can}^{#1}}
\newcommand{\cG}[1]{\mathsf{Res}^{#1}}
\newcommand{\cL}{\mathsf{R}}
\newcommand{\cH}{\mathsf{H}}
\newcommand{\cM}{\mathcal{M}}
\newcommand{\cN}{\mathcal{N}}
\newcommand{\cP}{\mathcal{P}}
\newcommand{\cR}{\mathcal{R}}
\newcommand{\cS}{\mathcal{S}}
\newcommand{\cT}{\mathcal{T}}
\newcommand{\rr}{\sim^{r}}
\newcommand{\qo}{\preceqq}
\newcommand{\qon}{\prec}
\newcommand{\ql}{\preceqq^{\ell}}
\newcommand{\qr}{\preceqq^{r}}
\newcommand{\qln}{\prec^{\ell}}
\newcommand{\qrn}{\prec^{r}}
\newcommand{\rrN}{\rr_{\cN}}
\newcommand{\rrL}{\rr_{L}}
\newcommand{\qlN}{\ql_{\cN}}
\newcommand{\qrN}{\qr_{\cN}}
\newcommand{\qlL}{\ql_{L}}
\newcommand{\qrL}{\qr_{L}}
\newcommand{\ud}{\stackrel{\rm\scriptscriptstyle def}{=}}
\newcommand{\udiff}{\stackrel{\rm\scriptscriptstyle def}{\iff}}
\newcommand{\Lra}{\Leftrightarrow}
\newcommand{\Ra}{\Rightarrow}
\newcommand{\La}{\Leftarrow}
\newcommand{\ra}{\rightarrow}
\DeclareMathOperator{\cl}{{cl}}
\DeclareMathOperator{\pre}{pre}
\DeclareMathOperator{\post}{post}
\DeclareMathOperator{\Suf}{\cS}
\DeclareMathOperator{\Pref}{\cP}
\DeclareMathOperator{\row}{r}
\DeclareMathOperator{\Rows}{Rows}
\newcommand{\eox}{\hfill{\ensuremath{\Diamond}}}
\newcommand{\qA}{\qr_{L_{\Suf}}}
\newcommand{\qAn}{\qrn_{L_{\Suf}}}
\newcommand{\len}[1]{{\vert{#1}\vert}}
\newcommand{\RemoveAlgoNumber}{\renewcommand{\fnum@algocf}{\AlCapSty{\AlCapFnt\algorithmcfname}}}
\title{A Quasiorder-based Perspective on Residual Automata}
\titlerunning{A Quasiorder-based Perspective on Residual Automata}%
\author{Pierre Ganty}{IMDEA Software Institute, Madrid, Spain}{pierre.ganty@imdea.org}{0000-0002-3625-6003}{Partially supported by the Madrid regional project S2018/TCS-4339 BLOQUES and the Ramón y Cajal fellowship RYC-2016-20281.}
\author{Elena Gutiérrez}{IMDEA Software Institute, Madrid, Spain\\ Universidad Politécnica de Madrid, Spain}{elena.gutierrez@imdea.org}{0000-0001-5999-7608}{
Partially supported by the BES-2016-077136 grant from the Spanish Ministry of Economy, Industry and Competitiveness.}
\author{Pedro Valero}{IMDEA Software Institute, Madrid, Spain\\ Universidad Politécnica de Madrid, Spain}{pedro.valero.mejia@gmail.com}{0000-0001-7531-6374}{}
\authorrunning{P. Ganty and E. Gutiérrez and P. Valero}%
\keywords{Residual Automata, Quasiorders, Double-Reversal Method, Canonical RFA, Regular Languages}%
\begin{document}

\maketitle

\begin{abstract}
In this work, we define a framework of automata constructions based on quasiorders over words to provide new insights on the class of residual automata.
We present a new residualization operation and a generalized double-reversal method for building the canonical residual automaton for a given language.
Finally, we use our framework to offer a quasiorder-based perspective on NL\(^*\), an online learning algorithm for residual automata.
We conclude that quasiorders are fundamental to residual automata as congruences are to deterministic automata.
\end{abstract}

\section{Introduction}

Residual automata (RFAs for short) are finite-state automata for which each state defines a \emph{residual} of its language, where the residual of a language \(L\) by a word \(u\) is defined as the set of words \(w\) such that \(uw \in L\).
The class of RFAs lies between deterministic (DFAs) and nondeterministic automata (NFAs).
They share with DFAs a significant property: the existence of a canonical minimal form for any regular language.
On the other hand, they share with NFAs the existence of automata that are exponentially smaller (in the number of states) than the corresponding minimal DFA for the language.
These properties make RFAs specially appealing in certain areas of computer science such as Grammatical Inference~\cite{denis2004learning,Kasprzik2011Inference}.

RFAs were first introduced by Denis et al.~\cite{denis2000residual,denis2002residual}.
They defined an algorithm for \emph{residualizing} an automaton, which is a variation of the well-known subset construction used for determinization, and showed that there exists a \emph{unique} \emph{canonical} RFA, which is minimal in the number of states, for every regular language.
Moreover, they showed that the residual-equivalent of the double-reversal method~\cite{brzozowski1962canonical} holds, i.e.\ residualizing an automaton \(\cN\) whose reverse is residual yields the canonical RFA for the language accepted by \(\cN\).

Later, Tamm~\cite{tamm2015generalization} generalized the double-reversal method for RFAs by giving a sufficient and necessary condition that guarantees that the residualization operation defined by Denis et al.~\cite{denis2002residual} yields the canonical RFA.
In fact, this generalization comes in the same lines as that of Brzozowski and Tamm~\cite{brzozowski2014theory} for the double-reversal method for building the minimal DFA.

These results evidence the existence of a relationship between RFAs and DFAs.
In fact, a connection between these two classes of automata was already established by Myers et al.~\cite{AdamekMUM14,MyersAMU15} from a category-theoretical point of view.
Concretely, they~\cite{AdamekMUM14} use this perspective to address the residual-equivalent of the double-reversal method proposed by Denis et al.~\cite{denis2002residual} to obtain the canonical RFA.

In this work we evidence this connection between RFAs and DFAs from the point of view of quasiorders over words.
Specifically, we show that \emph{quasiorders} are fundamental to RFAs as \emph{congruences} are for DFAs.

Previously, we studied the problem of building DFAs using congruences, i.e., equivalence relations over words with good properties w.r.t. concatenation~\cite{ganty2019congruence}.
This way, we derived several well-known results about minimization of DFAs, including the double-reversal method and its generalization by Brzozowski and Tamm~\cite{brzozowski2014theory}.
While the use of congruences over words suited for the construction of a subclass of residual automata, namely, \emph{deterministic} automata, these are no longer useful to describe the more general class of \emph{nondeterministic} residual automata.
By moving from \emph{congruences} over words to \emph{quasiorders}, we are able to introduce nondeterminism in our automata constructions.

We consider quasiorders with good properties w.r.t. \emph{right} and \emph{left} concatenation.
In particular, we define the so-called right \emph{language-based} quasiorder, whose definition relies on a given regular language; and the right \emph{automata-based} quasiorder, whose definition relies on a finite representation of the language, i.e., an automaton.
We also give counterpart definitions for quasiorders that behave well with respect to \emph{left} concatenation.
Relying on quasiorders that \emph{preserve} a given regular language, i.e., the closure of the language w.r.t. the quasiorder coincides with the language, we will provide a framework of finite-state automata constructions for the language.

When instantiating our automata constructions using the right language-based quasiorder, we obtain the canonical RFA for the given language; while using the right automata-based quasiorder yields an RFA for the language accepted by the automaton that has, at most, as many states as the RFA obtained by the residualization operation defined by Denis et al.~\cite{denis2002residual}.
Similarly, \emph{left} automata-based and language-based quasiorders yield co-residual automata, i.e., automata whose reverse is residual.

Our quasiorder-based framework allows us to give a simple correctness proof of the double-reversal method for building the canonical RFA.
Moreover, it allows us to generalize this method in the same fashion as Brzozowski and Tamm~\cite{brzozowski2014theory} generalized the double-reversal method for building the minimal DFA.
Specifically, we give a characterization of the class of automata for which our automata-based quasiorder construction yields the canonical RFA.

We compare our characterization with the class of automata, defined by Tamm~\cite{tamm2015generalization}, for which the residualization operation of Denis et al.~\cite{denis2002residual} yields the canonical RFA and show that her class of automata is strictly contained in the class we define.
Furthermore, we highlight the connection between the generalization of Brzozowski and Tamm~\cite{brzozowski2014theory} and the one of Tamm~\cite{tamm2015generalization} for the double-reversal methods for DFAs and RFAs, respectively.

Finally, we revisit the problem of learning residual automata from a quasiorder-based perspective.
Specifically, we observe that the NL\(^*\) algorithm defined by Bollig et al.~\cite{bollig2009angluin}, inspired by the popular Angluin's L\(^*\) algorithm for learning DFAs~\cite{angluin1987learning}, can be seen as an algorithm that starts from a quasiorder and refines it at each iteration.
At the end of each iteration, the automaton built by NL\(^*\) coincides with our quasiorder-based automata construction applied to the refined quasiorder.

\medskip\noindent\textbf{Structure of the paper.}%
\label{subp:structure}
After preliminaries in Section~\ref{sec:preliminaries}, we introduce in Section~\ref{sec:automataConstructions} automata constructions based on quasiorders and establish the duality between these constructions when using right and left quasiorders.
We instantiate these constructions in Section~\ref{sec:Instantiation} with the language-based and automata-based quasiorders and study the relations between the resulting automata.
As a consequence, we derive in Section~\ref{sec:Novel} a generalization of the double-reversal method for building the canonical RFA for a language.
In addition, we show a novel quasiorder-based perspective on the NL\(^*\) algorithm for learning residual automata in Section~\ref{sec:LearningNL:qo}.
Finally, Appendix~\ref{sec:LearningNL} includes a formal description of the NL\(^*\) algorithm, Appendix~\ref{sec:supp-results} is dedicated to supplementary results, including the pseudocode of our quasiorder-based version of NL\(^*\), and Appendix~\ref{sec:proofs} contains all the deferred proofs.

\section{Preliminaries}%
\label{sec:preliminaries}

\noindent\textbf{Languages.}
Let \(\Sigma\) be a finite nonempty \emph{alphabet} of symbols.
Given a word \(w \in \Sigma^*\), we will use \(|w|\) to denote the \emph{length} of \(w\).
We denote \(w^R\) the \emph{reverse} of \(w\).
Given a language \(L \subseteq \Sigma^*\), \(L^R ≝ \{w^R \mid w \in L\}\) denotes the \emph{reverse language} of \(L\) and \(L^c\), its \emph{complement} language.

We denote the left (resp. right) quotient of \(L\) by a word \(u\), also known as \emph{residual}, as \(u^{-1}L \ud \{w \in Σ^* \mid uw \in L\}\) (resp. \(Lu^{-1} \ud \{w \in Σ^* \mid wu \in L\}\)).
Denis et al.~\cite{denis2002residual} defined the notion of \emph{composite} and \emph{prime} residuals that we extend to right quotients as follows.
A left (resp.\ right) quotient \(u^{-1}L\) (resp. \(Lu^{-1}\)) is \emph{composite} if{}f it is the union of all the left (resp.\ right) quotients that it strictly contains, i.e.\ \(u^{-1}L = \bigcup_{x \in Σ^*, \; x^{-1}L \subsetneq u^{-1}L} x^{-1}L\) (resp.\ \(Lu^{-1} = \bigcup_{x \in Σ^*, \; Lx^{-1} \subsetneq Lu^{-1}} Lx^{-1}\)).
Otherwise, we say the quotient is \emph{prime}.

\medskip\noindent\textbf{Automata.}
A  \emph{(nondeterministic) finite-state automaton} (NFA for short), or simply \emph{automaton}, is a 5-tuple \(\cN = (Q, \Sigma, \delta, I, F)\), where \(Q\) is a finite set of \emph{states}, \(\Sigma\) is an alphabet, \({I\subseteq Q}\) are the \emph{initial} states, \(F \subseteq Q\) are the \emph{final} states, and \(\delta: Q \times \Sigma \ra \wp(Q)\) is the \emph{transition} function, where \(\wp(Q)\) denotes the powerset w.r.t. \(Q\).
We denote the \emph{extended transition function} from \(\Sigma\) to \(\Sigma^*\) by \(\hat{\delta}\), defined in the usual way, and, given \(w \in Σ^*\) and \(S \in \wp(Q)\), we define \(\post_w^{\cN}(S) \ud \{q \in Q \mid \exists q' \in S, \; q \in \hat{δ}(q',w)\}\) and \(\pre_w^{\cN}(S) \ud \{q \in Q \mid \exists q' \in S, \; q' \in \hat{δ}(q,w)\}\).

Given \(S,T \subseteq Q\), \(W^{\cN}_{S,T} \ud \{w \in \Sigma^* \mid \exists q \in S, q' \in T,\; q' \in \hat{\delta}(q,w)\}\).
In particular, when \(S = \{q\}\) and \(T = F\), we say that \(W^{\cN}_{q,F}\) is \emph{the right language} of state \(q\).
Likewise, when \(S = I\) and \(T = \{q\}\), we say that \(W^{\cN}_{I,q}\) is the \emph{left language} of state \(q\).
In general, we omit the automaton \(\cN\) from the superscript when it is clear from the context.
We say that a state \(q\) is \emph{unreachable} if{}f \(W^{\cN}_{I,q} = \varnothing\) and we say that \(q\) is \emph{empty} if{}f \(W^{\cN}_{q,F} = \varnothing\).
Finally, the language accepted by an automaton \(\cN\) is \(\lang{\cN} = \bigcup_{q \in I} W_{q,F}^{\cN} = \bigcup_{q \in F} W_{I,q}^{\cN} = W_{I,F}^{\cN}\).

The NFA \(\cN' = (Q', Σ, δ', I', F')\) is a \emph{sub-automaton} of \(\cN\) if{}f \(Q' \subseteq Q\), \(I' \subseteq I\), \(F' \subseteq F\) and \(q' \in δ'(q,a) \Ra q' \in δ(q,a)\) with \(q,q' \in Q\) and \(a \in \Sigma\).
The \emph{reverse} of \(\cN\), denoted by \(\cN^R\), is defined as \(\cN^R = (Q, \Sigma, \delta_r, F, I)\) where \(q \in \delta_r (q',a)\) if{}f \(q' \in \delta(q,a)\).
Clearly, \(\lang{\cN}^R = \lang{\cN^R}\).

\medskip\noindent\textbf{Residual Automata.} A \emph{residual finite-state automaton} (RFA for short) is an NFA such that the right language of each state is a left quotient of the accepted language.
We write RFA instead of RFSA~\cite{denis2002residual} to be consistent with the abbreviations NFA and DFA.
Formally, an RFA is an automaton \(\cN = (Q, Σ, δ, I, F)\) such that \(\forall q \in Q, \exists u \in Σ^*, \; W^{\cN}_{q,F} = u^{-1}\lang{\cN}\).

We say an automaton is \emph{co-residual} (co-RFA for short) if its reverse is an RFA, i.e., \(\forall q \in Q, \exists u \in Σ^*, \; W^{\cN}_{I,q} = \lang{\cN}u^{-1}\).
We say \(u \in Σ^*\) is a \emph{characterizing word} for \(q \in Q\) if{}f \(W_{q,F}^{\cN} = u^{-1}\lang{\cN}\) and we say \(\cN\) is \emph{consistent} if{}f every state \(q\) is reachable by a characterizing word for \(q\).
Moreover, \(\cN\) is \emph{strongly consistent} if{}f every state \(q\) is reachable by every characterizing word of \(q\).

Denis et al.~\cite{denis2002residual} define a \emph{residualization} operation that, given NFA \(\cN\), builds an RFA \(\cN^{\text{res}}\) such that \(\lang{\cN^{\text{res}}} = \lang{\cN}\).
Let \(\cN = (Q, Σ, δ, I, F)\) be an NFA and \(u \in Σ^*\), the set \(\post_u^{\cN}(I)\) is \emph{coverable} if{}f \(\post_u^{\cN}(I) = \bigcup_{x\in\Sigma^*, \; \post_x^{\cN}(I) \subsetneq \post_u^{\cN}(I)}\post_{x}^{\cN}(I)\).
Define \(\cN^{\text{res}} \ud (\widetilde{Q}, Σ, \widetilde{δ}, \widetilde{I}, \widetilde{F})\) as an RFA with \(\widetilde{Q} = \{ \post_u^{\cN}(I) \mid u \in Σ^* \land \post_u^{\cN}(I) \text{ is not coverable}\}\), \(\widetilde{I} = \{S \in \widetilde{Q} \mid S \subseteq I\}\), \(\widetilde{F} = \{S \in \widetilde{Q} \mid S \cap F \neq \varnothing\}\) and \(\widetilde{δ}(S, a) = \{S' \in \widetilde{Q} \mid S' \subseteq δ(S, a)\}\) for every \(S \in \widetilde{Q}\) and \(a \in Σ\).

Finally, the \emph{canonical} RFA for a regular language \(L\) is the RFA \(\cC \ud (Q, Σ, δ, I, F)\) with \(Q \!=\! \{u^{-1}L \mid u \!\in\! Σ^* \land u^{-1}L \text{ is prime}\}\), \(I \!=\! \{u^{-1}L \!\in\! Q \mid u^{-1}L \subseteq L\}\), \(F \!=\! \{u^{-1}L \!\in\! Q \mid \varepsilon \!\in\! u^{-1}L\}\) and \(δ(u^{-1}L, a) = \{v^{-1}L \in Q \mid v^{-1}L \subseteq a^{-1}(u^{-1}L)\}\) for every \(u^{-1}L \in Q\) and \(a \in Σ\).
As shown by Denis et al.~\cite{denis2002residual}, the canonical RFA is a strongly consistent RFA and it is the minimal (in number of states) RFA such that \(\lang{\cN} = L\).
Moreover, the canonical RFA is maximal in the number of transitions.

\medskip\noindent\textbf{Quasiorders.}
A \emph{quasiorder} over \(\Sigma^*\) (qo for short) \(\mathord{\qo}\) is a reflexive and transitive binary relation over \(\Sigma^*\).
A \emph{symmetric} qo is called an \emph{equivalence relation}.
A quasiorder \(\preceqq\) is a \emph{right (resp.\ left) quasiorder} and we denote it \(\qr\) (resp. \(\ql\)) if{}f for all \(u,v \in Σ^*\), we have that \(u \preceqq v \Ra ua\preceqq va\) (resp. \(u \preceqq v \Ra au \preceqq av\)), for all \(a \in \Sigma\). 
For example, the quasiorder defined by \(u \qo_{\text{len}} v \udiff \len{u} \leq \len{v}\), is a left and right qo but not an equivalence relation.

Given two qo's \(\mathord{\qo}\) and \(\mathord{\qo'}\), we say that \(\mathord{\qo}\) is \emph{finer} than \(\mathord{\qo'}\) (or \(\mathord{\qo'}\) is \emph{coarser} than \(\mathord{\qo}\)) if{}f \(\mathord{\qo} \subseteq \mathord{\qo'}\).
For every qo \(\qo\), we define its \emph{strict} version as: \(u \qon v \udiff u \qo v \land  v \not\qo u \) and we define \((\qo)^{-1}\) as: \(u~(\qo)^{-1}~v \udiff v \qo u\).
Note that every qo \(\qo\) induces an equivalence relation defined as \(\sim ~\ud~ \qo \cap~ (\qo)^{-1}\).

We adopt the definition of \emph{closure} of a subset of \(S \subseteq Σ^* \) w.r.t. a qo \(\qo\) introduced by de Luca and Varricchio~\cite{deluca2011}.
Concretely, given a qo \({\qo}\) on \(Σ^*\) and a subset \(S \subseteq Σ^*\), we define the \emph{upper closure} (or simply \emph{closure}) of \(S\) w.r.t. \(\qo\) as \(\cl_{\qo}(S) \ud \{w \in Σ^* \mid \exists x \in S, \; x \qo w\}\).
We say that \(\cl_{\qo}(S)\) is a \emph{principal} if{}f \(\cl_{\qo}(S) = \cl_{\qo}(\{u\})\), for some \(u \in \Sigma^*\).
In that case, we write \(\cl_{\qo}(u)\) instead of \(\cl_{\qo}(\{u\})\).
Note that, \(\cl_{\qo}(u) = \cl_{\qo}(v)\), for all \(v \in \Sigma^*\) such that \(u \sim v\).
Finally, given a language \(L \subseteq \Sigma^*\), we say that a qo \(\qo\) is \(L\)-\emph{preserving} if{}f \(\cl_{\qo}(L)=L\).

\section{Automata Constructions from Quasiorders}%
\label{sec:automataConstructions}

We will consider right and left quasiorders on \(\Sigma^*\) (and their corresponding closures) and we will use them to define RFAs constructions for regular languages.
The following lemma gives a characterization of right and left quasiorders.

\begin{restatable}{lemmaR}{CongruencePbwComplete}%
\label{lemma:QObwComplete}
The following properties hold:
\begin{enumerate}
\item \(\qr\) is a right quasiorder if{}f \(\cl_{\qr}(u) v \subseteq \cl_{\qr}(uv)\), for all \(u,v \in \Sigma^*\).
\item\(\ql\) is a left quasiorder if{}f \(v \cl_{\ql}(u) \subseteq \cl_{\ql}(vu)\), for all \(u,v \in \Sigma^*\).
\end{enumerate}

\end{restatable}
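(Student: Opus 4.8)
The plan is to prove both equivalences by unfolding the definition of the upper closure $\cl_{\qr}(u) \ud \{w \mid \exists x, \, x \qr w\}$, noting that both items are symmetric under reversal of the concatenation side, so I will carry out item~1 in full and remark that item~2 follows by the same argument with left concatenation. The whole argument is essentially a chase through the definitions, and the only genuine content is seeing how the one-letter compatibility condition in the definition of a right quasiorder lifts to arbitrary suffixes.

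\medskip\noindent\textbf{Forward direction of item~1.}
Assume $\qr$ is a right quasiorder. I would fix $u,v \in \Sigma^*$ and take an arbitrary element of $\cl_{\qr}(u)v$, which has the form $wv$ with $w \in \cl_{\qr}(u)$, i.e.\ $u \qr w$. The goal is $wv \in \cl_{\qr}(uv)$, that is $uv \qr wv$. This is exactly the statement that right concatenation by the \emph{word} $v$ preserves $\qr$. Since the definition only gives preservation under concatenation by a single letter $a \in \Sigma$, I would promote this to words by a straightforward induction on $\len{v}$: the base case $v = \varepsilon$ is $u \qr w$ itself, and the inductive step applies the single-letter property to pass from $u v' \qr w v'$ to $u v' a \qr w v' a$ for $v = v' a$. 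This yields $uv \qr wv$, hence $wv \in \cl_{\qr}(uv)$, establishing the inclusion.

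\medskip\noindent\textbf{Backward direction of item~1.}
Conversely, assume $\cl_{\qr}(u)v \subseteq \cl_{\qr}(uv)$ for all $u,v$. To show $\qr$ is a right quasiorder I must verify $u \qr w \Ra ua \qr wa$ for every $a \in \Sigma$. From $u \qr w$ I get $w \in \cl_{\qr}(u)$, so $wa \in \cl_{\qr}(u)a$, and the hypothesis (with $v = a$) gives $wa \in \cl_{\qr}(ua)$, which unfolds precisely to $ua \qr wa$. Since $\qr$ is already assumed reflexive and transitive (it is a quasiorder), this right-compatibility is all that is needed.

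\medskip\noindent\textbf{Main obstacle.}
There is no real difficulty here; the statement is a definitional reformulation. The one place to be careful is the induction promoting single-letter compatibility to word compatibility in the forward direction, and the symmetric observation that in item~2 the induction must be performed by prepending letters (so the inductive step goes from $a v' u \qr a v' w$ to $\ldots$ reading the word $v$ from the correct end), but this is routine. I would therefore present item~1 in detail and dispatch item~2 with the remark that it is obtained by the dual argument, concatenating on the left and inducting on $\len{v}$ accordingly.
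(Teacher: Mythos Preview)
Your proof is correct and follows essentially the same approach as the paper's: both unfold the closure, use right-compatibility to pass from $u \qr w$ to $uv \qr wv$, and handle the converse by specialising $v$ to a single letter. You are in fact slightly more careful than the paper, which invokes $v \qr \tilde{v} \Ra vu \qr \tilde{v}u$ for a word $u$ without spelling out the induction that lifts the single-letter definition to arbitrary words; your explicit induction on $\len{v}$ fills exactly that elided step.
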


Given a regular language \(L\), we are interested in left and right quasiorders that are \(L\)-preserving.
We will use the principals of these quasiorders as states of automata constructions that yield RFAs and co-RFAs accepting the language \(L\).
Therefore, in the sequel, we will only consider quasiorders that induce a finite number of principals, i.e., quasiorders \(\qo\) such that the induced equivalence \(\mathord{\sim} \ud \mathord{\qo} \cap (\mathord{\qo})^{-1}\) has finite index.

Next, we introduce the notion of \emph{\(L\)-composite principals} which, intuitively, correspond to states of our automata constructions that can be removed without altering the language accepted by the automata.

\begin{definition}[\(L\)-Composite Principal]%
\label{def:CompositeClosed}
Let \(L\) be a regular language and let \(\qr\) (resp. \(\ql\)) be a right (resp.\ left) quasiorder on \(Σ^*\).
Given \(u \in \Sigma^*\), the principal \(\cl_{\qr}(u)\) (resp. \(\cl_{\ql}(u)\)) is \(L\)-\emph{composite} if{}f %
\begin{align*}
u^{-1}L & = \hspace{-10pt}\bigcup_{x\in\Sigma^*,\; x \qrn u }\hspace{-10pt} x^{-1}L &
\text{\emph{(}resp. }Lu^{-1} & = \hspace{-10pt}\bigcup_{x\in\Sigma^*,\; x \qln u }\hspace{-10pt} Lx^{-1}\text{\emph{)}}
\end{align*}
If \(\cl_{\qr}(u)\) (resp. \(\cl_{\ql}(u)\)) is not \(L\)-composite then it is \emph{\(L\)-prime}.
\end{definition}

We sometimes use the terms \emph{composite} and \emph{prime principal} when the language \(L\) is clear from the context.
Observe that, if \(\cl_{\qr}(u)\) is \(L\)-composite, for some \(u \in \Sigma^*\), then so is \(\cl_{\qr}(v)\), for every \(v \in \Sigma^*\) such that \(u \rr v\).
The same holds for a left quasiorder \(\ql\).

Given a regular language \(L\) and a right quasiorder \(\qr\) that is \(L\)-preserving, the following automata construction yields an RFA that accepts exactly the language \(L\).

\begin{definition}[Automata construction \(\cH^{r}(\qr, L)\)]%
\label{def:right-const:qo}
Let \(\qr\) be a right quasiorder and let \(L \subseteq \Sigma^*\) be a language.
Define the automaton \(\cH^{r}(\qr, L) ≝ (Q, \Sigma, \delta, I, F)\) where \(Q = \{\cl_{\qr}(u) \mid u \in Σ^*, \; \cl_{\qr}(u) \text{ is \(L\)-prime}\}\), \(I = \{\cl_{\qr}(u) \in Q \mid \varepsilon \in \cl_{\qr}(u)\}\), \(F = \{\cl_{\qr}(u) \in Q \mid u \in L\}\) and \( \delta(\cl_{\qr}(u), a) = \{ \cl_{\qr}(v) \in Q \mid \cl_{\qr}(u)   a \subseteq \cl_{\qr}(v)\}\) for all \(\cl_{\qr}(u) \in Q, a \in Σ\).
\end{definition}

\begin{restatable}{lemmaR}{HrGeneratesL}%
\label{lemma: HrGeneratesL}
Let \(L\subseteq \Sigma^*\) be a regular language and let \(\qr\) be a right \(L\)-preserving quasiorder.
Then \(\cH^r(\qr,L)\) is an RFA such that \(\lang{\cH^r(\qr,L)} = L\).
\end{restatable}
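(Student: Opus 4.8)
The plan is to reduce everything to one structural identity: that the right language of each state equals the corresponding quotient, i.e.\ the right language of state $\cl_{\qr}(u)$ is exactly $u^{-1}L$ for every $L$-prime principal $\cl_{\qr}(u)$. This identity immediately exhibits $\cH^r(\qr,L)$ as an RFA (each right language is a left quotient of $L$), and combined with a decomposition of $L$ into prime quotients it forces $\lang{\cH^r(\qr,L)}=L$. Before proving it I would record three elementary facts. First, since $\qr$ is $L$-preserving, $w\in L \iff \cl_{\qr}(w)\subseteq L$ (forward: $\cl_{\qr}(w)\subseteq\cl_{\qr}(L)=L$; backward: reflexivity), which in particular makes the final-state set $F$ well defined on principals. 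Second, from Lemma~\ref{lemma:QObwComplete} one gets $\cl_{\qr}(u)w\subseteq\cl_{\qr}(uw)$, whence the monotonicity $u\qr v \Rightarrow u^{-1}L\subseteq v^{-1}L$: if $uw\in L$ and $u\qr v$ then $vw\in\cl_{\qr}(uw)\subseteq L$. Third, again by Lemma~\ref{lemma:QObwComplete}, the transition relation simplifies to $\cl_{\qr}(v)\in\delta(\cl_{\qr}(u),a) \iff v\qr ua$ (with $\cl_{\qr}(v)$ $L$-prime): the inclusion $\cl_{\qr}(u)a\subseteq\cl_{\qr}(v)$ already contains $ua$, forcing $v\qr ua$, while conversely $v\qr ua$ gives $\cl_{\qr}(u)a\subseteq\cl_{\qr}(ua)\subseteq\cl_{\qr}(v)$.

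Next I would prove the key decomposition: for every $u$, $u^{-1}L=\bigcup\{v^{-1}L \mid v\qr u,\ \cl_{\qr}(v)\text{ is }L\text{-prime}\}$. One inclusion is immediate from monotonicity, since each $v^{-1}L$ with $v\qr u$ lies in $u^{-1}L$. The reverse inclusion I argue by well-founded induction: if $\cl_{\qr}(u)$ is $L$-prime the term $v=u$ already appears; if $\cl_{\qr}(u)$ is $L$-composite then by definition $u^{-1}L=\bigcup_{x\qrn u}x^{-1}L$, and each $x\qrn u$ satisfies $\cl_{\qr}(u)\subsetneq\cl_{\qr}(x)$, so recursing on these $x$ strictly enlarges the principal and must terminate because $\qr$ has finitely many principals. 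Every prime quotient produced along the way sits below $u$, which closes the argument.

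With these tools the core identity follows by induction on $|w|$, showing that $w$ is accepted from $\cl_{\qr}(u)$ iff $w\in u^{-1}L$. For $w=\varepsilon$, acceptance means $\cl_{\qr}(u)\in F$, i.e.\ $u\in L$, i.e.\ $\varepsilon\in u^{-1}L$. For $w=aw'$, a run exists iff some state $\cl_{\qr}(v)\in\delta(\cl_{\qr}(u),a)$ accepts $w'$; by the transition characterization these are exactly the $L$-prime $\cl_{\qr}(v)$ with $v\qr ua$, and the induction hypothesis gives that each accepts $v^{-1}L$. Hence $w'$ is accepted iff $w'\in\bigcup_{v\qr ua,\ \text{prime}}v^{-1}L$, which by the decomposition lemma equals $(ua)^{-1}L$; finally $w'\in(ua)^{-1}L \iff aw'\in u^{-1}L$. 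This proves the right language of $\cl_{\qr}(u)$ is $u^{-1}L$, so $\cH^r(\qr,L)$ is an RFA. For the accepted language, $\lang{\cH^r(\qr,L)}=\bigcup_{\cl_{\qr}(u)\in I}u^{-1}L=\bigcup_{u\qr\varepsilon,\ \text{prime}}u^{-1}L$, which is precisely the decomposition of $\varepsilon^{-1}L=L$; therefore $\lang{\cH^r(\qr,L)}=L$.

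I expect the main obstacle to be the decomposition lemma, and specifically the justification that the recursive replacement of each composite quotient by strictly larger principals terminates. This is exactly where the standing finite-index assumption on $\qr$ is indispensable, and it must be invoked explicitly; everything else becomes routine once the transition relation has been rewritten in the clean form $v\qr ua$.
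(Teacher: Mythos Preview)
Your proposal is correct and follows essentially the same approach as the paper: both establish the identity \(W_{\cl_{\qr}(u),F}^{\cH} = u^{-1}L\) by induction on word length and then read off the RFA property and the language equality from it. Your explicit decomposition lemma---every quotient \(u^{-1}L\) is the union of the prime quotients \(v^{-1}L\) with \(v\qr u\), justified by well-founded recursion on the finitely many principals---makes precise a step that the paper handles somewhat tersely in its inductive case (where it picks a witness \(z\) without arguing that one can always reach an \(L\)-prime one).
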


Given a regular language \(L\) and a left \(L\)-preserving quasiorder \(\ql\), we can give a similar automata construction of a co-RFA that recognizes exactly the language \(L\).

\begin{definition}[Automata construction \(\cH^{\ell}(\ql, L)\)]%
\label{def:left-const:qo}
Let \(\ql\) be a left quasiorder and let \(L \subseteq \Sigma^*\) be a language.
Define the automaton \(\cH^{\ell}(\ql, L)= (Q, \Sigma, \delta, I, F)\) where \(Q = \{\cl_{\ql}(u) \mid \linebreak u \in Σ^*, \; \cl_{\ql}(u) \text{ is \(L\)-prime}\}\), \(I = \{ \cl_{\ql}(u) \in Q \mid u \in L \}\), \(F = \{\cl_{\ql}(u) \in Q \mid \varepsilon \in \cl_{\ql}(u)\}\), and \(\delta(\cl_{\ql}(u), a) = \{\cl_{\ql}(v)\in Q \mid a   \cl_{\ql}(v) \subseteq \cl_{\ql}(u)\}\) for all \(\cl_{\ql}(u) \in Q, a \in \Sigma\).
\end{definition}

\begin{restatable}{lemmaR}{HlgeneratesL}%
\label{lemma:HlgeneratesL}
Let \(L\subseteq \Sigma^*\) be a language and let \(\ql\) be a left \(L\)-preserving quasiorder.
Then \(\cH^{\ell}(\ql,L)\) is a co-RFA such that \(\lang{\cH^{\ell}(\ql,L)} = L\).
\end{restatable}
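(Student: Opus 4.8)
The plan is to exploit the symmetry between the two constructions through reversal, reducing the claim to the already-proved Lemma~\ref{lemma: HrGeneratesL} for the right construction. Given the left \(L\)-preserving quasiorder \(\ql\), I would introduce the reversed relation \(\qr\) on \(\Sigma^*\) defined by \(u \qr v \udiff u^R \ql v^R\). First I would verify that \(\qr\) is indeed a right quasiorder: reflexivity and transitivity are inherited from \(\ql\), and for right compatibility, \(u \qr v\) gives \(u^R \ql v^R\), hence \(a u^R \ql a v^R\) since \(\ql\) is left, i.e.\ \((ua)^R \ql (va)^R\), i.e.\ \(ua \qr va\). The induced equivalence of \(\qr\) is the reverse image of that of \(\ql\), so \(\qr\) has finitely many principals and the principals correspond bijectively through \(\cl_{\qr}(w) \leftrightarrow \cl_{\ql}(w^R)\); concretely \((\cl_{\qr}(w))^R = \cl_{\ql}(w^R)\), which I would record as the basic transport identity.

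Next I would collect the two identities that carry the remaining data of the construction across reversal. Using the quotient identity \((Lu^{-1})^R = (u^R)^{-1}L^R\) together with \(\cl_{\ql}(L) = L\), I would show \(\cl_{\qr}(L^R) = L^R\), so that \(\qr\) is \(L^R\)-preserving. Then, combining the same quotient identity with the equivalence \(x \qrn u^R \iff x^R \qln u\), I would prove that \(\cl_{\qr}(u^R)\) is \(L^R\)-composite if{}f \(\cl_{\ql}(u)\) is \(L\)-composite, by pushing the defining union \(Lu^{-1} = \bigcup_{x \qln u} Lx^{-1}\) through reversal term by term. I expect this composite/prime correspondence to be the main obstacle, since it is the one step where the defining union must be reindexed and reversed simultaneously; everything else is bookkeeping. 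This correspondence is exactly what makes the state sets of the two constructions agree under the principal bijection.

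With these facts in hand I would establish the identification \(\cH^{\ell}(\ql, L) = (\cH^{r}(\qr, L^R))^R\) under the bijection \(\cl_{\qr}(w) \leftrightarrow \cl_{\ql}(w^R)\). The state sets match by the prime/composite correspondence; the final states of \(\cH^{r}(\qr, L^R)\), those with \(w \in L^R\), map to the initial states of \(\cH^{\ell}(\ql, L)\), those with \(w^R \in L\), while the initial states of \(\cH^{r}(\qr, L^R)\), characterized by \(\varepsilon \in \cl_{\qr}(w)\), map to its final states, characterized by \(\varepsilon \in \cl_{\ql}(w^R)\) — precisely the initial/final swap performed by reversal. For transitions, the condition \(\cl_{\qr}(u')a \subseteq \cl_{\qr}(v')\) defining an \(a\)-arrow from \(\cl_{\qr}(u')\) to \(\cl_{\qr}(v')\) becomes, after applying \((\,\cdot\,)^R\) and using the transport identity, \(a\,\cl_{\ql}(u'^R) \subseteq \cl_{\ql}(v'^R)\), which is exactly the defining condition of an \(a\)-arrow from \(\cl_{\ql}(v'^R)\) to \(\cl_{\ql}(u'^R)\) in \(\cH^{\ell}(\ql, L)\); this matches the reversed orientation of the arrow.

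Finally I would conclude by invoking Lemma~\ref{lemma: HrGeneratesL} with the reversed data: \(\qr\) is a right, \(L^R\)-preserving quasiorder of finite index and \(L^R\) is regular, so \(\cH^{r}(\qr, L^R)\) is an RFA with \(\lang{\cH^{r}(\qr, L^R)} = L^R\). Since \(\cH^{\ell}(\ql, L)\) is its reverse, \(\cH^{\ell}(\ql, L)\) is by definition a co-RFA, and \(\lang{\cH^{\ell}(\ql, L)} = \big(\lang{(\cH^{\ell}(\ql, L))^R}\big)^R = (L^R)^R = L\), which completes the proof.
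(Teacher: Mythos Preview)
Your proposal is correct and takes a genuinely different route from the paper. The paper proves Lemma~\ref{lemma:HlgeneratesL} directly, by repeating almost verbatim the induction on word length used for Lemma~\ref{lemma: HrGeneratesL}: it shows \(W_{I,\cl_{\ql}(u)}^{\cH^{\ell}} = Lu^{-1}\) for every state \(\cl_{\ql}(u)\) (whence \(\cH^{\ell}\) is co-RFA), and then argues \(\lang{\cH^{\ell}} = L\) from this by summing over the final states. You instead exploit the left--right duality via reversal, defining \(\qr\) by \(u \qr v \iff u^R \ql v^R\), establishing the isomorphism \(\cH^{\ell}(\ql,L) \cong (\cH^{r}(\qr,L^R))^R\), and then invoking Lemma~\ref{lemma: HrGeneratesL}. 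Your isomorphism argument is essentially the content of the paper's Lemma~\ref{lemma:leftRightReverse}, which in the paper is stated only \emph{after} Lemma~\ref{lemma:HlgeneratesL}; you have inlined its proof, so there is no circularity. The trade-off: the paper's direct induction is more self-contained and yields the explicit description of left languages of states as right quotients of \(L\) as an intermediate result, whereas your reduction avoids duplicating the inductive argument and makes the duality structural. One small remark: you assert that \(L^R\) is regular when invoking Lemma~\ref{lemma: HrGeneratesL}; the statement of Lemma~\ref{lemma:HlgeneratesL} does not assume regularity of \(L\), but the standing convention in the paper that \(\ql\) has finite index together with \(L\)-preservation forces \(L\) (hence \(L^R\)) to be regular, so this is harmless.
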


Observe that the automaton \(\cH^r = \cH^{r}(\qr, L)\) (resp. \(\cH^{\ell} = \cH^{\ell}(\ql, L)\)) is \emph{finite}, since we assume \(\qr\) (resp. \(\ql\)) induces a finite number of principals.
Note also that \(\cH^{r}\) (resp. \(\cH^{\ell}\)) possibly contains empty (resp.\ unreachable) states but no state is unreachable (resp.\ empty).

Moreover, notice that by keeping all principals of \(\qr\) (resp. \(\ql\)) as states, instead of only the prime ones as in Definition~\ref{def:right-const:qo} (resp. Definition~\ref{def:left-const:qo}), we would obtain an RFA (resp.\ a co-RFA) with (possibly) more states that also recognizes \(L\).

The following lemma shows that \(\cH^r\) and \(\cH^{\ell}\) inherit the left-right duality between \(\qr\) and \(\ql\) through the reverse operation.

\begin{restatable}{lemmaR}{leftRightReverse}%
\label{lemma:leftRightReverse}
Let \(\qr\) and \(\ql\) be a right and a left quasiorder, respectively, and let \(L \subseteq \Sigma^*\) be a language.
If \(u \qr v \Lra u^R \ql v^R\) then \(\cH^{r}(\qr, L) \) is isomorphic to \( \left(\cH^{\ell}(\ql, L^R)\right)^R\).
\end{restatable}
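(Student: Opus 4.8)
The plan is to exhibit an explicit isomorphism and verify it against the definition of automaton isomorphism, keeping in mind that reversing an automaton swaps initial and final states and transposes every transition. Writing $S^R \ud \{w^R \mid w \in S\}$ (an involutive bijection on $\wp(\Sigma^*)$), I would define the candidate map on states of $\cH^{r}(\qr,L)$ by $\varphi(\cl_{\qr}(u)) \ud \cl_{\ql}(u^R)$ and show that it is a well-defined bijection onto the states of $\cH^{\ell}(\ql,L^R)$ (these are exactly the states of $(\cH^{\ell}(\ql,L^R))^R$) that sends $\cH^{r}$'s initial states to the latter's final states, $\cH^{r}$'s final states to the latter's initial states, and respects transitions in the transposed sense.

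The backbone is a single closure identity. From the hypothesis $u \qr w \Lra u^R \ql w^R$ I get, for every $u \in \Sigma^*$,
\[
w \in \cl_{\qr}(u) \iff u \qr w \iff u^R \ql w^R \iff w^R \in \cl_{\ql}(u^R) ,
\]
that is, $\cl_{\qr}(u) = (\cl_{\ql}(u^R))^R$. Since reversal is a bijection, this gives $\cl_{\qr}(u) = \cl_{\qr}(v) \iff \cl_{\ql}(u^R) = \cl_{\ql}(v^R)$, so $\varphi$ is well defined and injective on principals. Taking $w = \varepsilon$ above matches the endpoint conditions built on ``$\varepsilon \in \cl(\cdot)$'': $\varepsilon \in \cl_{\qr}(u) \iff \varepsilon \in \cl_{\ql}(u^R)$, so $\varphi$ carries the initial states of $\cH^{r}$ to the final states of $\cH^{\ell}(\ql,L^R)$.

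Next I would transport primality across the reversal, which is where the language $L$ enters. The key ingredients are the quotient identity $(u^{-1}L)^R = L^R(u^R)^{-1}$ and the fact that the hypothesis also transports strict orders, $x \qrn u \iff x^R \qln u^R$. Reversing both sides of the composite equation $u^{-1}L = \bigcup_{x \qrn u} x^{-1}L$ and reindexing the union by $y = x^R$ turns it into $L^R(u^R)^{-1} = \bigcup_{y \qln u^R} L^R y^{-1}$; injectivity of reversal then yields that $\cl_{\qr}(u)$ is $L$-composite iff $\cl_{\ql}(u^R)$ is $L^R$-composite, hence $L$-prime iff $L^R$-prime. Thus $\varphi$ maps the state set of $\cH^{r}$ bijectively onto that of $\cH^{\ell}(\ql,L^R)$ (surjectivity: each $L^R$-prime $\cl_{\ql}(w)$ equals $\varphi(\cl_{\qr}(w^R))$), and the final/initial conditions built on ``$u \in L$'' versus ``$w \in L^R$'' match because $u \in L \iff u^R \in L^R$.

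It remains to match transitions. In $\cH^{r}$ the relation $\cl_{\qr}(v) \in \delta(\cl_{\qr}(u),a)$ means $\cl_{\qr}(u)\,a \subseteq \cl_{\qr}(v)$, while in $(\cH^{\ell}(\ql,L^R))^R$ the transposed transition $\cl_{\ql}(v^R) \in \delta_r(\cl_{\ql}(u^R),a)$ unfolds (via the reverse of $\cH^{\ell}$'s rule $a\,\cl_{\ql}(\cdot) \subseteq \cl_{\ql}(\cdot)$) to $a\,\cl_{\ql}(u^R) \subseteq \cl_{\ql}(v^R)$. These are equivalent because $(\cl_{\qr}(u)\,a)^R = a\,(\cl_{\qr}(u))^R = a\,\cl_{\ql}(u^R)$ and $(\cl_{\qr}(v))^R = \cl_{\ql}(v^R)$, and $A \subseteq B \iff A^R \subseteq B^R$. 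Assembling these facts gives the isomorphism. I expect the primality transfer to be the main obstacle: it is the only step where $L$ interacts nontrivially with reversal, and the delicate points are the quotient identity $(u^{-1}L)^R = L^R(u^R)^{-1}$ and the reindexing of the defining union under $x \mapsto x^R$, whereas the endpoint and transition matchings are routine reverse-of-concatenation bookkeeping.
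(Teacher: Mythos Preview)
Your proposal is correct and follows essentially the same approach as the paper: both define the isomorphism $\varphi(\cl_{\qr}(u)) = \cl_{\ql}(u^R)$, transfer $L$-primality via the quotient identity $(u^{-1}L)^R = L^R(u^R)^{-1}$ together with $x \qrn u \iff x^R \qln u^R$, and verify the initial/final and transition conditions directly. The only minor presentational difference is that you package the argument around the closure identity $\cl_{\qr}(u) = (\cl_{\ql}(u^R))^R$ and reason at the set level (using $A \subseteq B \iff A^R \subseteq B^R$), whereas the paper unpacks the transition step to the element level via $\cl_{\qr}(u)\,a \subseteq \cl_{\qr}(v) \iff v \qr ua$ before applying the hypothesis; this is a stylistic difference, not a substantive one.
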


Finally, it follows from the next theorem that given two right \(L\)-preserving quasiorders, \(\qr_1\) and \(\qr_2\), if \(\mathord{\qr_1} \subseteq \mathord{\qr_2}\) then the automaton \(\cH^{r}(\qr_1,L)\) has, at least, as many states as \(\cH^{r}(\qr_2,L)\).
The same holds for left \(L\)-preserving quasiorders and \(\cH^{\ell}\).
Observe that this is not obvious since only the \(L\)-prime principals correspond to states of the automata construction.

\begin{restatable}{theoremR}{numLPrimePrincipals}\label{theorem:numLPrimePrincipals}
Let \(L\subseteq Σ^*\) be a language and let \(\qo_1\) and \(\qo_2\) be two left or two right \(L\)-preserving quasiorders.
If \(\mathord{\qo_1} \subseteq \mathord{\qo_2}\) then:
\[\len{\{\cl_{\qo_1}(u) \mid u \in Σ^* \land \cl_{\qo_1}(u) \text{ is \(L\)-prime}\}} \geq \len{\{\cl_{\qo_2}(u) \mid u \in Σ^* \land \cl_{\qo_2}(u) \text{ is \(L\)-prime}\}} \enspace .\]
\end{restatable}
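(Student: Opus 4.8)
The plan is to prove the statement for right quasiorders; the case of two left quasiorders then follows by the left-right symmetry of all the definitions involved (concretely, by reversing words and invoking the duality behind Lemma~\ref{lemma:leftRightReverse}, or equivalently by rerunning the whole argument with left quotients $Lx^{-1}$ replacing right quotients $x^{-1}L$). So assume $\qo_1$ and $\qo_2$ are right $L$-preserving quasiorders with $\mathord{\qo_1} \subseteq \mathord{\qo_2}$, and write $\sim_i$ and $\qon_i$ for the equivalence and the strict order induced by $\qo_i$. The first ingredient I would record is a monotonicity fact: for any right $L$-preserving quasiorder $\qo$, $u \qo v$ implies $u^{-1}L \subseteq v^{-1}L$. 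Indeed, right-compatibility (Lemma~\ref{lemma:QObwComplete}) gives $uw \qo vw$ for every $w$, so if $uw \in L$ then, because $\cl_{\qo}(L) = L$ forces $L$ to be upward closed, also $vw \in L$. Consequently every term of the union in Definition~\ref{def:CompositeClosed} is contained in $u^{-1}L$, and $\cl_{\qo}(u)$ is $L$-composite precisely when $u^{-1}L \subseteq \bigcup_{x \qon u} x^{-1}L$.

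Because $\mathord{\qo_1} \subseteq \mathord{\qo_2}$, every $\sim_1$-class is contained in a unique $\sim_2$-class, so $\Pi(\cl_{\qo_1}(u)) \ud \cl_{\qo_2}(u)$ is a well-defined map from principals of $\qo_1$ to principals of $\qo_2$. Let $P_i$ denote the set of $L$-prime principals of $\qo_i$, so that the claimed inequality is $\len{P_1} \geq \len{P_2}$. I would reduce the theorem to a single claim: every $p \in P_2$ admits a $\Pi$-preimage lying in $P_1$. Granting this, one may choose for each $p \in P_2$ a prime principal $g(p) \in \Pi^{-1}(p) \cap P_1$; since the fibres $\Pi^{-1}(p)$ are pairwise disjoint, $g : P_2 \to P_1$ is injective, and $\len{P_1} \geq \len{P_2}$ follows.

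The crux, and the step I expect to be the main obstacle, is this claim, which I would establish by contraposition: assuming $\cl_{\qo_1}(v)$ is $L$-composite for every $v \sim_2 u$, I show that $\cl_{\qo_2}(u)$ is $L$-composite. Fix $w \in u^{-1}L$; the goal is to find $x \qon_2 u$ with $w \in x^{-1}L$. Set $y_0 \ud u$. As long as the current word $y_i$ satisfies $y_i \sim_2 u$, the principal $\cl_{\qo_1}(y_i)$ is composite by hypothesis, so the decomposition $y_i^{-1}L = \bigcup_{z \qon_1 y_i} z^{-1}L$ yields some $y_{i+1} \qon_1 y_i$ with $w \in y_{i+1}^{-1}L$. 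Since $\mathord{\qo_1} \subseteq \mathord{\qo_2}$ we have $y_{i+1} \qo_2 u$; if in addition $u \not\qo_2 y_{i+1}$, then $y_{i+1} \qon_2 u$ is the required witness, and otherwise $y_{i+1} \sim_2 u$ and the construction continues. The words $y_0 \succ_1 y_1 \succ_1 \cdots$ form a strictly $\qo_1$-decreasing chain of distinct principals, so the finiteness of the index of $\qo_1$ prevents the iteration from remaining forever in the case $y_{i+1} \sim_2 u$; hence it terminates by producing a witness $x \qon_2 u$ with $w \in x^{-1}L$. As $w$ was arbitrary, $u^{-1}L \subseteq \bigcup_{x \qon_2 u} x^{-1}L$, so $\cl_{\qo_2}(u)$ is $L$-composite, proving the contrapositive. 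The delicate points to watch are that the chain is genuinely strictly $\qo_1$-decreasing (so that finiteness forces termination) and that the passage from strictness in $\qo_1$ to strictness in $\qo_2$ is resolved by the case split above rather than taken for granted.
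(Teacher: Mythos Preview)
Your proof is correct and follows essentially the same strategy as the paper: exhibit an injection $P_2 \to P_1$ by locating, for each $\qo_2$-prime principal, a $\qo_1$-prime representative in the same $\sim_2$-class via a strictly $\qo_1$-descending chain that must terminate by finite index. The paper packages the descent step as a separate lemma (if $\cl_{\qo_1}(u)$ is $L$-composite then either $\cl_{\qo_2}(u)$ is $L$-composite or $\cl_{\qo_2}(u) = \cl_{\qo_2}(x)$ for some $x \qon_1 u$) and builds a single chain per $\qo_2$-prime, whereas you argue the contrapositive and build one chain per element $w \in u^{-1}L$; this is an organizational difference only.
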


\section{Language-based Quasiorders and their Approximation using NFAs}%
\label{sec:Instantiation}
In this section we instantiate our automata constructions using two classes of quasiorders, namely, the so-called \emph{Nerode's} quasiorders~\cite{de1994well}, whose definition is based on a given regular language; and the \emph{automata-based} quasiorders, whose definition relies on a given automaton.

\begin{definition}[Language-based Quasiorders]%
\label{def:NerodeQO}
Let \(u,v \in \Sigma^*\) and let \(L \subseteq \Sigma^*\) be a language. 
Define:
\begin{align}
u \qrL v  & \udiff u^{-1}L \subseteq v^{-1}L & \quad \text{\emph{Right-}language-based Quasiorder}\label{eq:Rlanguage} \\
u \qlL v  & \udiff Lu^{-1} \subseteq Lv^{-1}  & \quad \text{\emph{Left-}language-based Quasiorder}%
\label{eq:Llanguage}
\end{align}
\end{definition}

It is well-known that for every regular language \(L\) there exists a finite number of quotients \(u^{-1}L\)~\cite{deluca2011}	.
Therefore, the language-based quasiorders defined above induce a finite number of principals since each principal set is determined by a quotient of \(L\).

\begin{definition}[Automata-based Quasiorders]%
\label{def:automataQO}
Let \(u,v \in\Sigma^*\) and let \(\cN = (Q, \Sigma, \delta, I, F)\) be an NFA\@.
Define:
\begin{align}
u \qrN v & \udiff \post^{\cN}_u(I) \subseteq \post^{\cN}_v(I) & \quad \text{\emph{Right-}Automata-based Quasiorder}\label{eq:RState} \\
u \qlN v & \udiff \pre^{\cN}_u(F) \subseteq \pre^{\cN}_v(F) & \quad \text{\emph{Left-}Automata-based Quasiorder} \label{eq:LState} 
\end{align}
\end{definition}

Clearly, the automata-based quasiorders induce a finite number of principals since each principal is represented by a subset of the states of \(\cN\).

\begin{remark} \label{remark:LRDual}
The pairs of quasiorders \(\qrL\)~-~\(\qlL\) and \(\qrN\)~-~\(\qlN\) from Definitions~\ref{def:NerodeQO} and~\ref{def:automataQO} are dual, i.e.  \(u \qrL v \Lra u^R \qlL v^R\) and \(u \qrN v \Lra u^R \qlN v^R\).
\end{remark}

The following result shows that the principals of \(\qrN\) and \(\qlN\) can be described, respectively, as intersections of left and right languages of the states of \(\cN\) while the principals of \(\qrL\) and \(\qlL\) correspond to intersections of quotients of \(L\).

\begin{restatable}{lemmaR}{positiveAtoms}%
\label{lemma:positive_atoms}
Let \(\cN = (Q,Σ,δ,I,F)\) be an NFA with \(\lang{\cN}=L\).  
Then, for every \(u \in \Sigma^*\),
\begin{align*}
\cl_{\qrN}(u) &= \bigcap\textstyle{_{q \in \post_u^{\cN}(I)}} W_{I,q}^{\cN} & \cl_{\qrL}(u)  &= \bigcap\textstyle{_{w \in \Sigma^*, \; w \in u^{-1}L}} Lw^{-1} \\
\cl_{\qlN}(u)  & = \bigcap\textstyle{_{q \in \pre_u^{\cN}(I) }} W_{q,F}^{\cN} & \cl_{\qlL}(u)  &= \bigcap\textstyle{_{w \in \Sigma^*, \; w \in Lu^{-1}}} w^{-1}L\enspace .
\end{align*}
\end{restatable}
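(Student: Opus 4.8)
The plan is to prove the four identities by unwinding the definitions of the closures and the relevant quasiorders, and I expect the two automata-based identities to be the crux, since they require translating a statement about reachable state-sets into a statement about left languages.

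\medskip\noindent\textbf{The automata-based identities.}
First I would tackle $\cl_{\qrN}(u) = \bigcap_{q \in \post_u^{\cN}(I)} W_{I,q}^{\cN}$.
By definition, $w \in \cl_{\qrN}(u)$ if{}f $u \qrN w$, i.e.\ $\post_u^{\cN}(I) \subseteq \post_w^{\cN}(I)$.
The key observation is the equivalence
\[
q \in \post_w^{\cN}(I) \iff w \in W_{I,q}^{\cN} \enspace ,
\]
which holds because both sides assert that some initial state reaches $q$ on reading $w$.
Using this, $\post_u^{\cN}(I) \subseteq \post_w^{\cN}(I)$ says precisely that every state $q$ reachable from $I$ on $u$ is also reachable from $I$ on $w$, i.e.\ that $w \in W_{I,q}^{\cN}$ for every $q \in \post_u^{\cN}(I)$.
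That is exactly membership of $w$ in $\bigcap_{q \in \post_u^{\cN}(I)} W_{I,q}^{\cN}$, establishing the first identity.
The identity for $\cl_{\qlN}(u)$ follows by the same argument with the dual equivalence $q \in \pre_w^{\cN}(F) \iff w \in W_{q,F}^{\cN}$; alternatively one can invoke the left-right duality of Remark~\ref{remark:LRDual} together with $\lang{\cN}^R = \lang{\cN^R}$ and the reverse construction, though I would most likely just repeat the direct computation, as it is short.

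\medskip\noindent\textbf{The language-based identities.}
For $\cl_{\qrL}(u) = \bigcap_{w \in u^{-1}L} Lw^{-1}$, I would again expand the closure: $v \in \cl_{\qrL}(u)$ if{}f $u^{-1}L \subseteq v^{-1}L$.
The useful restatement of the right-hand side is
\[
v \in \bigcap_{w \in u^{-1}L} Lw^{-1} \iff \forall w\,(w \in u^{-1}L \Rightarrow vw \in L) \iff \forall w\,(uw \in L \Rightarrow vw \in L)\enspace ,
\]
and the last condition is exactly $u^{-1}L \subseteq v^{-1}L$.
The identity for $\cl_{\qlL}(u)$ is the symmetric statement, obtained either by repeating this computation with left quotients in place of right quotients, or again via duality.
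I would present the right-hand cases in full and then remark that the left-hand cases are dual.

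\medskip\noindent\textbf{Main obstacle.}
None of the steps is deep, but the point demanding care is the bookkeeping in the two basic equivalences $q \in \post_w^{\cN}(I) \iff w \in W_{I,q}^{\cN}$ and $w \in Lw'^{-1} \iff w w' \in L$: getting the quantifier over states (or over witnessing words) on the correct side, and keeping the left/right and prefix/suffix conventions consistent with the definitions in the Preliminaries, is where an error would most easily creep in. Once those equivalences are stated cleanly, each identity is a one-line rewriting, so the proof is essentially a matter of aligning definitions rather than of any genuine difficulty.
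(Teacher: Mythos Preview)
Your proposal is correct and follows essentially the same route as the paper's proof: both unfold the closure definitions, use the equivalence \(q \in \post_w^{\cN}(I) \Leftrightarrow w \in W_{I,q}^{\cN}\) for the automata-based case and the equivalence \(v \in Lw^{-1} \Leftrightarrow vw \in L\) (equivalently \(w \in v^{-1}L\)) for the language-based case, and dispatch the left-quasiorder identities by symmetry. There is no meaningful difference in approach.
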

\smallskip

As shown by Ganty et al.~\cite{ganty2019Inclusion}, given an NFA \(\cN\) with \(L = \lang{\cN}\), the quasiorders \(\qrL\) and \(\qrN\) are right \(L\)-preserving quasiorders, while the quasiorders \(\qlL\) and \(\qlN\) are left \(L\)-preserving quasiorders.
Therefore, by Lemma~\ref{lemma: HrGeneratesL} and~\ref{lemma:HlgeneratesL}, our automata constructions applied to these quasiorders yield automata for \(L\).

Finally, as shown by de Luca and Varricchio~\cite{de1994well}, we have that \(\qrN\) is finer than \(\qrL\), i.e., \(\mathord{\qrN} \subseteq \mathord{\qrL}\).
In that sense we say that \(\qrN\) \emph{approximates}  \(\qrL\).
As the following lemma shows, the approximation is precise, i.e., \(\qrN ~=~ \qrL\), whenever \(\cN\) is a co-RFA with no empty states\@.

\begin{restatable}{lemmaR}{coResidualqrLqrN}\label{lemma:coResidual_qrL=qrN}
Let \(\cN\) be a co-RFA with no empty states such that \(L = \lang{\cN}\).
Then \(\mathord{\qrL} = \mathord{\qrN}\).
Similarly, if \(\cN\) is an RFA with no unreachable states and \(L = \lang{\cN}\) then \(\mathord{\qlL} = \mathord{\qlN}\).
\end{restatable}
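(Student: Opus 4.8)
The plan is to prove the first statement and then obtain the second by the left--right duality. The inclusion $\mathord{\qrN} \subseteq \mathord{\qrL}$ holds for \emph{every} NFA $\cN$ with $\lang{\cN}=L$ (as recalled just before the statement, following de~Luca and Varricchio): it is immediate from the identity $u^{-1}L = \bigcup_{q \in \post^{\cN}_u(I)} W^{\cN}_{q,F}$, which shows that enlarging $\post^{\cN}_u(I)$ can only enlarge $u^{-1}L$. Hence the entire content lies in the reverse inclusion $\mathord{\qrL} \subseteq \mathord{\qrN}$, and this is where co-residuality enters.

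For that inclusion I would argue pointwise, through the \emph{left} languages of the states. The basic observation is that a word reaches a state exactly when it belongs to that state's left language, i.e.\ $q \in \post^{\cN}_u(I) \Lra u \in W^{\cN}_{I,q}$. Because $\cN$ is a co-RFA, every left language is a right quotient of $L$; fix, for each state $q$, a word $w_q$ with $W^{\cN}_{I,q} = L w_q^{-1}$. Now assume $u \qrL v$, that is $u^{-1}L \subseteq v^{-1}L$, and take any $q \in \post^{\cN}_u(I)$. Then $u \in W^{\cN}_{I,q} = L w_q^{-1}$, i.e.\ $u w_q \in L$, equivalently $w_q \in u^{-1}L$; the hypothesis gives $w_q \in v^{-1}L$, hence $v w_q \in L$, i.e.\ $v \in L w_q^{-1} = W^{\cN}_{I,q}$, which is precisely $q \in \post^{\cN}_v(I)$. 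Thus $\post^{\cN}_u(I) \subseteq \post^{\cN}_v(I)$, i.e.\ $u \qrN v$, and the two quasiorders coincide. The second statement is the exact dual: if $\cN$ is an RFA with no unreachable states then $\cN^R$ is a co-RFA with no empty states and $\lang{\cN^R}=L^R$, while the reversal map $w \mapsto w^R$ identifies $\qlL$ and $\qlN$ with $\qr_{L^R}$ and $\qr_{\cN^R}$ (Remark~\ref{remark:LRDual} and Lemma~\ref{lemma:leftRightReverse}); applying the first statement to $\cN^R$ then yields $\mathord{\qlL} = \mathord{\qlN}$.

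I expect the main obstacle to be realizing that the transfer must be carried out through the left languages rather than the right languages of the states. The tempting move is to pick a non-empty witness $w \in W^{\cN}_{q,F}$ (available precisely because there are no empty states) and push it through $u^{-1}L \subseteq v^{-1}L$; but this only produces \emph{some} state $q' \in \post^{\cN}_v(I)$ with $w \in W^{\cN}_{q',F}$, not the original $q$, so it does not establish $q \in \post^{\cN}_v(I)$. It is co-residuality --- the fact that $W^{\cN}_{I,q}$ is itself a right quotient $L w_q^{-1}$, whose membership is governed \emph{directly} by the inclusion $u^{-1}L \subseteq v^{-1}L$ --- that makes the correspondence exact. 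I would take care to pin down the precise role of the no-empty-states (dually, no-unreachable-states) trimness hypothesis: it is the natural setting matching the dual statement, and I would verify whether it is genuinely needed for the equality or merely guarantees a clean bijection between states and the relevant quotients. A fully symmetric alternative is to prove $\cl_{\qrN}(u) = \cl_{\qrL}(u)$ for all $u$ directly from Lemma~\ref{lemma:positive_atoms}, using $W^{\cN}_{I,q} = L w_q^{-1}$, and to conclude since two quasiorders are equal iff all their principal closures agree.
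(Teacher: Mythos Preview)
Your proof is correct and follows essentially the same approach as the paper: both establish the nontrivial inclusion $\mathord{\qrL}\subseteq\mathord{\qrN}$ by taking $q\in\post^{\cN}_u(I)$, using co-residuality to write $W^{\cN}_{I,q}=Lw_q^{-1}$, and transferring membership $u\in Lw_q^{-1}$ to $v\in Lw_q^{-1}$ via the equivalence $x\in Lw_q^{-1}\Leftrightarrow w_q\in x^{-1}L$ and the hypothesis $u^{-1}L\subseteq v^{-1}L$. Your version is in fact slightly more direct than the paper's (which takes an unnecessary detour through an auxiliary state $q'$), and your remark about verifying the role of the ``no empty states'' hypothesis is apt---neither your argument nor the paper's visibly uses it.
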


\subsection{Automata Constructions}

In what follows, we will use \(\cF{r},\cF{\ell}\) and \(\cG{r},\cG{\ell}\) to denote the constructions \(\cH^r, \cH^{\ell}\) when applied, respectively, to the language-based quasiorders induced by a regular language and the automata-based quasiorders induced by an NFA\@.

\begin{definition}%
\label{def:FG}
Let \(\cN\) be an NFA accepting the language \(L = \lang{\cN}\).
Define:
\begin{align*}
\cF{r}(L) & ≝  \cH^{r}(\qrL, L) & \cG{r}(\cN) & ≝ \cH^{r}(\qrN, L) \\
\cF{\ell}(L) & ≝  \cH^{\ell}(\qlL, L) & \cG{\ell}(\cN) & ≝  \cH^{\ell}(\qlN, L) \enspace .
\end{align*}
\end{definition}
\smallskip

Given an NFA \(\cN\) accepting the language \(L=\lang{\cN}\), all constructions in the above definition yield automata accepting \(L\).
However, while the constructions using the right quasiorders result in RFAs, those using left quasiorders result in co-RFAs.
Furthermore, it follows from Remark~\ref{remark:LRDual} and Lemma~\ref{lemma:leftRightReverse} that \(\cF{\ell}(L)\) is isomorphic to \((\cF{r}(L^R))^R\) and \(\cG{\ell}(\cN)\) is isomorphic to \((\cG{r}(\cN^R))^R\).

It follows from Theorem~\ref{theorem:numLPrimePrincipals} that the automata \(\cG{r}(\cN)\) and \(\cG{\ell}(\cN)\) have more states than \(\cF{r}(L)\) and \(\cF{\ell}(L)\), respectively.
Intuitively, \(\cF{r}(L)\) is the minimal RFA for \(L\), i.e. it is isomorphic to the canonical RFA for \(L\), since
\(\qrL\) is the coarsest right \(L\)-preserving quasiorder~\cite{de1994well}.
On the other hand, as we evidenced in Example~\ref{example:residualization}, \(\cG{r}(\cN)\) is a sub-automaton of \(\cN^{\text{res}}\)~\cite{denis2002residual} for every NFA \(\cN\).

Finally, it follows from Lemma~\ref{lemma:coResidual_qrL=qrN} that residualizing (\(\cG{r}\)) a co-RFA with no empty states (\(\cG{\ell}(\cN)\)) results in the canonical RFA for \(\lang{\cN}\) (\(\cF{r}(\lang{\cN})\)).

We formalize all these notions in Theorem~\ref{theoremF}.
Figure~\ref{Figure:diagramAutomata} summarizes all these connections between the automata constructions given in Definition~\ref{def:FG}.

\begin{figure}[t]
\begin{minipage}[l]{0.45\textwidth}
\begin{tikzcd}[column sep=small, row sep=normal]
\cN\ar[d, description, "R"',leftrightarrow] \ar[r, "\cG{\ell}"] \ar[rr, start anchor=70, bend left=20, "\cF{r}"] & \cG{\ell}(\cN) \ar[d, description, "R"',leftrightarrow] \ar[r, "\cG{r}"] & \cF{r}(\lang{\cN}) \ar[d, description, "R"',leftrightarrow]\\
\cN^R \ar[r, "\cG{r}"] \ar[rr, start anchor=290, bend right=20, "\cF{\ell}"] & \cG{r}(\cN^R) \ar[r, "\cG{\ell}"] & \cG{\ell}(\cG{r}(\cN^R))
\end{tikzcd}
\end{minipage}\hfill
\begin{minipage}[r]{0.54\textwidth}
The upper part of the diagram follows from Theorem~\ref{theoremF}~(\ref{lemma:LS+RS=RN}), the squares follow from Theorem~\ref{theoremF}~(\ref{lemma:AlRequalArNR}) and the bottom curved arc follows from Theorem~\ref{theoremF}~(\ref{lemma:FlisomorphicRfrR}).
Incidentally, the diagram shows a new relation which is a consequence of the left-right dualities between \(\qlL\) and \(\qrL\), and \(\qlN\) and \(\qrN\): \(\cF{\ell}(\lang{\cN^R})\) is isomorphic to \(\cG{\ell}(\cG{r}(\cN^R))\).
\end{minipage}
\caption{Relations between the constructions \(\cG{\ell},\cG{r},\cF{\ell}\) and \(\cF{r}\).
Note that constructions \(\cF{r}\) and \(\cF{\ell}\) are applied  to the language accepted by the automaton in the origin of the labeled arrow while constructions \(\cG{r}\) and \(\cG{\ell}\) are applied directly to the automaton.}\label{Figure:diagramAutomata}
\end{figure}

\begin{restatable}{theoremR}{theoremF}\label{theoremF}
Let \(\cN\) be an NFA with \(L = \lang{\cN}\).
Then the following properties hold:
\begin{alphaenumerate}%
\item \(\lang{\cF{r}(L)} =\lang{\cF{\ell}(L)} = L = \lang{\cG{r}(\cN)} = \lang{\cG{\ell}(\cN)}\).%
\label{lemma:language-F}
\item \(\cF{\ell}(L)\) is isomorphic to \((\cF{r}(L^R))^R\).%
\label{lemma:FlisomorphicRfrR}
\item \(\cG{\ell}(\cN)\) is isomorphic to \((\cG{r}(\cN^R))^R\).%
\label{lemma:AlRequalArNR}
\item \(\cF{r}(L)\) is isomorphic to the canonical RFA for \(L\).%
\label{theorem:CanonicalRFAlanguage}
\item \(\cG{r}(\cN)\) is isomorphic to a sub-automaton of \(\cN^{\text{res}}\) and \(\lang{\cG{r}(\cN)} = \lang{\cN^{\text{res}}} = L\).%
\label{lemma:rightNRes}
\item \(\cG{r}(\cG{\ell}(\cN))\) is isomorphic to \(\cF{r}(L)\).\label{lemma:LS+RS=RN}
\end{alphaenumerate}
\end{restatable}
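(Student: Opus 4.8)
The plan is to prove the six items largely independently, reusing the lemmas of Sections~\ref{sec:automataConstructions} and~\ref{sec:Instantiation}. Item~(\ref{lemma:language-F}) is immediate from Lemmas~\ref{lemma: HrGeneratesL} and~\ref{lemma:HlgeneratesL} together with the facts that \(\qrL,\qrN\) are right \(L\)-preserving and \(\qlL,\qlN\) are left \(L\)-preserving: each of the four constructions is then an RFA or co-RFA accepting \(L\). For items~(\ref{lemma:FlisomorphicRfrR}) and~(\ref{lemma:AlRequalArNR}) the engine is Lemma~\ref{lemma:leftRightReverse}. For~(\ref{lemma:FlisomorphicRfrR}) I would verify the dual hypothesis \(u \qr_{L^R} v \Lra u^R \qlL v^R\) from the identity \((u^{-1}M)^R = M^R(u^R)^{-1}\) with \(M = L^R\), then apply Lemma~\ref{lemma:leftRightReverse} with language \(L^R\) to get \(\cF{r}(L^R) = \cH^r(\qr_{L^R}, L^R)\) isomorphic to \((\cH^{\ell}(\qlL, L))^R = (\cF{\ell}(L))^R\), and reverse. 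For~(\ref{lemma:AlRequalArNR}) I would first check \(\post^{\cN^R}_u(F) = \pre^{\cN}_{u^R}(F)\), which turns the automata half of Remark~\ref{remark:LRDual} into \(u \qr_{\cN^R} v \Lra u^R \qlN v^R\), and then apply Lemma~\ref{lemma:leftRightReverse} verbatim.

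The technical core is items~(\ref{theorem:CanonicalRFAlanguage}) and~(\ref{lemma:rightNRes}), and the computation I would isolate and reuse for both is that, for any right quasiorder \(\qr\), \(\cl_{\qr}(u)\,a \subseteq \cl_{\qr}(v) \Lra v \qr ua\) (forward by reflexivity, taking the representative \(w = u\); backward by right-compatibility \(u \qr w \Ra ua \qr wa\)). For~(\ref{theorem:CanonicalRFAlanguage}) I would show the map \(\cl_{\qrL}(u) \mapsto u^{-1}L\) is a well-defined bijection onto the quotients of \(L\), since \(\cl_{\qrL}(u) = \cl_{\qrL}(v)\) iff \(u^{-1}L = v^{-1}L\). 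It carries \(L\)-prime principals to prime quotients because \(x \qrn_L u \Lra x^{-1}L \subsetneq u^{-1}L\), so Definition~\ref{def:CompositeClosed} coincides verbatim with the composite-quotient condition in the definition of the canonical RFA. Finally initial/final status transfers (\(\varepsilon \in \cl_{\qrL}(u) \Lra u^{-1}L \subseteq L\) and \(u \in L \Lra \varepsilon \in u^{-1}L\)), and the isolated computation instantiates to \(v \qrL ua \Lra v^{-1}L \subseteq a^{-1}(u^{-1}L)\), matching the canonical transition exactly.

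For~(\ref{lemma:rightNRes}) I would run the parallel argument with the map \(\cl_{\qrN}(u) \mapsto \post^{\cN}_u(I)\), again an injection (a bijection onto the sets \(\post^{\cN}_u(I)\)). Using \(u^{-1}L = W^{\cN}_{\post^{\cN}_u(I),F}\) and the fact that right languages distribute over unions of state sets, coverability of \(\post^{\cN}_u(I)\) forces \(L\)-compositeness of \(\cl_{\qrN}(u)\); hence \(L\)-prime principals land among the non-coverable sets, i.e.\ the states of \(\cN^{\text{res}}\). The same initial/final/transition matching as above — now reading off \(\post^{\cN}_{ua}(I) = \delta(\post^{\cN}_u(I),a)\) — shows this injection is a sub-automaton morphism, and \(\lang{\cG{r}(\cN)} = L = \lang{\cN^{\text{res}}}\) follows from item~(\ref{lemma:language-F}) and Denis et al. For~(\ref{lemma:LS+RS=RN}) I would note that \(\cG{\ell}(\cN) = \cH^{\ell}(\qlN, L)\) is, by Lemma~\ref{lemma:HlgeneratesL} and the observation following it, a co-RFA with no empty states accepting \(L\); Lemma~\ref{lemma:coResidual_qrL=qrN} then yields \(\qr_{\cG{\ell}(\cN)} = \qrL\), so \(\cG{r}(\cG{\ell}(\cN)) = \cH^r(\qrL, L) = \cF{r}(L)\).

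The main obstacle I anticipate is the bookkeeping in~(\ref{theorem:CanonicalRFAlanguage}) and~(\ref{lemma:rightNRes}): getting the quantifier directions right in the \(\cl_{\qr}(u)\,a \subseteq \cl_{\qr}(v)\) computation, checking that the primality and coverability notions translate on the nose, and confirming in~(\ref{lemma:rightNRes}) that the injection genuinely respects the \(\cN^{\text{res}}\) transition relation rather than merely embedding the state set.
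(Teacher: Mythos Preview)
Your proposal is correct and follows essentially the same route as the paper's proof: items~(\ref{lemma:language-F}),~(\ref{lemma:FlisomorphicRfrR}),~(\ref{lemma:AlRequalArNR}),~(\ref{lemma:LS+RS=RN}) are handled via exactly the same lemmas (Lemmas~\ref{lemma: HrGeneratesL},~\ref{lemma:HlgeneratesL},~\ref{lemma:leftRightReverse},~\ref{lemma:coResidual_qrL=qrN}) and the same duality computations, and for~(\ref{theorem:CanonicalRFAlanguage}) and~(\ref{lemma:rightNRes}) the paper uses precisely the bijection \(\cl_{\qrL}(u)\leftrightarrow u^{-1}L\) and the injection \(\cl_{\qrN}(u)\mapsto \post_u^{\cN}(I)\), with the key observation ``coverable \(\Rightarrow\) \(L\)-composite'' and the transition equivalence \(\cl_{\qr}(u)a\subseteq\cl_{\qr}(v)\Lra v\qr ua\) that you isolated. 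The only cosmetic difference is that the paper orients the isomorphism in~(\ref{theorem:CanonicalRFAlanguage}) from the canonical RFA to \(\cF{r}(L)\) rather than the reverse.
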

\medskip

Let \(\cN\) be an NFA with \(L = \lang{\cN}\).
If \(\mathord{\qrL}=\mathord{\qrN}\) then the automata \(\cF{r}(L)\) and \(\cG{r}(\cN)\) are isomorphic.
The following result shows that the reverse implication also holds.

\begin{restatable}{lemmaR}{qrlqrNCanRes}\label{lemma:qrlEqualqrNResEqualCan}
Let \(\cN\) be an NFA with \(L = \lang{\cN}\)\@.
Then \(\mathord{\qrL} = \mathord{\qrN}\) if{}f \(\cG{r}(\cN)\) is isomorphic to \(\cF{r}(\lang{\cN})\).
\end{restatable}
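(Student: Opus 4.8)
The plan is to prove the two implications separately, with the forward one being immediate and the backward one carrying all the weight.

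For the forward implication, suppose $\mathord{\qrL} = \mathord{\qrN}$. Since $\cF{r}(L) = \cH^{r}(\qrL, L)$ and $\cG{r}(\cN) = \cH^{r}(\qrN, L)$ (Definition~\ref{def:FG}) are obtained by applying the \emph{same} operator $\cH^{r}(\cdot, L)$ to the \emph{same} quasiorder and the \emph{same} language, they coincide verbatim and are in particular isomorphic. Here the hypothesis $L = \lang{\cN}$ is what makes $\cG{r}(\cN)$ equal to $\cH^{r}(\qrN, L)$.

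For the backward implication, assume $\cG{r}(\cN)$ is isomorphic to $\cF{r}(L)$. First I would reduce to a counting statement. By Definition~\ref{def:right-const:qo} the states of $\cG{r}(\cN)$ are exactly the $L$-prime principals of $\qrN$ and those of $\cF{r}(L)$ are the $L$-prime principals of $\qrL$, so an isomorphism forces these two sets, call them $P_N$ and $P_L$, to satisfy $|P_N| = |P_L|$. Since $\mathord{\qrN} \subseteq \mathord{\qrL}$, Theorem~\ref{theorem:numLPrimePrincipals} applied with $\qo_1 = \qrN$ and $\qo_2 = \qrL$ already gives $|P_N| \geq |P_L|$; hence we sit exactly in the equality case, and the whole problem becomes: equal numbers of $L$-prime principals, for these two structurally related quasiorders, force $\mathord{\qrN} = \mathord{\qrL}$.

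To settle the equality case I would reopen the injection underlying Theorem~\ref{theorem:numLPrimePrincipals}. Concretely, to each prime quotient $u^{-1}L$ (equivalently each principal $\cl_{\qrL}(u) \in P_L$) I associate $\cl_{\qrN}(u_0)$, where $u_0$ is chosen so that $\post^{\cN}_{u_0}(I)$ is $\subseteq$-minimal among the reached sets of all words with quotient $u^{-1}L$; a short argument from Definition~\ref{def:CompositeClosed} shows this lands in $P_N$ and is injective, so $|P_N| = |P_L|$ makes it a bijection. I would then combine this bijection with the closed forms of the principals from Lemma~\ref{lemma:positive_atoms} (the $\qrN$-principals being intersections of left languages and the $\qrL$-principals intersections of quotients), with the canonicity of $\cF{r}(L)$ (Theorem~\ref{theoremF}(\ref{theorem:CanonicalRFAlanguage})), and with the fact that every quotient of $L$ is the union of the prime quotients it contains, in order to propagate the prime-level bijection up to the identity $\cl_{\qrN}(u) = \cl_{\qrL}(u)$ for \emph{all} $u$, which is equivalent to $\mathord{\qrN} = \mathord{\qrL}$.

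The hard part will be exactly this propagation step. Equality of the number of prime principals is \emph{not} enough on its own for arbitrary $L$-preserving quasiorders, since a strict refinement can fail to create any new prime principal; the argument must therefore genuinely exploit the descriptions of Lemma~\ref{lemma:positive_atoms} together with the union-preservation of the map $S \mapsto \bigcup_{q \in S} W^{\cN}_{q,F}$ sending reached sets to quotients. Bridging from the bijection on prime principals to equality of the quasiorders on all principals is where the real content lies, because $\qrN$-primality is expressed through the subset lattice of $\cN$ while $\qrL$-primality is expressed through the quotients of $L$, and these two notions of "being a union of strictly smaller objects" must be shown to coincide under the standing hypothesis.
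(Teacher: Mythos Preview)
Your forward direction is fine and matches the paper's argument verbatim.

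For the backward direction, however, you have a genuine gap. From the isomorphism you extract only the equality $|P_N|=|P_L|$ and then try to promote this counting identity to $\mathord{\qrN}=\mathord{\qrL}$ via the injection behind Theorem~\ref{theorem:numLPrimePrincipals} together with Lemma~\ref{lemma:positive_atoms}. But you never actually carry out the ``propagation step''; you explicitly flag it as the hard part and leave it as a program (``I would then combine\ldots''). As you yourself note, equality of the number of $L$-prime principals does \emph{not} force equality of the quasiorders in general, so the burden is entirely on that unwritten step. Moreover, by collapsing the isomorphism to a mere cardinality identity you have thrown away exactly the structural information (transitions, reachable sets) that would be needed to push the argument through; there is no clear reason why the bijection on prime principals that you build should extend to the identity $\cl_{\qrN}(u)=\cl_{\qrL}(u)$ for \emph{all} $u$.

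The paper avoids this altogether by using the full isomorphism, not just state counts. The key tool is Lemma~\ref{lemma:qrHEqualqrifHsc}: if $\cH=\cH^{r}(\qr,L)$ is a strongly consistent RFA, then $\mathord{\qr_{\cH}}=\mathord{\qr}$. Since $\cF{r}(L)$ is the canonical RFA (Theorem~\ref{theoremF}(\ref{theorem:CanonicalRFAlanguage})), it is strongly consistent, hence $\mathord{\qr_{\cF{r}(L)}}=\mathord{\qrL}$. The isomorphism transfers strong consistency to $\cG{r}(\cN)$ and gives $\mathord{\qr_{\cG{r}(\cN)}}=\mathord{\qr_{\cF{r}(L)}}$; applying Lemma~\ref{lemma:qrHEqualqrifHsc} again with $\qr=\qrN$ yields $\mathord{\qr_{\cG{r}(\cN)}}=\mathord{\qrN}$, and chaining the three equalities gives $\mathord{\qrN}=\mathord{\qrL}$. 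This route is short and uses the isomorphism in an essential, structural way rather than only through cardinalities.
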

\medskip

The following example illustrates the differences between our residualization operation, \(\cG{r}(\cN)\), and the one defined by Denis et al.~\cite{denis2002residual}, \(\cN^{\text{res}}\), on a given NFA \(\cN\): the automaton \(\cG{r}(\cN)\) has, at most, as many states as \(\cN^{\text{res}}\).
This follows from the fact that for every \(u \in Σ^*\), if \(\post_u^{\cN}(I)\) is coverable then \(\cl_{\qrN}(u)\) is composite but \emph{not} vice-versa.

\begin{example}%
\label{example:residualization}
Let \(\cN = (Q, Σ, δ, I, F)\) be the automata on the left of Figure~\ref{fig:Residuals} and let \(L = \lang{\cN}\).
To build \(\cN^{\text{res}}\) we compute \(\post_u^{\cN}(I)\), for all \(u \in Σ^*\).
Let \(C \ud L^c \setminus \{\varepsilon, a, b, c\}\).
\begin{align*}
\post_{\varepsilon}^{\cN}(I)  & = \{0\} & \post_a^{\cN}(I)  & = \{1,2\}& \forall w \in L, \;\post_{w}^{\cN}(I)  & = \{5\} \\
\post_c^{\cN}(I)  & = \{1, 2, 3, 4\} & \post_b^{\cN}(I)  & = \{1,3\} &\forall w \in C,\; \post_{w}^{\cN}(I) & = \varnothing
\end{align*}
Since none of these sets is coverable by the others, they are all states of \(\cN^{\text{res}}\).
The resulting RFA \(\cN^{\text{res}}\) is shown in the center of Figure~\ref{fig:Residuals}.
On the other hand, let us denote \(\cl_{\qrN}\) simply by \(\cl\).
In order to build \(\cG{r}(\cN)\) we need to compute the principals \(\cl(u)\), for all \(u \in Σ^*\).
By definition of \(\qrN\), we have that \(w \in \cl(u) \Lra \post_u^{\cN}(I) \subseteq \post_w^{\cN}(I)\).
Therefore, we obtain:
\[
\cl(\varepsilon)  {=} \{\varepsilon\}  \;\; \cl(a)  {=} \{a,c\} \;\; \cl(b)  {=} \{b,c\} \;\;
\cl(c)  {=} \{c\} \quad \forall w \in L,\;  \cl(w)  {=} L \quad \forall w \in C,\;\cl(w) {=} \Sigma^* \enspace .\]
Since \(a \qrn_{\cN} c\), \(b \qrn_{\cN} c\) and \(\forall w\in Σ^*, \; cw \subseteq L \Lra \big(aw \subseteq L \lor bw \subseteq L\big)\), it follows that \(\cl(c)\) is \(L\)-composite.
The resulting RFA \(\cG{r}(\cN)\) is shown on the right of Figure~\ref{fig:Residuals}.
\eox%
\end{example}

\begin{figure}[t]%
    \centering
\begin{minipage}[l]{0.32\textwidth}
  \begin{tikzpicture}[->,>=stealth',shorten >=1pt,auto,node distance=3mm and 8mm,thick,initial text=]
  \tikzstyle{every state}=[scale=0.75,fill=blue!20,draw=blue!60,text=black]

  \node[initial, state] (0) {\(0\)};
  \node[state] (2) [right=of 0, yshift=0.6cm] {\(2\)};
  \node[state] (1) [above=of 2, yshift=-0.2cm] {\(1\)};
  \node[state] (3) [right=of 0, yshift=-0.6cm]{\(3\)};
  \node[state] (4) [below=of 3, yshift=0.2cm] {\(4\)};
  \node[accepting, state] (5) [right=of 0, xshift=2cm] {\(5\)};
  
  \path (0) edge [bend left] node {\(a,b,c\)} (1)
  		(0) edge node [xshift=10pt, yshift=0pt] {\(a,c\)} (2)
  		(0) edge node [pos=0.3, yshift=-5pt] {\(b,c\)} (3)
  		(0) edge [bend right] node {\(c\)} (4);

  \path (1) edge [bend left] node {\(a\)} (5)
  		(2) edge node {\(b\)} (5)
  		(3) edge node {\(c\)} (5)
  		(4) edge [bend right] node [yshift=-5pt, xshift=4pt] {\(a,b,c\)} (5);
  \end{tikzpicture}
  \end{minipage}\hfill%
  \begin{minipage}{0.37\textwidth}
  \begin{tikzpicture}[->,>=stealth',shorten >=1pt,auto,node distance=3mm and 8mm,thick,initial text=]
  \tikzstyle{every state}=[scale=0.75,fill=blue!20,draw=blue!60,text=black,style={draw,ellipse,inner sep=0pt}]

  \node[initial, state] (0) {\(\left\{0\right\}\)};
  \node[state] (2) [right=of 0] {\(\left\{1{,}2{,}3{,}4\right\}\)};
  \node[state] (1) [above=of 2] {\(\left\{1{,}2\right\}\)};
  \node[state] (3) [below=of 2] {\(\left\{1{,} 3\right\}\)};
  \node[accepting, state] (4) [right=of 2] {\(\left\{5\right\}\)};
  
  \path (0) edge [bend left] node {\(a,c\)} (1)
  		(0) edge node {\(c\)} (2)
  		(0) edge [bend right] node [yshift=-5pt] {\(b,c\)} (3);

  \path (1) edge [bend left] node {\(a,b\)} (4)
  		(2) edge node {\(a,b,c\)} (4)
  		(3) edge [bend right] node [xshift=4pt, yshift=-4pt] {\(a,c\)} (4);
  \end{tikzpicture}
  \end{minipage}\hfill%
  \begin{minipage}[r]{0.3\textwidth}
  \begin{tikzpicture}[->,>=stealth',shorten >=1pt,auto,node distance=3mm and 5mm,thick,initial text=]
  \tikzstyle{every state}=[scale=0.75,fill=blue!20,draw=blue!60,text=black,style={draw,ellipse,inner sep=0pt}]
  
  \node[initial, state] (0) {\(\cl(\varepsilon)\)};
  \node[state] (1) [right=of 0, yshift=1cm] {\(\cl(a)\)};
  \node[state] (2) [right=of 0, yshift=-1cm] {\(\cl(b)\)};
  \node[accepting, state] (3) [right=of 0, xshift=1.5cm] {\(\cl(aa)\)};
  
  \path (0) edge [bend left] node {\(a,c\)} (1)
  		(0) edge [bend right] node [yshift=-2pt, xshift=-4pt] {\(b,c\)} (2);

  \path (1) edge [bend left] node [xshift=-5pt]{\(a,b\)} (3)
  		(2) edge [bend right] node [xshift=5pt, yshift=-3pt] {\(a,c\)} (3);
  \end{tikzpicture}
  \end{minipage}
\caption{An NFA \(\cN\) and the RFAs \(\cN^{\text{res}}\) and \(\cG{r}(\cN)\). We omit the empty states for clarity.}
 \label{fig:Residuals}
\end{figure}

\section{Double-Reversal Method for Building the Canonical RFA}%
\label{sec:Novel}

Denis et al.~\cite{denis2002residual} show that their residualization operation satisfies the residual-equivalent of the double-reversal method for building the minimal DFA\@. 
More specifically, they prove that if an NFA \(\cN\) is a co-RFA with no empty states, then their residualization operation applied to \(\cN\) results in the canonical RFA for \(\lang{\cN}\).
As a consequence, \((((\cN^R)^{\text{res}})^R)^{\text{res}}\) is the canonical RFA for \(\lang{\cN}\). 

In this section we first show that the residual-equivalent of the double-reversal method holds within our framework, i.e.\ \(\cG{r}((\cG{r}(\cN^R))^R)\) is isomorphic to \(\cF{r}(\cN)\).
Then, we generalize this method along the lines of the generalization of the double-reversal method for building the minimal DFA given by Brzozowski and Tamm~\cite{brzozowski2014theory}.
To this end, we extend our previous work~\cite{ganty2019congruence} in which we provided a congruence-based perspective on the generalized double-reversal method for DFAs.
By moving from congruences to quasiorders, we find a \emph{necessary} and \emph{sufficient} condition on an NFA \(\cN\) so that \(\cG{r}(\cN)\) yields the canonical RFA for \(\lang{\cN}\).
Finally, we compare our generalization with the one given by Tamm~\cite{tamm2015generalization}.

\subsection{Double-reversal Method}
We give a simple proof of the double-reversal method for building the canonical RFA\@.

\begin{theorem}[Double-Reversal]%
\label{theorem:DoubleReversal}
Let \(\cN\) be an NFA\@.
Then \(\cG{r}((\cG{r}(\cN^R))^R)\) is isomorphic to the canonical RFA for \(\lang{\cN}\).
\end{theorem}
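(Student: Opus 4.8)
The plan is to reduce everything to the clean algebraic facts already assembled in Theorem~\ref{theoremF}, so that the double-reversal statement becomes a short chase through the diagram in Figure~\ref{Figure:diagramAutomata} rather than a direct automata-theoretic argument. First I would observe that the outer application of \(\cG{r}\) is applied to the automaton \((\cG{r}(\cN^R))^R\), so the key is to identify this automaton. By Theorem~\ref{theoremF}(\ref{lemma:AlRequalArNR}), applied with \(\cN^R\) in place of \(\cN\), we have that \(\cG{\ell}(\cN^R)\) is isomorphic to \((\cG{r}((\cN^R)^R))^R = (\cG{r}(\cN))^R\); equivalently, reversing both sides, \((\cG{r}(\cN^R))^R\) is isomorphic to \(\cG{\ell}(\cN)\). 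Thus the automaton fed into the outer \(\cG{r}\) is (up to isomorphism) exactly \(\cG{\ell}(\cN)\).

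Once that identification is in place, the statement collapses to \(\cG{r}(\cG{\ell}(\cN))\), which is precisely the left-hand side of Theorem~\ref{theoremF}(\ref{lemma:LS+RS=RN}). That property asserts that \(\cG{r}(\cG{\ell}(\cN))\) is isomorphic to \(\cF{r}(L)\), where \(L = \lang{\cN}\). Finally, Theorem~\ref{theoremF}(\ref{theorem:CanonicalRFAlanguage}) tells us that \(\cF{r}(L)\) is isomorphic to the canonical RFA for \(L = \lang{\cN}\). Composing these three isomorphisms yields that \(\cG{r}((\cG{r}(\cN^R))^R)\) is isomorphic to the canonical RFA for \(\lang{\cN}\), as required. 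The whole proof is therefore a two-line composition of already-established isomorphisms.

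The one point that requires genuine care — and which I expect to be the main obstacle — is the bookkeeping around the reverse operation and the isomorphism invariance of the constructions. Specifically, I must verify that \(\cG{r}\) and \(\cG{\ell}\) are invariant under isomorphism (so that the identification of \((\cG{r}(\cN^R))^R\) with \(\cG{\ell}(\cN)\) can be transported through the outer \(\cG{r}\)), and that the double-reverse \((\cN^R)^R\) is genuinely \(\cN\) so that the instantiation of part~(\ref{lemma:AlRequalArNR}) is legitimate. The latter is immediate from the definition of the reverse of an NFA, and the former follows because each construction is defined purely in terms of the automata-based quasiorder \(\qrN\) (resp. \(\qlN\)), which depends only on the reachability structure \(\post\) (resp. \(\pre\)) and is manifestly preserved by isomorphism. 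Beyond these routine checks, no new combinatorial content is needed: the heavy lifting has already been done in Theorem~\ref{theoremF}, and the double-reversal method emerges as a corollary of the functorial relations summarized in the diagram.
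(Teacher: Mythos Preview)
Your approach is correct and coincides with the paper's own proof, which also derives the result by chaining Theorem~\ref{theoremF}(\ref{lemma:AlRequalArNR}), (\ref{lemma:LS+RS=RN}), and (\ref{theorem:CanonicalRFAlanguage}). One small presentation slip: the identification \((\cG{r}(\cN^R))^R \cong \cG{\ell}(\cN)\) is literally Theorem~\ref{theoremF}(\ref{lemma:AlRequalArNR}) applied to \(\cN\) itself, so there is no need to instantiate with \(\cN^R\) and then ``reverse both sides'' (which, as written, does not actually yield the claimed equivalence).
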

\begin{proof}
It follows from Theorem~\ref{theoremF}~(\ref{lemma:AlRequalArNR}), (\ref{theorem:CanonicalRFAlanguage}) and (\ref{lemma:LS+RS=RN}).
\end{proof} 
Note that Theorem~\ref{theorem:DoubleReversal} can be inferred from Figure~\ref{Figure:diagramAutomata} by following the path starting at \(\cN\), labeled with \(R-\cG{r}-R-\cG{r}\) and ending in \(\cF{r}(\lang{\cN})\).

\subsection{Generalization of the Double-reversal Method}
Next we show that residualizing an automaton yields the canonical RFA if{}f the left language of every state is closed w.r.t. the right Nerode quasiorder.

\begin{theorem}%
\label{theorem:canonicalreverserestic}
Let \(\cN = (Q,\Sigma,\delta,I,F)\) be an NFA with \(L=\lang{\cN}\).
Then \(\cG{r}(\cN)\) is the canonical RFA for \(L\) if{}f \(\forall q \in Q,\;  \cl_{\qrL}(W_{I,q}^{\cN}) = W_{I,q}^{\cN}\).
\end{theorem}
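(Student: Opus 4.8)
The plan is to turn the statement into a chain of equivalences, each link justified by a result already established, and to finish with a purely set-theoretic unfolding of the closure condition.

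First I would remove the reference to ``the canonical RFA''. By Theorem~\ref{theoremF}~(\ref{theorem:CanonicalRFAlanguage}), \(\cF{r}(L)\) is isomorphic to the canonical RFA for \(L\), and the canonical RFA is unique up to isomorphism; hence \(\cG{r}(\cN)\) is the canonical RFA for \(L\) if and only if \(\cG{r}(\cN)\) is isomorphic to \(\cF{r}(L)\). Next, Lemma~\ref{lemma:qrlEqualqrNResEqualCan} replaces this isomorphism by an equality of quasiorders: \(\cG{r}(\cN)\) is isomorphic to \(\cF{r}(L)\) if and only if \(\mathord{\qrL} = \mathord{\qrN}\). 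Since de Luca and Varricchio give \(\mathord{\qrN} \subseteq \mathord{\qrL}\) for every NFA accepting \(L\), the equality collapses to the single inclusion \(\mathord{\qrL} \subseteq \mathord{\qrN}\). So it remains to show
\[
\mathord{\qrL} \subseteq \mathord{\qrN} \quad\Longleftrightarrow\quad \forall q \in Q,\; \cl_{\qrL}(W_{I,q}^{\cN}) = W_{I,q}^{\cN}\enspace.
\]

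For the unfolding I would use the elementary observation \(q \in \post_u^{\cN}(I) \iff u \in W_{I,q}^{\cN}\), immediate from the definitions of \(\post\) and of the left language. Consequently \(u \qrN v\), i.e.\ \(\post_u^{\cN}(I) \subseteq \post_v^{\cN}(I)\), reads as ``\(u \in W_{I,q}^{\cN} \Rightarrow v \in W_{I,q}^{\cN}\) for every \(q \in Q\)''. Thus \(\mathord{\qrL} \subseteq \mathord{\qrN}\) says precisely that for all \(u,v\) with \(u \qrL v\) and all \(q\), \(u \in W_{I,q}^{\cN}\) implies \(v \in W_{I,q}^{\cN}\). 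Swapping the universal quantifiers over \(q\) and over \(u,v\), this is exactly the statement that for every \(q\) the set \(W_{I,q}^{\cN}\) is \(\qrL\)-closed, i.e.\ \(\cl_{\qrL}(W_{I,q}^{\cN}) \subseteq W_{I,q}^{\cN}\); reflexivity of \(\qrL\) gives the reverse inclusion \(W_{I,q}^{\cN} \subseteq \cl_{\qrL}(W_{I,q}^{\cN})\) for free, so \(\qrL\)-closedness coincides with the stated equality.

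The bulk of the work is therefore carried by the two previously proved results, and no step presents a genuine obstacle. The only points requiring care are (i) checking that the biconditional of Lemma~\ref{lemma:qrlEqualqrNResEqualCan} and the identification in Theorem~\ref{theoremF}~(\ref{theorem:CanonicalRFAlanguage}) apply to an arbitrary NFA \(\cN\) with no side conditions, and (ii) the quantifier swap in the final unfolding, where one must keep in mind that the condition ranges over \emph{every} \(q \in Q\) --- including unreachable states (for which \(W_{I,q}^{\cN} = \varnothing\) and the closure condition holds vacuously) and empty states (whose left languages may be nonempty, so the condition is genuinely constraining there).
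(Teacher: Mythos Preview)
Your proof is correct and follows the same overall strategy as the paper: both reduce the statement, via Theorem~\ref{theoremF}~(\ref{theorem:CanonicalRFAlanguage}) and Lemma~\ref{lemma:qrlEqualqrNResEqualCan}, to the equality \(\mathord{\qrL} = \mathord{\qrN}\), invoke de~Luca--Varricchio for the inclusion \(\mathord{\qrN} \subseteq \mathord{\qrL}\), and then relate \(\mathord{\qrL} \subseteq \mathord{\qrN}\) to the closure condition on left languages.

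The one noteworthy difference is in how the last equivalence is established. The paper treats the two directions separately; for the sufficient direction it invokes Lemma~\ref{lemma:positive_atoms} to write \(\cl_{\qrN}(u) = \bigcap_{q \in \post_u^{\cN}(I)} W_{I,q}^{\cN}\), replaces each \(W_{I,q}^{\cN}\) by \(\cl_{\qrL}(W_{I,q}^{\cN})\) using the hypothesis, and concludes \(\cl_{\qrL}(u) \subseteq \cl_{\qrN}(u)\). Your argument bypasses Lemma~\ref{lemma:positive_atoms} entirely: the observation \(q \in \post_u^{\cN}(I) \iff u \in W_{I,q}^{\cN}\) already makes ``\(u \qrL v \Rightarrow u \qrN v\)'' literally the statement that each \(W_{I,q}^{\cN}\) is \(\qrL\)-upward-closed, giving both directions at once. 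This is slightly more elementary and more symmetric; the paper's route, on the other hand, makes the role of the principals of \(\qrN\) as intersections of left languages explicit, which ties in with the co-rest discussion that follows the theorem.
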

\begin{proof}
We first show that \(\forall q \in Q,\ \cl_{\qrL}(W_{I,q}^{\cN}) = W_{I,q}^{\cN}\) is a \emph{necessary} condition, i.e.\ if \(\cG{r}(\cN)\) is the canonical RFA for \(L\) then \(\forall q \in Q,\ \cl_{\qrL}(W_{I,q}^{\cN}) = W_{I,q}^{\cN}\) holds.
By Lemma~\ref{lemma:qrlEqualqrNResEqualCan} we have that if \(\cG{r}(\cN)\) is the canonical RFA then \(\mathord{\qrL} = \mathord{\qrN}\).
Moreover,
\begin{align*}
\cl_{\qrL}(W_{I,q}^{\cN}) & = \quad \text{[By definition of \(\cl_{\qrL}\)]}\\
\{w \in \Sigma^* \mid \exists u \in W_{I,q}^{\cN}, \; u^{-1}L \subseteq w^{-1}L\} & = \quad \text{[Since \(\mathord{\qrL} = \mathord{\qrN}\)]} \\
\{w \in \Sigma^* \mid \exists u \in W_{I,q}^{\cN}, \; \post_u^{\cN}(I) \subseteq \post_w^{\cN}(I)\} & \subseteq  \quad \text{[Since \( u \in W_{I,q}^{\cN} \Lra q \in \post_u^{\cN}(I)\)]}\\
\{w \in \Sigma^* \mid q \in \post_w^{\cN}(I)\} & = \quad \text{[By definition of \(W_{I,q}^{\cN}\)]}\\
W_{I,q}^{\cN} \enspace .
\end{align*}
By reflexivity of \(\qrL, \) we conclude that \(\cl_{\qrL}(W_{I,q}^{\cN}) = W_{I,q}^{\cN} \).

Next, we show that \(\forall q \in Q,\;  \cl_{\qrL}(W_{I,q}^{\cN}) = W_{I,q}^{\cN}\) is also a \emph{sufficient} condition.
By Lemma~\ref{lemma:positive_atoms} and condition \(\forall q \in Q,\;  \cl_{\qrL}(W_{I,q}^{\cN}) = W_{I,q}^{\cN}\), we have that
\[\cl_{\qrN}(u) = \bigcap{\textstyle{_{q \in \post^{\cN}_u(I)}}} W_{I,q}^{\cN} = \bigcap{\textstyle{_{q \in \post^{\cN}_u(I)}}} \cl_{\qrL}(W_{I,q}^{\cN})\enspace .\]

Since \(u \in \cl_{\qrL}(W_{I,q}^{\cN})\) for all \(q \in \post_u^{\cN}(I)\), it follows that \(\cl_{\qrL}(u) \subseteq \cl_{\qrL}(W_{I,q}^{\cN})\) for all \(q \in \post_u^{\cN}(I)\) and, since \(\cl_{\qrN}(u) = \bigcap\textstyle{_{q \in \post^{\cN}_u(I)}} \cl_{\qrL}(W_{I,q}^{\cN})\), we have that \(\cl_{\qrL}(u) \subseteq \cl_{\qrN}(u)\) for every \(u \in Σ^*\), i.e., \(\mathord{\qrL} \subseteq \mathord{\qrN}\).

On the other hand, as shown by de Luca and Varricchio~\cite{de1994well}, we have that \(\mathord{\qrN}\subseteq \mathord{\qrL}\).
We conclude that \(\mathord{\qrN} = \mathord{\qrL}\), hence \(\cG{r}(\cN) = \cF{r}(L)\).
\end{proof}

It is worth to remark that Theorem~\ref{theorem:canonicalreverserestic} does not hold when considering the residualization operation \(\cN^{\text{res}}\) of Denis et al.~\cite{denis2002residual} instead of \(\cG{r}(\cN)\).
As a counterexample we have the automata \(\cN\) in Figure~\ref{fig:Residuals} where  \(\cG{r}(\cN)\) is the canonical RFA for \(\lang{\cN}\), hence \(\cN\) satisfies the condition of Theorem~\ref{theorem:canonicalreverserestic}, while \(\cN^{\text{res}}\) is not canonical.

\paragraph*{Co-atoms and co-rests}
The condition of Theorem~\ref{theorem:canonicalreverserestic} is analogue to the one we gave for building the minimal DFA~\cite{ganty2019congruence}, except that the later is formulated in terms of congruences instead of quasiorders.
In that case we proved that determinizing a given NFA \(\cN\) yields the minimal DFA if{}f \(\cl_{\rrL}(W^{\cN}_{I,q}) = W^{\cN}_{I,q}\) for every state \(q\) of \(\cN\), where \(\mathord{\rrL} \ud \mathord{\qrL} \cap \mathord{(\qrL)^{-1}}\) is the right Nerode's congruence~\cite{deluca2011}.

Moreover, we showed that the principals of \(\rrL\) coincide with the so-called \emph{co-atoms}~\cite{ganty2019congruence}, which are non-empty intersections of complemented and uncomplemented right quotients of the language.
This allowed us to connect our result for DFAs~\cite{ganty2019congruence} with the generalization of the double-reversal method for building the minimal DFA proposed by Brzozowski and Tamm~\cite{brzozowski2014theory}, who establish that determinizing an NFA \(\cN\) yields the minimal DFA for \(\lang{\cN}\) if{}f the left languages of the states of \(\cN\) are unions of co-atoms of \(\lang{\cN}\).

Next, we give a formulation of the condition from Theorem~\ref{theorem:canonicalreverserestic} along the lines of the one given by Brzozowski and Tamm~\cite{brzozowski2014theory} for their generalization of the double-reversal method for building the minimal DFA.
To do that, let us call the intersections used in Lemma~\ref{lemma:positive_atoms} to describe the principals of \(\qlL\) and \(\qrL\) as \emph{rests} and \emph{co-rests} of \(L\), respectively.
As shown by Theorem~\ref{theorem:canonicalreverserestic}, residualizing an NFA \(\cN\) yields the canonical RFA for \(\lang{\cN}\) if{}f the left language of every state of \(\cN\) satisfies \(\cl_{\qrL}(W_{I,q}^{\cN}) = W_{I,q}^{\cN}\).
By definition, \(\cl_{\qrL}(S) = S\) if{}f \(S\) is a union of principals of \(\qrL\) which, by Lemma~\ref{lemma:positive_atoms} are the co-rests of \(L\).

Therefore we derive the following statement, equivalent to Theorem~\ref{theorem:canonicalreverserestic}, that we consider as the residual-equivalent of the generalization of the double-reversal method for building the minimal DFA proposed by Brzozowski and Tamm~\cite{brzozowski2014theory}.

\begin{corollary}%
Let \(\cN = (Q,\Sigma,\delta,I,F)\) be an NFA with \(L=\lang{\cN}\).
Then \(\cG{r}(\cN)\) is the canonical RFA for \(L\) if{}f the left languages of \(\cN\) are union of co-rests.
\end{corollary}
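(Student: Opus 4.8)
The plan is to obtain this corollary as a direct restatement of Theorem~\ref{theorem:canonicalreverserestic}, whose characterization already says that \(\cG{r}(\cN)\) is the canonical RFA for \(L\) precisely when \(\cl_{\qrL}(W_{I,q}^{\cN}) = W_{I,q}^{\cN}\) holds for every state \(q\). The only genuine content left is to reexpress the fixpoint condition \(\cl_{\qrL}(S) = S\) as ``\(S\) is a union of co-rests of \(L\).'' Nothing about the automaton \(\cN\) enters this translation; it is a purely order-theoretic statement about the closure operator \(\cl_{\qrL}\), applied afterwards to the particular sets \(S = W_{I,q}^{\cN}\).

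First I would prove the general fact that, for any quasiorder \(\qrL\), a subset \(S \subseteq \Sigma^*\) satisfies \(\cl_{\qrL}(S) = S\) if{}f \(S\) is a union of principals of \(\qrL\). For the forward direction, assume \(S\) is closed. Reflexivity of \(\qrL\) gives \(u \in \cl_{\qrL}(u)\) for each \(u \in S\), so \(S \subseteq \bigcup_{u \in S}\cl_{\qrL}(u)\); conversely each \(u \in S\) yields \(\cl_{\qrL}(u) \subseteq \cl_{\qrL}(S) = S\), whence \(S = \bigcup_{u \in S}\cl_{\qrL}(u)\) is a union of principals. For the backward direction, if \(S\) is a union of principals, then idempotency of \(\cl_{\qrL}\) (itself a consequence of transitivity of \(\qrL\)) together with the fact that \(\cl_{\qrL}\) distributes over unions gives \(\cl_{\qrL}(S) = S\).

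Next I would invoke Lemma~\ref{lemma:positive_atoms}, which states that \(\cl_{\qrL}(u) = \bigcap_{w \in \Sigma^*,\, w \in u^{-1}L} Lw^{-1}\) for every \(u \in \Sigma^*\); by the naming convention introduced just before the statement, these intersections are exactly the co-rests of \(L\). Hence the collection of principals of \(\qrL\) coincides with the collection of co-rests of \(L\), so ``union of principals of \(\qrL\)'' and ``union of co-rests'' denote the same class of sets. Combining this with the previous paragraph, the condition \(\cl_{\qrL}(W_{I,q}^{\cN}) = W_{I,q}^{\cN}\) is equivalent to saying that the left language \(W_{I,q}^{\cN}\) is a union of co-rests; quantifying over all \(q \in Q\) and substituting into Theorem~\ref{theorem:canonicalreverserestic} yields the corollary.

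I do not expect a real obstacle here, since the argument is a reformulation rather than a new construction. The one point requiring a little care is the identification of the set of principals of \(\qrL\) with the set of co-rests: Lemma~\ref{lemma:positive_atoms} only asserts that each individual principal equals a specific intersection, so I would make explicit that this is an equality of the two \emph{families} (every principal is a co-rest and, reading the formula in reverse, every co-rest arising as some \(\cl_{\qrL}(u)\) is a principal), which is what licenses replacing ``union of principals'' by ``union of co-rests'' verbatim.
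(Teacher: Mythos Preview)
Your proposal is correct and follows essentially the same route as the paper: the paper derives the corollary directly from Theorem~\ref{theorem:canonicalreverserestic} by observing that \(\cl_{\qrL}(S)=S\) if{}f \(S\) is a union of principals of \(\qrL\), and then invoking Lemma~\ref{lemma:positive_atoms} to identify these principals with the co-rests of \(L\). Your extra care about the identification of principals with co-rests is actually unnecessary here, since the paper \emph{defines} co-rests to be exactly those intersections from Lemma~\ref{lemma:positive_atoms}, i.e., the principals of \(\qrL\).
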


\paragraph*{Tamm's Generalization of the Double-reversal Method for RFAs}
Tamm~\cite{tamm2015generalization} generalized the double-reversal method of Denis et al.~\cite{denis2002residual} by showing that \(\cN^{\text{res}}\) is the canonical RFA for \(\lang{\cN}\) if{}f the left languages of \(\cN\) are union of the left languages of the canonical RFA for \(\lang{\cN}\).

In this section, we compare the generalization of Tamm~\cite{tamm2015generalization} with ours.
The two approaches differ in the definition of the residualization operation they consider and, as the following lemma shows, the sufficient and necessary condition from Theorem~\ref{theorem:canonicalreverserestic} is more general than that of Tamm~\cite[Theorem 4]{tamm2015generalization}

\begin{lemma}\label{lemma:WeRaTamm}
Let \(\cN = (Q, Σ, δ, I, F)\) be an NFA and let \(\cC = \cF{r}(\qrL, L) = (\wt{Q}, Σ, \wt{δ}, \wt{I}, \wt{F})\) be the canonical RFA for \(L =\lang{\cN}\).
If \(W_{I,q}^{\cN} = \bigcup_{q \in \wt{Q}}W_{\wt{I},q}^{\cC}\) then \( \cl_{\qrL}(W_{I,q}^{\cN}) = W_{I,q}^{\cN}\).
\end{lemma}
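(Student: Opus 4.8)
The plan is to reduce the statement, via the characterization recalled immediately before the lemma, to a single structural fact about the canonical RFA. Recall that $\cl_{\qrL}(S) = S$ holds exactly when $S$ is a union of principals of $\qrL$: the closure $\cl_{\qrL}$ distributes over arbitrary unions and satisfies $\cl_{\qrL}(\cl_{\qrL}(u)) = \cl_{\qrL}(u)$, so any union of principals is $\qrL$-closed, and conversely $\cl_{\qrL}(S) = \bigcup_{u \in S}\cl_{\qrL}(u)$ is visibly such a union whenever it equals $S$. Hence, granting that each left language $W_{\wt{I},P}^{\cC}$ of a state $P$ of $\cC$ is a union of principals of $\qrL$, the hypothesis exhibits $W_{I,q}^{\cN}$ as a union of left languages of states of $\cC$, and therefore as a union of principals of $\qrL$; the conclusion $\cl_{\qrL}(W_{I,q}^{\cN}) = W_{I,q}^{\cN}$ then follows at once. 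So the whole problem collapses to understanding the left languages of $\cC = \cH^{r}(\qrL, L)$.

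The key claim I would isolate is that the left language of each state $\cl_{\qrL}(u)$ of $\cC$ equals the principal $\cl_{\qrL}(u)$ itself (which is trivially a union of principals). I would establish this through the reachability characterization
\[
\post_w^{\cC}(\wt{I}) = \{\cl_{\qrL}(v) \in \wt{Q} \mid w \in \cl_{\qrL}(v)\}, \qquad \text{for every } w \in \Sigma^*,
\]
proved by induction on $\len{w}$. The base case $w = \varepsilon$ is immediate from the definition of $\wt{I}$ in Definition~\ref{def:right-const:qo}. For the inductive step, the inclusion $\subseteq$ is routine: if $\cl_{\qrL}(v') \in \post_{wa}^{\cC}(\wt{I})$ then some $\cl_{\qrL}(v) \in \post_{w}^{\cC}(\wt{I})$ satisfies $\cl_{\qrL}(v)\,a \subseteq \cl_{\qrL}(v')$, and since $w \in \cl_{\qrL}(v)$ by the induction hypothesis, $wa \in \cl_{\qrL}(v)\,a \subseteq \cl_{\qrL}(v')$. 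Granting the characterization, the left language of $\cl_{\qrL}(u)$ is exactly $\{w \mid \cl_{\qrL}(u) \in \post_{w}^{\cC}(\wt{I})\} = \{w \mid w \in \cl_{\qrL}(u)\} = \cl_{\qrL}(u)$, as wanted.

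The delicate part, and where I expect the main obstacle, is the reverse inclusion $\supseteq$ in the reachability characterization. Given a prime principal $\cl_{\qrL}(v')$ with $wa \in \cl_{\qrL}(v')$, one must produce a state $\cl_{\qrL}(v) \in \post_{w}^{\cC}(\wt{I})$ whose $a$-transition reaches $\cl_{\qrL}(v')$. The candidate $v = w$ works whenever $\cl_{\qrL}(w)$ is $L$-prime, using $\cl_{\qrL}(w)\,a \subseteq \cl_{\qrL}(wa) \subseteq \cl_{\qrL}(v')$, where the first inclusion is Lemma~\ref{lemma:QObwComplete}(1) and the second follows from $v' \qrL wa$; but when $\cl_{\qrL}(w)$ is $L$-composite it is not a state, and one must descend to the prime principals it subsumes. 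Here I would invoke the structural property of residuals due to Denis et al.~\cite{denis2002residual}, namely that every quotient $w^{-1}L$ is the union of the prime quotients it contains (equivalently, that the canonical RFA is strongly consistent and maximal in its transitions), to cover $\cl_{\qrL}(v')$ by the transition of a suitable prime state reachable by $w$. In practice this reachability characterization is already implicit in the correctness argument for Lemma~\ref{lemma: HrGeneratesL}, so the cleanest presentation extracts it there and merely cites it here; the present lemma is then a one-line consequence of the key claim together with the distributivity and idempotency of $\cl_{\qrL}$.
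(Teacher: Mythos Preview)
Your argument is correct. Both your route and the paper's hinge on the same underlying fact---the reachability identity
\(\post_w^{\cC}(\wt{I}) = \{\cl_{\qrL}(v) \in \wt{Q} \mid w \in \cl_{\qrL}(v)\}\),
which needs strong consistency of \(\cC\) for the inclusion \(\supseteq\)---but they package it differently. The paper does not compute the left languages of \(\cC\) inside the proof; instead it invokes Lemma~\ref{lemma:qrHEqualqrifHsc} to get \(\mathord{\qr_{\cC}} = \mathord{\qrL}\), hence \(\cG{r}(\cC)\) is isomorphic to \(\cF{r}(L)\), and then applies Theorem~\ref{theorem:canonicalreverserestic} to \(\cC\) itself to conclude \(\cl_{\qrL}(W_{\wt{I},q}^{\cC}) = W_{\wt{I},q}^{\cC}\) for each state \(q\), after which the distributivity of \(\cl_{\qrL}\) over unions finishes the job. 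Your approach is more elementary and self-contained: you extract the consequence \(W_{\wt{I},\cl_{\qrL}(u)}^{\cC} = \cl_{\qrL}(u)\) directly from the reachability identity (the paper states exactly this fact in the paragraph \emph{after} the lemma). What the paper's route buys is reuse of the section's main theorem, at the cost of a longer dependency chain; what yours buys is a shorter, structurally transparent argument. One small correction: the reachability identity is not really implicit in Lemma~\ref{lemma: HrGeneratesL} (which computes \emph{right} languages); it is exactly Equation~\eqref{eq:PostQuotient} in the proof of Lemma~\ref{lemma:qrHEqualqrifHsc}, so you should cite that instead.
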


\begin{proof}
Since the canonical RFA, \(\cC\), is strongly consistent, it follows from Lemma~\ref{lemma:qrHEqualqrifHsc} (see Appendix~\ref{sec:supp-results}) that \(\mathord{\qr_{\mathcal{C}}} = \mathord{\qrL}\) and, consequently, \(\cG{r}(\cC)\) is isomorphic to \(\cF{r}(L)\).
It follows from Theorem~\ref{theorem:canonicalreverserestic} that \(\cl_{\qrL}(W_{\wt{I},q}^{\cC}) = W_{\wt{I},q}^{\cC}\) for every \(q \in \wt{Q}\).
Therefore,
\begin{align*}
\cl_{\qrL}(W_{I,q}^{\cN}) & = \quad \text{[Since \(W_{I,q}^{\cN} = {\textstyle\bigcup_{q \in \wt{Q}}W_{\wt{I},q}^{\cC}}\) and \(\cl_{\qrL}(\cup S_i) = \cup \cl_{\qrL}(S_i)\)]} \\
{\textstyle\bigcup_{q \in \wt{Q}}\cl_{\qrL}(W_{\wt{I},q}^{\cC})} & = \quad \text{[Since \(\cl_{\qrL}(W_{\wt{I},q}^{\cC}) = W_{\wt{I},q}^{\cC}\) for every \(q \in \wt{Q}\)]}\\
{\textstyle\bigcup_{q \in \wt{Q}}W_{\wt{I},q}^{\cC}}\enspace .\tag*{\qedhere} 
\end{align*}
\end{proof}

Observe that, since the canonical RFA \(\cC = (\wt{Q}, Σ, \wt{δ}, \wt{I}, \wt{F})\) for a language \(L\) is \emph{strongly consistent}, the left language of each state is a principal of \(\cl_{\qrL}\).
In particular, if the right language of a state is \(u^{-1}L\) then its left language is the principal \(\cl_{\qrL}(u)\).
Therefore, if \(W_{I,q}^{\cN} = \bigcup_{q \in \wt{Q}}W_{\wt{I},q}^{\cC}\) then \(W_{I,q}^{\cN}\) is a closed set in \(\cl_{\qrL}\).
However, the reverse implication does not hold since only the \emph{\(L\)-prime} principals are left languages of states of \(\cC\).

On the other hand, \(L\)-composite principals for \(\qrL\) can be described as intersections of \(L\)-prime principals (see Lemma~\ref{lemma:CompositeIntersection} in Appendix~\ref{sec:supp-results}).
As a consequence, \(\cG{r}(\cN)\) is isomorphic to \(\cC\) if{}f the left languages of states of \(\cN\) are \emph{union of non-empty intersections of left languages of \(\cC\)}, while, as shown by Tamm~\cite{tamm2015generalization}, \(\cN^{\text{res}}\) is isomorphic to \(\cC\) if{}f the left languages of the states of \(\cN\) are \emph{union of left languages of \(\cC\)}.

\section{Learning Residual Automata}\label{sec:LearningNL:qo}
Bollig et al.~\cite{bollig2009angluin} devised the NL\(^*\) algorithm for learning the canonical RFA for a given regular language.
The algorithm describes the behavior of a \emph{Learner} that infers a language \(L\) by performing membership queries on \(L\) (which are answered by a \emph{Teacher}) and equivalence queries between the language accepted by a candidate automaton and \(L\) (which are answered by an \emph{Oracle}).
The algorithm terminates when the \emph{Learner} builds an RFA accepting the language \(L\).
Appendix~\ref{sec:LearningNL} contains a formal description of the NL\(^*\) algorithm.

In this section we present a quasiorder-based perspective on the NL\(^*\) algorithm in which the \emph{Learner} iteratively refines a quasiorder \(\qo\) on \(Σ^*\) by querying the \emph{Teacher} and uses an adaption of the automata construction \(\cH^{r}(\qo,L)\) from Definition~\ref{def:right-const:qo} to build an automaton that is used to query the \emph{Oracle}.
We capture this approach in the so-called \emph{NL\(^{\qo}\) algorithm} whose pseudocode we defer to Appendix~\ref{sec:supp-results}.
Here we give the definitions and general steps of the NL\(^{\qo}\)  algorithm.

The \emph{Learner} maintains a prefix-closed finite set \(\Pref \subseteq Σ^*\) and a suffix-closed finite set \(\Suf \subseteq Σ^*\).
The set \(\Suf\) is used to \emph{approximate} the principals in \(\qrL\) for the words in \(\Pref\).
In order to manipulate these approximations, we define the following two operators.

\begin{definition}\label{def:subsetS}
Let \(L\) be a language, \(\Suf \subseteq \Sigma^*\) and \(u,v \in Σ^*\).
Then \(u^{-1}L =_{\Suf} v^{-1}L  \udiff  \left(u^{-1}L \cap \Suf\right) = \left(v^{-1}L \cap \Suf\right)\).
Similarly, \(u^{-1}L \subseteq_{\Suf} v^{-1}L  \udiff  \left(u^{-1}L \cap \Suf\right) \subseteq \left(v^{-1}L \cap \Suf\right)\).
\end{definition}

These operators allow us to define a version of Nerode's quasiorder restricted to \(\Suf\).

\begin{definition}[Right-language-based quasiorder w.r.t. \(\Suf\)]%
\label{def:finiteNerode}
Let \(L\) be a language, \(\Suf \subseteq \Sigma^*\) and \(u,v \in Σ^*\).
Define \(u \qA v \udiff u^{-1}L \subseteq_{\Suf} v^{-1}L\).
\end{definition}

Recall that the \emph{Learner} only manipulates the principals for the words in \(\Pref\).
Therefore, we need to adapt the notion of composite principal for \(\qA\).

\begin{definition}[\(L_{\Suf}\)-Composite Principal w.r.t. \(\Pref\)]
Let \(\Pref, \Suf \subseteq \Sigma^*\) with \(u \in \Pref\) and let \(L \subseteq Σ^*\) be a language. 
We say that the principal \(\cl_{\qA}(u)\) is \emph{\(L_{\Suf}\)-composite w.r.t. \(\Pref\)} if{}f
\(u^{-1}L =_{\Suf} \bigcup_{x \in \Pref, \; x \qAn u} x^{-1}L\).
Otherwise, we say it is \(L_{\Suf}\)-\emph{prime} w.r.t. \(\Pref\).
\end{definition}

The \emph{Learner} uses the quasiorder \(\qA\) to build an automaton by adapting the construction from Definition~\ref{def:right-const:qo} in order to use only the information that is available by means of the sets \(\Suf\) and \(\Pref\).
Building such an automaton requires the quasiorder to satisfy two conditions: it must be \emph{closed} and \emph{consistent} w.r.t. \(\Pref\).

\begin{definition}[Closedness and Consistency of \(\qA\) w.r.t. \(\Pref\)]\label{def:ClosedCons}\hfill
\begin{alphaenumerate}
\item \(\qA\) is \emph{closed w.r.t. \(\Pref\)} if{}f \(\forall u \in \cP, a \in \Sigma,\;  \cl_{\qA}(ua) \text{ is \(L_{\Suf}\)-prime w.r.t. \(\Pref\)} \Ra \exists v \in \cP,\linebreak  \cl_{\qA}(ua) = \cl_{\qA}(v)\).\label{def:ClosedCons:Closed}
\item \(\qA\) is \emph{consistent w.r.t. \(\Pref\)} if{}f \(\forall u, v \in \Pref, a \in Σ: \; u \qA v \Ra ua \qA va\).\label{def:ClosedCons:Cons}

\end{alphaenumerate}
\end{definition}

At each iteration, the \emph{Learner} checks whether the quasiorder \(\qA\) is closed and consistent w.r.t. \(\Pref\).
If \(\qA\) is not closed w.r.t. \(\Pref\), then it finds \(\cl_{\qA}(ua)\) with \(u \in \Pref, a \in Σ\) such that \(\cl_{\qA}(ua)\) is \(L_{\Suf}\)-prime w.r.t. \(\Pref\) and it is not equal to some \(\cl_{\qA}(v)\) with \(v \in \Pref\).
Then the \emph{Learner} adds \(ua\) to \(\Pref\).

Similarly, if \(\qA\) is not consistent w.r.t. \(\Pref\), the \emph{Learner} finds \(u, v \in \Pref, a \in Σ, x \in \Suf\) such that \(u \qA v\) but \(uax \in L \land vax \notin L\).
Then the \emph{Learner} adds \(ax\) to \(\Suf\).
When the quasiorder \(\qA\) is closed and consistent w.r.t. \(\Pref\), the \emph{Learner} builds the automaton \(\cL(\qA, \Pref)\).

Definition~\ref{def:right-const:qo:S} is an adaptation of the automata construction \(\cH^{r}\) from Definition~\ref{def:right-const:qo}.
Instead of considering all principals, it considers only those that correspond to words in \(\Pref\).
Moreover, the notion of \(L\)-primality is replaced by \(L_{\Suf}\)-primality w.r.t. \(\Pref\), since the algorithm does not manipulate quotients of \(L\) by words in \(Σ^*\) but the approximation through \(\Suf\) of the quotients of \(L\) by words in \(\Pref\) (see Definition~\ref{def:subsetS}).
Note that, if \(\Suf = \Pref = Σ^*\) then \(\cF{r}(L) = \cR(\qA,\Pref)\).

\begin{definition}[Automata construction \(\cL(\qA, \Pref)\)]%
\label{def:right-const:qo:S}
Let \(L\subseteq Σ^*\) be a language and let \(\Pref,\Suf\subseteq Σ^*\).
Define the automaton \(\cL(\qA, \Pref)= (Q, \Sigma, \delta, I, F)\) with \(Q = \{\cl_{\qA}(u) \mid u\in \Pref, \linebreak \cl_{\qA}(u) \text{ is \(L_{\Suf}\)-prime w.r.t. \(\Pref\)}\}\), \(I = \{\cl_{\qA}(u) \in Q \mid \varepsilon \in \cl_{\qA}(u)\}\), \(F = \{\cl_{\qA}(u) \in Q \mid u \in L\}\) and \( \delta(\cl_{\qA}(u), a) = \{ \cl_{\qA}(v) \in Q \mid \cl_{\qA}(u)   a \subseteq \cl_{\qA}(v)\}\) for all \(\cl_{\qA}(u) \in Q\) and \(a \in Σ\).
\end{definition}

Finally, the \emph{Learner} asks the \emph{Oracle} whether \(\lang{\cL(\qA, \Pref)} = L\).
If the \emph{Oracle} answers \emph{yes} then the algorithm terminates.
Otherwise, the \emph{Oracle} returns a counterexample \(w\) for the language equivalence.
Then, the \emph{Learner} adds every suffix of \(w\) to \(\Suf\) and repeats the process.

Theorem~\ref{theorem:NLqo} shows that the NL\(^{\qo}\) algorithm exactly coincides with NL\(^*\).

\begin{restatable}{theoremR}{NLqo}%
\label{theorem:NLqo}
NL\(^{\qo}\) builds the same sets \(\Pref\) and \(\Suf\), performs the same queries to the \emph{Oracle} and the \emph{Teacher} and returns the same RFA as NL\(^*\), provided that both algorithms resolve nondeterminism the same way. 
\end{restatable}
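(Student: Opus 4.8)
The plan is to prove the equivalence by induction on the number of iterations of the two algorithms, maintaining the invariant that at the start of each iteration NL\(^{\qo}\) and NL\(^*\) hold identical sets \(\Pref\) and \(\Suf\). The base case is the common initialization \(\Pref = \Suf = \{\varepsilon\}\). First I would fix a dictionary translating the observation-table vocabulary of NL\(^*\)~\cite{bollig2009angluin} (recalled in Appendix~\ref{sec:LearningNL}) into the quasiorder vocabulary of Section~\ref{sec:LearningNL:qo}. The key translation is that the row of a word \(u\) restricted to the columns \(\Suf\), namely the map \(e \mapsto [ue \in L]\) for \(e \in \Suf\), encodes exactly the set \(u^{-1}L \cap \Suf\); hence equality of two rows is \(u^{-1}L =_{\Suf} v^{-1}L\) and row inclusion is \(u^{-1}L \subseteq_{\Suf} v^{-1}L\), i.e.\ \(u \qA v\) (Definitions~\ref{def:subsetS} and~\ref{def:finiteNerode}). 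Since both algorithms compute this information by issuing the membership query \(ue \in L\) for each \(u \in \Pref \cup \Pref\cdot\Sigma\) and each \(e \in \Suf\), equal \(\Pref\) and \(\Suf\) force identical membership queries to the \emph{Teacher} and identical row data.

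With the dictionary in place, I would check that every test and action of the two algorithms coincides under it. Concretely: a row is a join of strictly smaller rows in NL\(^*\) exactly when the corresponding principal \(\cl_{\qA}(u)\) is \(L_{\Suf}\)-composite w.r.t.\ \(\Pref\); the RFSA-closedness test of NL\(^*\) is literally Definition~\ref{def:ClosedCons}(\ref{def:ClosedCons:Closed}) and the RFSA-consistency test is Definition~\ref{def:ClosedCons}(\ref{def:ClosedCons:Cons}). Consequently both algorithms reach the same verdict on closedness and consistency, and, since we assume both resolve nondeterminism identically, they select the same witness and perform the same update: adding \(ua\) to \(\Pref\) in the unclosed case, or adding the distinguishing suffix \(ax\) to \(\Suf\) in the inconsistent case.

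It remains to match the conjectured automaton and the counterexample phase. I would verify that the construction \(\cL(\qA,\Pref)\) of Definition~\ref{def:right-const:qo:S} reproduces NL\(^*\)'s conjecture state for state: its states are the \(L_{\Suf}\)-prime principals coming from \(\Pref\), its initial and final sets and its maximal (inclusion-based) transition relation agree entry by entry, so the two \emph{Oracle} (equivalence) queries coincide and return the same answer. When the \emph{Oracle} returns a counterexample \(w\), both algorithms add every suffix of \(w\) to \(\Suf\); thus the invariant is restored and the induction proceeds. Termination and the final output being the same RFA then follow because the two algorithms stop on the same iteration with identical \(\Pref,\Suf\), hence with identical \(\cL(\qA,\Pref)\).

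The main obstacle is the faithful matching of the combinatorial notions rather than any deep argument: one must check that NL\(^*\)'s primality and compositeness of rows, defined through the partial order on the finitely many rows of the table, coincide with \(L_{\Suf}\)-primality w.r.t.\ \(\Pref\), which quantifies over \(x \in \Pref\) with \(x \qAn u\), and that the inclusion-based transition function of \(\cL(\qA,\Pref)\) yields precisely the transitions NL\(^*\) installs. A subtle point worth isolating is the asymmetry between the upper part \(\Pref\) and the lower part \(\Pref\cdot\Sigma\) in NL\(^*\): the principals are defined over all of \(\Sigma^*\), yet are only ever compared through \(\cap\,\Suf\) and only via witnesses drawn from \(\Pref\), so I would state a single bridging lemma guaranteeing that all relevant comparisons reduce to the finite data that both algorithms actually maintain.
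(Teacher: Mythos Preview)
Your proposal is correct and follows essentially the same approach as the paper's proof: both arguments establish the dictionary \(\row(u) \sqsubseteq \row(v) \Leftrightarrow u \qA v\), derive from it that join of rows corresponds to union of quotients restricted to \(\Suf\), that \(\cT\)-primality matches \(L_{\Suf}\)-primality w.r.t.\ \(\Pref\), that the closedness and consistency tests coincide, and finally that \(\cR(\cT)\) is isomorphic to \(\cL(\qA,\Pref)\) via \(\varphi(\cl_{\qA}(u)) = \row(u)\). The only cosmetic difference is that you package the argument as an explicit induction on iterations, whereas the paper establishes all correspondences once and then concludes the two runs are identical; your flagged ``subtle point'' about \(\Pref\) versus \(\Pref\cdot\Sigma\) is exactly the place where the paper silently restricts the primality equivalence to rows indexed by \(\Pref\), so isolating it in a bridging lemma is a reasonable refinement.
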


It is worth to remark that, by replacing the right quasiorder \(\qA\) by the right congruence \(\sim_{L_{\Suf}} \ud \mathord{\qA} \cap \mathord{(\qA)^{-1}}\) in the above algorithm (precisely, in Definitions~\ref{def:ClosedCons} and \ref{def:right-const:qo:S}), the resulting algorithm corresponds to Angluin's L\(^*\) algorithm~\cite{angluin1987learning}.
Note that, in that case, all principals \(\cl_{\sim_{L_{\Suf}}}(u)\), with \(u\in\Sigma^*\), are \(L_{\Suf}\)-prime w.r.t. \(\Pref\).

\section{Related Work and Conclusions}

Denis et al.~\cite{denis2002residual} introduced the notion of RFA and canonical RFA for a language and devised a procedure, similar to the subset construction for DFAs, to build the RFA \(\cN^{\text{res}}\) from a given automaton \(\cN\).
Furthermore, they showed that \(\cN^{\text{res}}\) is isomorphic to the canonical RFA \(\cC\) for \(\lang{\cN}\) when \(\cN\) is a co-RFA with no empty states\@.
Later, Tamm~\cite{tamm2015generalization} showed that \(\cN^{\text{res}}\) is isomorphic to \(\cC\) if{}f \emph{the left language of every state of \(\cN\) is a union of left languages of states of \(\cC\)}.
This result generalizes the double-reversal method for building the canonical RFA along the lines of the generalization by Brzozowski and Tamm~\cite{brzozowski2014theory} of the double-reversal method for DFAs, which claims that determinizing an automaton \(\cN\) yields the minimal DFA if{}f \emph{the left language of each state of \(\cN\) is a union of co-atoms of \(\lang{\cN}\)}.
Although the two generalizations have a common foundation, the connection between the two results is not immediate.

Recently~\cite{ganty2019congruence}, we offered a congruence-based perspective of the generalized double-reversal method for DFAs and showed that determinizing an NFA, \(\cN\), yields the minimal DFA for \(\lang{\cN}\) if{}f \(\cl_{\rrL}(W_{I,q}^{\cN}) = W_{I,q}^{\cN}\).
In this paper we extend our previous work and devise quasiorder-based automata constructions that result in RFAs.
One of these constructions, when instantiated with the automata-based quasiorder from Definition~\ref{def:automataQO}, defines a residualization operation that, given an NFA \(\cN\), produces the RFA \(\cG{r}(\cN)\) with, at most, as many states as \(\cN^{\text{res}}\), the residualization operation defined by Denis et al.~\cite{denis2002residual}.
Observe that if \(\cN\) is a co-RFA with no empty states then both \(\cN^{\text{res}}\) and \(\cG{r}(\cN)\) are isomorphic to \(\cC\).

On the other hand, Theorem~\ref{theorem:canonicalreverserestic} shows that \(\cG{r}(\cN)\) is isomorphic to \(\cC\) if{}f\linebreak \(\cl_{\qrL}(W_{I,q}^{\cN}) = W_{I,q}^{\cN}\).
We believe that the similarity between the generalizations of the double-reversal methods for DFAs (\(\cl_{\rrL}(W_{I,q}^{\cN}) = W_{I,q}^{\cN}\)) and for RFAs (\(\cl_{\qrL}(W_{I,q}^{\cN}) = W_{I,q}^{\cN}\)) evidences that quasiorders are for RFAs as congruences are for DFAs.
Indeed, determinizing an NFA \(\cN\) with \(L = \lang{\cN}\) yields the minimal DFA for \(L\) if{}f \(\mathord{\rrN} = \mathord{\rrL}\)~\cite{ganty2019congruence} and, similarly, when residualizing \(\cN\) with our residualization operation we obtain the canonical RFA for \(L\) if{}f \(\mathord{\qrN} = \mathord{\qrL}\), as shown by Lemma~\ref{lemma:qrlEqualqrNResEqualCan}.

It is worth to remark that the left languages of the minimal DFA for \(L\) are principals of \(\rrL\)~\cite{ganty2019congruence}.
Therefore, the condition \(\cl_{\rrL}(W_{I,q}^{\cN}) = W_{I,q}^{\cN}\), which guarantees that determinizing \(\cN\) yields the minimal DFA, can be stated as: \emph{the left language of each state of \(\cN\) is a union of left languages of states of the minimal DFA}.
Thus, this characterization is the DFA-equivalent of Tamm's condition~\cite{tamm2015generalization} for RFAs.%

Figure~\ref{fig:conclusions} summarizes the existing results about these double-reversal methods.

\begin{figure}[t]
\begin{minipage}[l]{0.63\textwidth}
\begin{tabu}{@{}c|c@{}}
   \begin{tabular}{c}\textbf{Brzozowski}
   \textbf{and Tamm}\end{tabular}\cite{brzozowski2014theory}& \textbf{Ganty et al.~\cite{ganty2019congruence}}\\[3pt]  
   \begin{tabular}{c} \(\cN^{D} \equiv \cM\) \\ if{}f \\ \(\forall q, W_{I,q}^{\cN} \text{ is a union of co-atoms} \)\end{tabular} & \begin{tabular}{c}\(\cN^{D} \equiv \cM\) \\ if{}f \\ \(\forall q, \cl_{\rrL}(W_{I,q}^{\cN}) = W_{I,q}^{\cN}\)\end{tabular}  \\
   \vspace{-7pt}\\
   \tabucline[1pt off 2pt]
   \\
   \vspace{-7pt}\\
   \textbf{Tamm~\cite{tamm2015generalization}} & \textbf{Theorem~\ref{theorem:canonicalreverserestic}}\\[3pt] 
   \begin{tabular}{c}\(\cN^{\text{res}} \equiv \cC\) \\ if{}f \\ \(\forall q, W_{I,q}^{\cN} \text{ is a union of }  W_{I,q'}^{\cC}\)\end{tabular} & \begin{tabular}{c} \(\cG{r}(\cN) \equiv \cC\) \\ if{}f \\ \( \forall q, \cl_{\qrL}(W_{I,q}^{\cN}) = W_{I,q}^{\cN}\)\end{tabular} \\
\end{tabu}
\end{minipage}\hfill%
\begin{minipage}[r]{0.34\textwidth}

In the diagram: \(\cN\) is an NFA with \(L = \lang{\cN}\); \(\cN^{D}\) is the result of determinizing \(\cN\) with the standard subset construction; \(\cM\) is the minimal DFA for \(L\); \(\cC = \cF{r}(L)\) is the canonical RFA for \(L\) and \(\cN_1 \equiv \cN_2\) denotes that automaton \(\cN_1\) is isomorphic to \(\cN_2\).
\end{minipage}%
\caption{Summary of the existing results about the generalized double-reversal method for building the minimal DFA (first row) and the canonical RFA (second row) for a given language. The results on the first column are based on the notion of \emph{atoms} of a language while the results on the second column are based on \emph{quasiorders}.}%
\label{fig:conclusions}  
\end{figure}

Moreover, we support the idea that quasiorders are natural to residual automata by observing that the NL\(^*\) algorithm can be interpreted as an algorithm that, at each iteration, refines an approximation of the Nerode's quasiorder and builds an RFA using our automata construction.

Finally, it is worth to mention that Myers et al.~\cite{MyersAMU15} describe different canonical nondeterminism automata constructions for a given regular language and show how to obtain the canonical RFA.
They do it by first constructing the minimal DFA for the language interpreted in a variety of join-semilattices and then applying a dual equivalence between this variety and the category of closure spaces.
In some sense, this already establishes a connection between the class of DFAs and RFAs.
Indeed, the same authors~\cite{AdamekMUM14} use this category-theoretical perspective to address the residual-equivalent of the double-reversal method proposed by Denis et al.~\cite{denis2002residual}.
In contrast, this work revisit different methods to construct the canonical RFA relying on the simple notion of quasiorders on words, as a natural extension of our work on congruences for the study of minimization techniques for DFAs.

\appendix

\section{Learning Algorithm \texorpdfstring{\textit{NL}\(^*\)}{NL*}}%
\label{sec:LearningNL}
Bollig et al.~\cite{bollig2009angluin} devised an algorithm, NL\(^*\), that learns the canonical RFA for a given regular language \(L\).
Similarly to the well-known L\(^*\) algorithm of Angluin~\cite{angluin1987learning}, the NL\(^*\) algorithm relies on a \emph{Teacher}, which answers membership queries for \(L\), and an \emph{Oracle} which answers equivalence queries between the language accepted by an RFA and \(L\).

The \emph{Learner} maintains a prefix-closed finite set \(\Pref \subseteq Σ^*\) and a suffix-closed finite set \(\Suf \subseteq Σ^*\).
The \emph{Learner} groups the words in \(\Pref\) by building a table \(T = (\cT, \Pref, \Suf)\) where \(T: (\Pref \cup \Pref   Σ) \times \Suf \to \{{+}, {-}\}\) is a function such that for every \(u \in \Pref \cup \Pref   Σ\) and \(v \in \Suf\) we have that \(T(u,v) = {+} \Lra uv \in L\).
Otherwise \(T(u,v) = {-}\).
For every word \(u \in \Pref \cup \Pref   Σ\), define the function \(\row(u): \Suf \to \{{+}, {-}\}\) as \(\row(u)(v) \ud T(u,v)\).
The set of all rows of a table \(\cT\) is denoted by \(\Rows(\cT)\).

The algorithm uses the table \(\cT = (T, \Pref, \Suf)\) to build an automaton whose states are some of the rows \(\cT\). 
In order to do that, it is necessary to define the notions of \emph{union} of rows, \emph{prime} row and \emph{composite} row.

\begin{definition}[Join Operator]%
\label{def:join}
Let \(\cT = (T, \Pref, \Suf)\) be a table.
For every \(r_1, r_2 \in \Rows(\cT)\), define the \emph{join} \(r_1 \sqcup r_2: \Suf \to \{{+},{-}\}\) as:
\[\forall x \in \Suf, \; (r_1 \sqcup r_2)(x) \ud \left\{\begin{array}{ll}
{+} & \text{if } r_1(x) = {+} \lor r_2(x) = {+} \\
{-} & \text{otherwise}\end{array}\right.\]
\end{definition}

Note that the join operator is associative, commutative and idempotent.
However, the join of two rows is not necessarily a row of \(\cT\).

\begin{definition}[Covering Relation]%
\label{def:coverRow}
Let \(\cT = (T, \Pref, \Suf)\) be a table.
Then, for every pair of rows \(r_1, r_2 \in \Rows(\cT)\) we have that
\(r_1 \sqsubseteq r_2 \udiff \forall x \in \Suf, \; r_1(x) = {+} \Ra r_2(x) = {+}\).
We write \(r_1 \sqsubset r_2\) to denote \(r_1 \sqsubseteq r_2 \) and \(r_1 \neq r_2\).
\end{definition}

\begin{definition}[Composite and Prime Rows]%
\label{def:PrimeRow}
Let \(\cT = (T, \Pref, \Suf)\) be a table.
We say a row \(r \in \Rows(\cT)\) is \(\cT\)-\emph{composite} if it is the join of all the rows that it strictly covers, i.e., \(r = \bigsqcup_{r' \in \Rows(\cT), \; r' \sqsubset r} r'\).
Otherwise, we say \(r\) is \(\cT\)-\emph{prime}.
\end{definition}

\begin{definition}[Closed and Consistent Table]%
\label{def:Table}
Let \(\cT = (T, \Pref, \Suf)\) be a table.
Then
\begin{alphaenumerate}
\item \(\cT\) is \emph{closed} if \(\forall u \in \Pref, a \in Σ, \; \row(ua) = \bigsqcup \{\row(v) \mid v \in \Pref, \; \row(v) \sqsubseteq \row(ua) \land \row(v) \text{ is \(\cT\)-prime}\}\).%
\label{def:Table:closed}
\item \(\cT\) is \emph{consistent} if \(\row(u) \sqsubseteq \row(v) \Ra \row(ua) \sqsubseteq \row(va) \) for every \(u,v \in \Pref\) and \(a \in Σ\).%
\label{def:Table:Consistent}
\end{alphaenumerate}
\end{definition}

At each iteration of the algorithm, the \emph{Learner} checks whether the table \(\cT = (T, \Pref, \Suf)\) is closed and consistent.
If \(\cT\) is not closed, then it finds \(\row(ua)\) with \(u \in \Pref, a \in Σ\) such that \(\row(ua)\) is \(\cT\)-prime and it is not equal to some \(\row(v)\) with \(v \in \Pref\).
Then the \emph{Learner} adds \(ua\) to \(\Pref\) and updates the table \(\cT\).
Similarly, if \(\cT\) is not consistent, the \emph{Learner} finds \(u, v \in \Pref, a \in Σ, x \in \Suf\) such that \(\row(u) \subseteq \row(v)\) but \(\row(ua)(x) = {+} \land \row(va)(x) = {-}\).
Then the \emph{Learner} adds \(ax\) to \(\Suf\) and updates \(\cT\).
When the table \(\cT\) is closed and consistent, the \emph{Learner} builds the RFA \(\cR(\cT)\).

\begin{definition}[\(\cR(\cT)\)]
Let \(\cT = (T, \Pref, \Suf)\) be a table.
Define the automaton \(\cR(\cT) = (Q, Σ, I, F, δ)\) with \(Q = \{\row(u) \mid u \in \Pref \land \row(u) \text{ is \(\cT\)-prime}\}\), \(I = \{\row(u) \in Q \mid \row(u) \sqsubseteq \row(\varepsilon)\}\), \linebreak\(F = \{\row(u) \in Q\mid \row(u)(\varepsilon) = {+}\}\) and \(\row(v) \in δ(\row(u),a) = \{\row(v) \in Q \mid \row(v) \sqsubseteq \row(ua)\}\) for all \(\row(u) \in Q, a \in Σ\). 
\end{definition}

The \emph{Learner} asks the \emph{Oracle} whether \(\lang{\cR(\cT)} = L\).
If the \emph{Oracle} answers \emph{yes} then the algorithm terminates.
Otherwise, the \emph{Oracle} returns a counterexample \(w\) for the language equivalence.
Then the \emph{Learner} adds every suffix of \(w\) to \(\Suf\), updates the table \(\cT\) and repeats the process.

\section{Supplementary Results}
\label{sec:supp-results}
In this section we include auxiliary results that we refer to in the main part of the document and/or we use in the deferred proofs.

The following result establishes a relationship between the \(L\)-composite principals for two comparable right quasiorders \(\mathord{\qr_1} \subseteq \mathord{\qr_2}\).
This result is used in Theorem~\ref{theorem:numLPrimePrincipals} to show that the number of \(L\)-prime principals induced by \(\qr_1\) is greater than or equal to the number of \(L\)-prime principals induced by \(\qr_2\).

\begin{lemma}\label{lemma:numprincipals}
Let \(L \subseteq Σ^*\) be a regular language and let \(u \in Σ^*\). 
Let \(\qr_1\) and \(\qr_2\) be two right \(L\)-preserving quasiorders  such that \(\mathord{\qr_1} \subseteq \mathord{\qr_2}\).
Then
\[\cl_{\qr_1}(u) \text{ is \(L\)-composite} \Ra \left( \cl_{\qr_2}(u) \text{ is \(L\)-composite} \lor  \exists x \qrn_1 u, \; \cl_{\qr_2}(u) = \cl_{\qr_2}(x)\right)\enspace .\]
Similarly holds for left \(L\)-preserving quasiorders.
\end{lemma}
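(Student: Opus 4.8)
The plan is to unfold both occurrences of ``$L$-composite'' via Definition~\ref{def:CompositeClosed} and to lean on two facts. First, any right $L$-preserving quasiorder $\qr$ satisfies $x \qr u \Ra x^{-1}L \subseteq u^{-1}L$: since $\qrL$ is the coarsest right $L$-preserving quasiorder~\cite{de1994well} we have $\mathord{\qr}\subseteq\mathord{\qrL}$, so $x \qr u \Ra x \qrL u$, and $x \qrL u$ means $x^{-1}L \subseteq u^{-1}L$ by Definition~\ref{def:NerodeQO}. Second, the hypothesis $\mathord{\qr_1}\subseteq\mathord{\qr_2}$. Spelling out the assumption, $\cl_{\qr_1}(u)$ being $L$-composite means $u^{-1}L = \bigcup_{x\qrn_1 u} x^{-1}L$.

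The key idea is a case split driven by how a witness $x$ with $x \qrn_1 u$ behaves under the coarser quasiorder. Since $x \qr_1 u$ and $\mathord{\qr_1}\subseteq\mathord{\qr_2}$, we always have $x \qr_2 u$; the only freedom is whether $u \qr_2 x$ holds as well. I would therefore distinguish: (Case~A) some witness $x$ with $x \qrn_1 u$ also satisfies $u \qr_2 x$; or (Case~B) every witness $x$ with $x \qrn_1 u$ satisfies $u \not\qr_2 x$.

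In Case~A, from $x \qr_2 u$ and $u \qr_2 x$ we get that $x$ and $u$ are equivalent under the equivalence induced by $\qr_2$, whence $\cl_{\qr_2}(u) = \cl_{\qr_2}(x)$; together with $x \qrn_1 u$ this is exactly the right-hand disjunct. In Case~B, every witness satisfies $x \qr_2 u$ and $u \not\qr_2 x$, i.e.\ $x \qrn_2 u$; hence $\{x : x \qrn_1 u\} \subseteq \{x : x \qrn_2 u\}$ and therefore
\[
u^{-1}L = \bigcup_{x \qrn_1 u} x^{-1}L \subseteq \bigcup_{x \qrn_2 u} x^{-1}L .
\]
The reverse inclusion is automatic: for each $x \qrn_2 u$ we have $x \qr_2 u$, so $x^{-1}L \subseteq u^{-1}L$ by the first fact recalled above. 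Thus $u^{-1}L = \bigcup_{x \qrn_2 u} x^{-1}L$, i.e.\ $\cl_{\qr_2}(u)$ is $L$-composite, the left-hand disjunct.

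The left-quasiorder statement then follows by the same argument with right quotients $Lx^{-1}$ in place of left quotients $x^{-1}L$ and $\qln$ in place of $\qrn$. I do not expect a genuine obstacle; the only point requiring care is the observation that an element strictly $\qr_1$-below $u$ cannot rise strictly $\qr_2$-above $u$ (it is forced to stay $\qr_2$-below), so the a priori trichotomy collapses to the clean dichotomy of Cases~A and~B. Once that is in place, the argument reduces to the two routine inclusions above.
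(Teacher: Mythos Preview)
Your proof is correct and follows essentially the same approach as the paper's: the same case split on whether every witness $x$ with $x \qrn_1 u$ remains strictly below $u$ under $\qr_2$, using $\mathord{\qr_2}\subseteq\mathord{\qrL}$ for the reverse inclusion in the composite case. The only cosmetic difference is that you present the two cases in the opposite order and spell out a few steps (e.g., why $x \qrn_1 u$ forces $x \qr_2 u$) more explicitly than the paper does.
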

\begin{proof}
Let \(u \in Σ^*\) be such that \(\cl_{\qr_1}(u)\) is \(L\)-composite.
Then \(u^{-1}L = \bigcup_{x \in Σ^*, x \qrn_1 u} x^{-1}L\).
On the other hand, since \(\qr_2\) is a right \(L\)-preserving quasiorder, we have that \(\mathord{\qr_2} \subseteq \mathord{\qrL}\), as shown de Luca and Varricchio~\cite{de1994well}.
Therefore \(u^{-1}L \supseteq \bigcup_{x \in Σ^*, x \qrn_2 u} x^{-1}L\). 
There are now two possibilities:
\begin{itemize}
\item For all \(x \in Σ^*\) such that \(x \qrn_1 u\) we have that \(x \qrn_2 u\).
In that case we have that \(u^{-1}L = \bigcup_{x\inΣ^*, \; x \qrn_2 u} x^{-1}L\), hence \(\cl_{\qr_2}(u)\) is \(L\)-composite.
\item There exists \(x \in Σ^*\) such that \(x \qrn_1 u\), hence \(x \qr_2 u\), but \(x \not\qrn_2u\).
In that case, it follows that \(\cl_{\qr_2}(x) = \cl_{\qr_2}(u)\).\qedhere
\end{itemize}
\end{proof}

The following lemma allows us to conclude that \(\cF{r}(L)\) is invariant to our residualization operation \(\cG{r}\).
\begin{lemma}
\label{lemma:qrHEqualqrifHsc}
Let \(L\) be a regular language and let \(\qr\) be a right quasiorder such that \(\cl_{\qr}(L) = L\).
Let \(\cH = \cH^{r}(\qr,L)\).
If \(\cH\) is a strongly consistent RFA then \(\mathord{\qr_{\cH}} = \mathord{\qr}\).
\end{lemma}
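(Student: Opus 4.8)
The plan is to compute the left languages of the states of $\cH$ and feed them into the closure formula of Lemma~\ref{lemma:positive_atoms}, applied to the RFA $\cH$ (which satisfies $\lang{\cH}=L$ by Lemma~\ref{lemma: HrGeneratesL}). Write $\cH = (Q,\Sigma,\delta,I,F)$, so $Q$ is the set of $L$-prime principals of $\qr$. First I would record a convenient reformulation of the transitions: by Lemma~\ref{lemma:QObwComplete} one has $\cl_{\qr}(w)\,a \subseteq \cl_{\qr}(v) \Leftrightarrow v \qr wa$ (direction $\Leftarrow$ uses $\cl_{\qr}(w)a \subseteq \cl_{\qr}(wa)$, direction $\Rightarrow$ uses $wa \in \cl_{\qr}(w)a$). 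Hence $\delta(\cl_{\qr}(w),a) = \{\cl_{\qr}(v) \in Q \mid v \qr wa\}$ and $I = \{\cl_{\qr}(w)\in Q \mid w \qr \varepsilon\}$. A straightforward induction on $|u|$ then gives the soundness bound $\post^{\cH}_u(I) \subseteq \{\cl_{\qr}(w) \in Q \mid w \qr u\}$, and therefore $W^{\cH}_{I,\cl_{\qr}(w)} \subseteq \cl_{\qr}(w)$ for every state.

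The core of the argument is the reverse inclusion $\cl_{\qr}(w) \subseteq W^{\cH}_{I,\cl_{\qr}(w)}$, i.e.\ that every word $u$ with $w \qr u$ drives $\cH$ into the state $\cl_{\qr}(w)$; this is exactly where strong consistency is needed. Recall (from the proof of Lemma~\ref{lemma: HrGeneratesL}) that the right language of the state $\cl_{\qr}(w)$ is $w^{-1}L$, so its characterizing words are precisely $\{x \mid x^{-1}L = w^{-1}L\}$, and strong consistency of $\cH$ says each such $x$ reaches $\cl_{\qr}(w)$. I would prove $\{\cl_{\qr}(w) \in Q \mid w \qr u\} \subseteq \post^{\cH}_u(I)$ by induction on $|u|$. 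Writing $u = u'a$, it suffices to find a state $\cl_{\qr}(w')$ with $w' \qr u'$ (reachable on $u'$ by the induction hypothesis) and $w \qr w'a$ (so that $\cl_{\qr}(w) \in \delta(\cl_{\qr}(w'),a)$). When $\cl_{\qr}(u')$ is $L$-prime one takes $w' = u'$; the delicate case is when $\cl_{\qr}(u')$ is composite, where the required prime predecessor is produced from strong consistency together with the decomposition of a composite principal as an intersection of the prime principals above it (Lemma~\ref{lemma:CompositeIntersection}). Combining the two inclusions yields $W^{\cH}_{I,\cl_{\qr}(w)} = \cl_{\qr}(w)$ for every state, and simultaneously $\post^{\cH}_u(I) = \{\cl_{\qr}(w)\in Q \mid w \qr u\}$.

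Finally I would apply Lemma~\ref{lemma:positive_atoms} to $\cH$:
\[ \cl_{\qr_{\cH}}(u) = \bigcap_{q \in \post^{\cH}_u(I)} W^{\cH}_{I,q} = \bigcap_{\cl_{\qr}(w)\in Q,\; w \qr u} \cl_{\qr}(w) = \cl_{\qr}(u), \]
where the last equality holds trivially when $\cl_{\qr}(u)$ is prime ($u$ itself is one of the indices) and otherwise is again the fact that a composite principal is the intersection of the prime principals above it. Since $\cl_{\qr_{\cH}}(u) = \cl_{\qr}(u)$ for every $u \in \Sigma^*$, and a quasiorder is determined by its family of principal closures ($u \qo v \Leftrightarrow v \in \cl_{\qo}(u)$), we conclude $\mathord{\qr_{\cH}} = \mathord{\qr}$.

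The main obstacle is the completeness step of the middle paragraph: showing that a prime principal is reached by \emph{every} word lying $\qr$-above it. The soundness bound and the formula from Lemma~\ref{lemma:positive_atoms} are routine; what is genuinely non-trivial — and false without the hypothesis — is that strong consistency forces reachability to be exact even through composite principals. I expect the composite case to require the careful combination of strong consistency with the prime-decomposition lemma sketched above, and this is the step I would spend the most care on.
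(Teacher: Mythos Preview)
Your overall target is the same as the paper's --- the identity \(\post_u^{\cH}(I)=\{\cl_{\qr}(x)\in Q\mid x\qr u\}\) --- but the route is different. You induct on \(|u|\) and then pass through Lemma~\ref{lemma:positive_atoms} to recover \(\cl_{\qr_{\cH}}(u)\). The paper does not induct: it uses strong consistency once, on the state \(\cl_{\qr}(u)\) itself. Since \(W^{\cH}_{\cl_{\qr}(u),F}=u^{-1}L\), the word \(u\) is characterizing for that state, so strong consistency gives a run on \(u\) from \(I\) to \(\cl_{\qr}(u)\). Writing \(u=za\), the last transition \(\cl_{\qr}(y)\to\cl_{\qr}(u)\) encodes \(u\qr ya\); for any \(x\qr u\) one then has \(x\qr ya\), so \(\cl_{\qr}(y)\to\cl_{\qr}(x)\) is also a transition and \(\cl_{\qr}(x)\in\post_u^{\cH}(I)\). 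This rerouting of the final edge replaces your entire inductive completeness step and never appeals to any ``composite \(=\) intersection of primes'' fact. From the identity, the paper reads off \(\post_u^{\cH}(I)\subseteq\post_v^{\cH}(I)\Leftrightarrow u\qr v\) directly, without Lemma~\ref{lemma:positive_atoms}.

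There is a genuine gap in your plan. Lemma~\ref{lemma:CompositeIntersection} is stated and proved only for the Nerode quasiorder \(\qrL\), and its proof hinges on the specific equivalence \(x\qrL w\Leftrightarrow x^{-1}L\subseteq w^{-1}L\): if \(x^{-1}L\subseteq w^{-1}L\) for every \(x\qrn_L u\), then \(u^{-1}L=\bigcup_{x\qrn_L u}x^{-1}L\subseteq w^{-1}L\), i.e.\ \(u\qrL w\). For an arbitrary right \(L\)-preserving quasiorder \(\qr\) there is no such bridge between the union of quotients in Definition~\ref{def:CompositeClosed} and the relation \(\qr\); knowing \(x\qr w\) for every prime \(x\qrn u\) does not force \(u\qr w\). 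You invoke this ``composite principal equals the intersection of the prime principals above it'' twice --- in the delicate case of your inductive step and in the last displayed equality --- and in both places it is unjustified at the stated level of generality. This is precisely the obstacle you flagged as the hard part, and your proposed tool does not resolve it; the paper's rerouting argument is the idea that sidesteps it.
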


\begin{proof}
Let \(\cN = (Q, Σ, δ, I, F)\) and \(\cH = (\wt{Q}, Σ, \wt{δ}, \wt{I}, \wt{F})\).
As shown by Lemma~\ref{lemma: HrGeneratesL}, \(\cH = \cH^{r}(\qr, L)\) is an RFA accepting \(L\), hence each state of \(\cH\) is an \(L\)-prime principal \(\cl_{\qr}(u)\) whose right language is the quotient \(u^{-1}L\) for some \(u \in Σ^*\).

Observe that \(\mathord{\qr_{\cH}} = \mathord{\qr}\) holds if{}f for every \(u,v \in Σ^*, \; \post_u^{\cH}(\wt{I}) \subseteq \post_v^{\cH}(\wt{I}) \Lra u \qr v\).
Next we show that:
\begin{equation}%
\label{eq:PostQuotient}
\post_u^{\cH}(\wt{I}) = \{\cl_{\qr}(x) \in \wt{Q} \mid x \qr u\} \enspace .
\end{equation}
First, we prove that \(\post_u^{\cH}(\wt{I}) \subseteq \{\cl_{\qr}(x) \in \wt{Q} \mid x \qr u\}\). Let \(\cl\) denote \(\cl_{\qr}\).
\begin{align*}
\cl(x) \in \post_u^{\cH}(\wt{I}) & \Lra \quad \text{[By definition of \(\post_u^{\cH}(\wt{I})\)]}\\
\exists \cl(x_0) \in \wt{I}, \; u \in W^{\cH}_{\cl(x_0),\cl(x)} & \Ra \quad \text{[By Definition~\ref{def:right-const:qo}]} \\
\exists \cl(x_0) \in \wt{Q}, \;\varepsilon \in \cl(x_0) \land \cl(x_0)\, u \subseteq \cl(x) & \Lra \quad \text{[By definition of \(\cl\)]} \\
\exists \cl(x_0) \in \wt{Q}, \; x_0 \qr \varepsilon \land x \qr x_0u & \Ra \quad \text{[Since \(x_0 \qr \varepsilon \Ra x_0 u \qr u\)]} \\
x \qr u \enspace .
\end{align*}

We now prove the reverse inclusion. 
Let \(\cl(u), \cl(x) \in \wt{Q}\) be such that \(x \qr u\).
Then,
\begin{align*}
\cl(u) \in \wt{Q} & \Ra \; \text{[By Lemma~\ref{lemma: HrGeneratesL}]}\\
W^{\cH}_{\cl(u),F} = u^{-1}L & \Ra \; \text{[Since \(\cH\) is strly. consistent]} \\
u \in W_{I,\cl(u)}^{\cH} & \Ra \; \text{[By def. \(W_{S,T}^{\cH}\) with \(u = za\)]} \\
\exists \cl(y) \in \wt{Q}, \cl(u_0) \in \wt{I},\; z \in W_{\cl(u_0), q'}  \land a \in W_{\cl(y), \cl(u)}& \Ra \; \text{[By Definition~\ref{def:right-const:qo}]} \\
\exists \cl(y) \in \wt{Q}, \cl(u_0) \in \wt{I},\;\cl(u_0) z \subseteq \cl(y) \land \cl(y) a \subseteq \cl(u)& \Ra \; \text{[By definition of \(\cl = \cl_{\qr}\)]} \\
\exists \cl(y) \in \wt{Q}, \cl(u_0) \in \wt{I},\;\cl(u_0) z \subseteq \cl(y) \land u\qr y a & \Ra \; \text{[Since \(x \qr u\)]} \\
\exists \cl(y) \in \wt{Q}, \cl(u_0) \in \wt{I},\;\cl(u_0) z \subseteq \cl(y)\land x \qr y a & \Ra \; \text{[By definition of \(\cl = \cl_{\qr}\)]} \\
\exists \cl(y) \in \wt{Q}, \cl(u_0) \in \wt{I},\;\cl(u_0) z \subseteq \cl(y)\land \cl(y)a \subseteq \cl(x) & \Ra \; \text{[By definition of \(\post_u^{\cH}(\wt{I})\)]} \\
\cl(x) \in \post_u(I) \enspace .
\end{align*}

It follows from Equation~\eqref{eq:PostQuotient} that \(\post_u^{\cH}(I) \subseteq \post^{\cH}_v(I) \Lra u \qr v\), i.e., \(\mathord{\qr_{\cH}} = \mathord{\qr}\).\qedhere
\end{proof}
The following lemma shows that, if we consider the right Nerode's quasiorder \(\qrL\) then composite principals can be described as intersections of prime principals.

\begin{lemma}\label{lemma:CompositeIntersection}
Let \(\cN = (Q, Σ, δ, I, F)\) be an NFA with \(\lang{\cN} = L\).
Then,
\begin{equation}\label{eq:rhoCompRaIntersection}
u^{-1}L = \hspace{-10pt}\bigcup_{x\in\Sigma^*,\; x \qrn_L u}\hspace{-10pt} x^{-1}L \implies \cl_{\qrL}(u) = \hspace{-10pt}\bigcap_{x\in\Sigma^*,\; x \qrn_L u}\hspace{-10pt} \cl_{\qrL}(x) \enspace .
\end{equation}
\end{lemma}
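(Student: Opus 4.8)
The plan is to unfold the definition of the closure operator and establish the claimed equality by double inclusion, noting at the outset that the automaton $\cN$ plays no role whatsoever: the statement depends only on $L = \lang{\cN}$. First I would record the two facts used throughout. By definition of the closure of a principal, $\cl_{\qrL}(u) = \{w \in \Sigma^* \mid u \qrL w\} = \{w \mid u^{-1}L \subseteq w^{-1}L\}$; and by definition of the strict order, $x \qrn_L u$ holds exactly when $x^{-1}L \subsetneq u^{-1}L$. With these in hand the hypothesis reads precisely ``$\cl_{\qrL}(u)$ is $L$-composite''.

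For the inclusion $\cl_{\qrL}(u) \subseteq \bigcap_{x \qrn_L u}\cl_{\qrL}(x)$ I would argue purely by transitivity, without invoking the hypothesis. If $w \in \cl_{\qrL}(u)$ then $u^{-1}L \subseteq w^{-1}L$; and for every $x$ with $x \qrn_L u$ we have $x^{-1}L \subsetneq u^{-1}L \subseteq w^{-1}L$, whence $x \qrL w$, i.e. $w \in \cl_{\qrL}(x)$. As this holds for every such $x$, the element $w$ lies in the intersection. (Equivalently, this is just the antitonicity of principal closures: $x \qrL u$ forces $\cl_{\qrL}(u) \subseteq \cl_{\qrL}(x)$.)

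The reverse inclusion is the only step where the composite hypothesis enters. Take $w \in \bigcap_{x \qrn_L u}\cl_{\qrL}(x)$; unfolding membership, this says $x^{-1}L \subseteq w^{-1}L$ for every $x$ with $x \qrn_L u$. Taking the union over all such $x$ yields $\bigcup_{x \qrn_L u} x^{-1}L \subseteq w^{-1}L$. I would then invoke the hypothesis $u^{-1}L = \bigcup_{x \qrn_L u} x^{-1}L$ to rewrite the left-hand side as $u^{-1}L$, obtaining $u^{-1}L \subseteq w^{-1}L$, i.e. $u \qrL w$, i.e. $w \in \cl_{\qrL}(u)$.

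I expect no genuine obstacle: once the closure is re-expressed in terms of quotient inclusions, both directions are short set-theoretic manipulations, and the composite hypothesis is used in exactly one place---to convert the union of the strictly smaller quotients back into $u^{-1}L$. The only point demanding care is keeping strict versus non-strict inclusions straight when translating between $\qrn_L$ and $\subsetneq$, and observing that the first inclusion holds unconditionally while the hypothesis is needed solely for the converse.
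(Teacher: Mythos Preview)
Your proposal is correct and follows essentially the same approach as the paper's own proof: both establish the first inclusion by antitonicity of principals (without the hypothesis) and the reverse inclusion by unfolding membership in the intersection, taking the union of the strictly smaller quotients, and invoking the composite hypothesis once to identify that union with $u^{-1}L$. Your additional observation that the automaton $\cN$ plays no role is accurate.
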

\begin{proof}
Observe that the inclusion \(\cl_{\qrL}(u) \subseteq \bigcap_{x\in\Sigma^*, x \qrn_L u} \cl_{\qrL}(x)\) always holds since\linebreak \(x \qrn_L u \Ra \cl_{\qrL}(u) \subseteq \cl_{\qrL}(x)\).
Next, we show the reverse inclusion.

Assume that the left hand side of Equation~\eqref{eq:rhoCompRaIntersection} holds and let \(w \in \bigcap_{x\in\Sigma^*, x \qrn_L u} \cl_{\qrL}(x)\).
Then, by definition of intersection and \(\cl_{\qrL}\), we have that \(x \qrL w\) for every \(x \in Σ^*\) such that \(x \qrn_L u\), i.e., \(x^{-1}L \subseteq w^{-1}L\) for every word \(x \in Σ^*\) such that \(x^{-1}L \subsetneq u^{-1}L\).
Since, by hypothesis, \(u^{-1}L = \bigcup_{x\in\Sigma^*,\; x \qrn_L u} x^{-1}L\), it follows that \(u^{-1}L \subseteq w^{-1}L\) and, therefore, \(w \in \cl_{\qr}(u)\).
We conclude that \(\bigcap_{x\in\Sigma^*, x \qrn_L u} \cl_{\qrL}(x) \subseteq \cl_{\qrL}(u)\).
\end{proof}

Finally, we show algorithm NL\(^\qo\), the quasiorder-based version of the algorithm NL\(^*\).

\RemoveAlgoNumber
\begin{algorithm}[t]
\SetAlgorithmName{Algorithm NL\(^{\qo}\)}{}
\SetSideCommentRight%
\caption{Quasiorder-based version of NL\(^*\)}\label{alg:NL-star}
\KwData{A \emph{Teacher} that answers membership queries in \(L\)}
\KwData{An \emph{Oracle} that answers equivalence queries between the language accepted by an RFA and \(L\)}
\KwResult{The canonical RFA for the language \(L\).}
\(\cP, \cS := \{\varepsilon\}\)\;
\While{True\label{step:teacher-yes}}{%
\label{step:loop}
	\While{\(\qA\) not closed or consistent:}{
		\If{\(\qA\) is not closed}{
			Find \(u \in \Pref, a \in \Sigma\) with \(\cl_{\qA}(u)\) \(L_{\Suf}\)-prime for \(\Pref\) and \(\forall v \in \Pref, \; \cl_{\qA}(u) \neq \cl_{\qA}(v)\)\;
			Let \(\cP := \cP \cup \{ua\}\)\;
		}
		\If{\(\qA\) is not consistent}{
			Find \(u, v \in \Pref, a \in \Sigma\) with \(u \qA v\) s.t. \(u a \not\qA v a  \)\;
			Find \(x \in ((ua)^{-1}{L}\cap \Suf) \cap ((va)^{-1}{L} \cap \Suf)^c\) \;
			Let \(\cS := \cS \cup \{ax\}\)\;
		}
	}
	\label{step:DFA-const}Build \(\cL(\qA, \Pref)\)\;
	Ask the \emph{Oracle} whether \(L = \lang{\cL(\qA,\Pref)}\)\;
	\If{the \emph{Oracle} replies with a counterexample \(w\)}{
	Let \(\Suf := \Suf \cup \{x \in \Sigma^* \mid w = w'x \text{ with }w \in \Suf, w' \in \Sigma^*\}\)\;
	}\Else{\Return{\(\cL(\qA, \Pref)\)}\;}
} 
\end{algorithm}

\section{Deferred Proofs}%
\label{sec:proofs}

\CongruencePbwComplete*
\begin{proof}\hfill
\begin{enumerate}
\item \(\qr\) is a right quasiorder if{}f \(\cl_{\qr}(v)u \subseteq \cl_{\qr}(vu)\), for all \(u,v \in \Sigma^*\).

(\(\Ra\)).
Let \(x \in \cl_{\qr}(v)u\).
Then, \(x = \tilde{v}u\) with \(v \qr \tilde{v}\). 
Since \(\qr\) is a right quasiorder and \(v \qr \tilde{v}\) then \(vu \qr \tilde{v}u\).
Therefore \(x \in \cl_{\qr}(vu)\).
\\(\(\La\)).
Assume that for each \(u,v \in \Sigma^*\) and \(\tilde{v} \in \cl_{\qr}(v)\) we have that \(\tilde{v}u \in \cl_{\qr}(vu)\).
Then, \(v \qr \tilde{v} \Ra vu \qr \tilde{v}u\).

\item\(\ql\) is a left quasiorder if{}f \(u\cl_{\ql}(v) \subseteq \cl_{\ql}(uv)\), for all \(u,v \in \Sigma^*\).

(\(\Ra\)).
Let \(x \in u\cl_{\ql}(v)\).
Then, \(x = u\tilde{v}\) with \(v \ql \tilde{v}\). 
Since \(\ql\) is a left quasiorder and \(v \ql \tilde{v}\) then \(uv \ql u\tilde{v}\).
Therefore \(x \in \cl_{\ql}(uv)\).
\\(\(\La\)).
Assume that for each \(u,v \in \Sigma^*\) and \(\tilde{v} \in \cl_{\ql}(v)\) we have that \(u\tilde{v} \in \cl_{\ql}(uv)\).
Then \(v \ql \tilde{v} \Ra uv \ql u\tilde{v}\).\qedhere
\end{enumerate}
\end{proof}

\HrGeneratesL*
\begin{proof}
To simplify the notation, we denote \(\cl_{\qr}\), the closure induced by the quasiorder \(\qr\), simply by \(\cl\).
Let \(\mathcal{H} = \cH^{r}(\qr, L) = (Q, \Sigma, \delta, I, F)\).
We first show that \(\mathcal{H}\) is an RFA.
\begin{equation}\label{eq:right-langsUco}
W_{\cl(u), F}^{\mathcal{H}} = u^{-1}L, \quad \text{for each } \cl(u)\in Q \enspace .
\end{equation}

Let us prove that \(w \in u^{-1}L \Ra w \in W_{\cl(u), F}^{\mathcal{H}}\).
We proceed by induction on \(|w|\).
\begin{itemize}
\item \emph{Base case:}
Assume \(w = \varepsilon\).
Then, \(\varepsilon \in u^{-1}L \Ra u \in L \Ra \cl(u) \in F \Ra \varepsilon \in W_{\cl(u), F}^{\mathcal{H}}\).

\item \emph{Inductive step:}
Assume that the hypothesis holds for each \(x \in \Sigma^*\) with \(\len{x} \leq n ~(n \geq 1)\), and let \(w \in \Sigma^*\) be such that \(\len{w} = n{+}1\).
Then \(w = a  x\) with \(\len{x} = n\) and \(a \in Σ\).
\begin{align*}
ax \in u^{-1} L & \Ra \quad \text{[By definition of quotient]} \\
x \in (ua)^{-1}L & \Ra \quad \\
\hspace{-15pt}\span \text{[By Defs.~\ref{def:CompositeClosed} and~\ref{def:right-const:qo}, \(\cl(ua)\) is \(L\)-prime (so \(z \ud ua\)) or \((ua)^{-1}L = \hspace{-5pt}\bigcup_{x_i \qrn ua}\hspace{-5pt}x_i^{-1}L\) (so \(z \ud x_i\))]}\\
\exists \cl(z) \in Q, \; x \in z^{-1}L \land \cl(ua) \subseteq \cl(z) & \Ra \quad \text{[By I.H., \(\cl(u) a \subseteq \cl(ua)\) and Def.~\ref{def:right-const:qo}]} \\
x \in W_{\cl(z), F}^{\mathcal{H}} \land \cl(z) \in δ(\cl(u), a) & \Ra \quad \text{[By definition of \(W_{S,T}\)]}\\
ax \in W_{\cl(u), F}^{\mathcal{H}} \enspace .
\end{align*}
\end{itemize}

We now prove the other side of the implication, i.e., \(w \in W_{\cl(u), F}^{\mathcal{H}} \Ra w \in u^{-1}L\).
\begin{itemize}
\item \emph{Base case:}
Let \(w = \varepsilon\).
By Definition~\ref{def:right-const:qo}, \(\varepsilon \in W_{\cl(u), F}^{\mathcal{H}} \Ra \exists \cl(x) \in Q, \; x \in L \land \cl(u)  \varepsilon \subseteq \cl(x)\).
Since \(\cl(L) = L\), we have that \(u\,\varepsilon \in L \), hence \(\varepsilon \in u^{-1}L\).

\item \emph{Inductive step:}
Assume the hypothesis holds for each \(x \in \Sigma^*\) with \(\len{x} \leq n  ~(n \geq 1)\) and let \(w \in \Sigma^*\) be such that \(\len{w} = n{+}1\).
Then \(w = a  x\) with \(\len{x} = n\) and \(a \in Σ\).
\begin{align*}
ax \in W_{\cl(u), F}^{\mathcal{H}} & \Ra \quad \text{[By Definition~\ref{def:right-const:qo}]} \\
x \in W_{\cl(y), F}^{\mathcal{H}} \land \cl(u) a \subseteq \cl(y)  & \Ra \quad \text{[By I.H. and since \(\cl\) is induced by \(\qr\)]} \\
x \in y^{-1}L \land y \qr ua & \Ra \quad \text{[Since \(u \qr v \Ra u^{-1}L \subseteq v^{-1}L\)~\cite{de1994well}\,]} \\
x \in y^{-1}L \land y^{-1}L \subseteq (ua)^{-1}L & \Ra \quad \text{[Since \(x \in (ua)^{-1} L \Ra ax \in u^{-1}L\)]} \\
ax \in u^{-1}L \enspace .
\end{align*}
\end{itemize}

We have shown that \(\mathcal{H}\) is an RFA. 
Finally, we show that \(\lang{\mathcal{H}} = L\). 
First note that,
\[\lang{\mathcal{H}} = \bigcup_{\cl(u) \in I} W_{\cl(u), F}^{\mathcal{H}} = \bigcup_{\cl(u) \in I} u^{-1}L \enspace ,\]
where the first equality holds by definition of \(\lang{\mathcal{H}}\) and the second by Equation~\eqref{eq:right-langsUco}.
On one hand, we have that \(\bigcup_{\cl(u) \in I} u^{-1}L \subseteq L\) since, by Definition~\ref{def:right-const:qo}, \(\varepsilon \in \cl(u)\), for each \(\cl(u) \in I\), and therefore  \(u \qr \varepsilon\) which, as shown by de Luca and Varricchio~\cite{de1994well}, implies that \(u^{-1}L \subseteq \varepsilon^{-1}L = L\).
Let us show that \(L \subseteq \bigcup_{\cl(u) \in I} u^{-1}L\).
First, let us assume that \(\cl(\varepsilon) \in I\).
Then, 
\[L = \varepsilon^{-1}L \subseteq \bigcup_{\cl(u) \in I} u^{-1}L \enspace .\]
Now suppose that \(\cl(\varepsilon)\notin I\), i.e., \(\cl(\varepsilon)\) is \(L\)-composite.
Then 
\[ L = \varepsilon^{-1}L = \bigcup_{u \qrn \varepsilon} u^{-1}L = \bigcup_{\cl(u) \in I} u^{-1}L \enspace .\]
where the last equality follows from \(\cl(u) \in I \Lra \varepsilon \in \cl(u)\).
\end{proof}

\HlgeneratesL*
\begin{proof}
To simplify the notation we denote \(\cl_{\ql}\), the closure induced by the quasiorder \(\ql\), simply by \(\cl\).
Let \(\mathcal{H} = \cH^{\ell}(\ql, L) = (Q, \Sigma, \delta, I, F)\).
We first show that \(\mathcal{H}\) is a co-RFA.
\begin{equation}\label{eq:left-langsUco}
W_{I, \cl(u)}^{\mathcal{H}} = Lu^{-1}, \quad \text{for each } \cl(u)\in Q \enspace .
\end{equation}

Let us prove that \(w \in Lu^{-1} \Ra w \in W_{I, \cl(u)}^{\mathcal{H}}\).
We proceed by induction.
\begin{itemize}
\item \emph{Base case:}
Let \(w = \varepsilon\).
Then, \(\varepsilon \in Lu^{-1} \Ra u \in L \Ra \cl(u) \in I \Ra \varepsilon \in W_{I, \cl(u)}^{\mathcal{H}}\).

\item \emph{Inductive step:}
Assume the hypothesis holds for all \(x \in \Sigma^*\) with \(\len{x} \leq n  ~(n \geq 1)\) and let \(w \in \Sigma^*\) be such that \(\len{w} = n{+}1\).
Then \(w = x a\) with \(\len{x} = n\) and \(a \in Σ\).
\begin{align*}
xa \in Lu^{-1} & \Ra \quad \text{[By definition of quotient]} \\
x \in L(au)^{-1} & \Ra \quad  \\
\hspace{-15pt}\span \text{[By Defs.~\ref{def:CompositeClosed} and~\ref{def:left-const:qo}, \(\cl(au)\) is \(L\)-prime (so \(z \ud au\)) or \(L(au)^{-1} = \hspace{-5pt}\bigcup_{x_i \qln au}\hspace{-5pt}Lx_i^{-1}\) (so \(z \ud x_i\))]}\\
\exists \cl(z) \in Q, \; x \in L z^{-1} \land \cl(au) \subseteq \cl(z) & \Ra \quad \text{[By I.H., \(a \cl(u) \subseteq \cl(au)\) and Def.~\ref{def:left-const:qo}]} \\
x \in W_{I, \cl(z)}^{\mathcal{H}} \land \cl(u) \in δ(\cl(z), a) & \Ra \quad \text{[By definition of \(W_{S,T}\)]}\\
xa \in W_{I, \cl(u)}^{\mathcal{H}} \enspace .
\end{align*}
\end{itemize}

We now prove the other side of the implication, i.e., \(w \in W_{I, \cl(u)}^{\mathcal{H}} \Ra w \in Lu^{-1}\).

\begin{itemize}
\item \emph{Base case:}
Let \(w = \varepsilon\).
Then \(\varepsilon \in W_{I, \cl(u)}^{\mathcal{H}} \Ra \exists \cl(x) \in Q,\; x \in L \land \varepsilon  \cl(u) \subseteq \cl(x)\).
Since \(\cl(L) = L\), we have that \( \varepsilon  u \in L\), hence \(\varepsilon \in Lu^{-1}\).

\item \emph{Inductive step:}
Assume the hypothesis holds for all \(x \in \Sigma^*\) with \(\len{x} \leq n\) and let \(w \in \Sigma^*\) be such that \(\len{w} = n{+}1\).
Then \(w = x a\) with \(\len{x} = n\) and \(a \in Σ\).
\begin{align*}
xa \in W_{I, \cl(u)}^{\mathcal{H}} & \Ra \quad \text{[By Definition~\ref{def:left-const:qo}]} \\
a \cl(u) \subseteq \cl(y) \land x \in W_{I, \cl(y)}^{\mathcal{H}} & \Ra \quad \text{[By I.H. and since \(\cl\) is induced by \(\ql\)]} \\
y \ql au \land x \in Ly^{-1} & \Ra \quad \text{[Since \(u \ql v \Ra Lu^{-1} \subseteq Lv^{-1}\)~\cite{de1994well}\,]} \\ 
Ly^{-1} \subseteq L(au)^{-1} \land x \in Ly^{-1} & \Ra \quad \text{[Since \(x \in L(au)^{-1} \Ra xa \in Lu^{-1}\)]} \\
xa \in u^{-1}L \enspace .
\end{align*}
\end{itemize}

We have shown that \(\mathcal{H}\) is a co-RFA.
Finally, we show that \(\lang{\mathcal{H}} = L\). 
First note that,
\[\lang{\mathcal{H}} = \bigcup_{\cl(u) \in F} W_{I, \cl(u)}^{\mathcal{H}} = \bigcup_{\cl(u) \in F} Lu^{-1} \enspace ,\]
where the first equality holds by definition of \(\lang{\mathcal{H}}\) and the second by Equation~\eqref{eq:left-langsUco}.
On one hand, we have that \(\bigcup_{\cl(u) \in F} Lu^{-1} \subseteq L\) since, by Definition~\ref{def:left-const:qo}, \(\varepsilon \in \cl(u)\), for each \(\cl(u) \in F\), and therefore  \(u \ql \varepsilon\) which, as shown by de Luca and Varricchio~\cite{de1994well}, implies that \(Lu^{-1} \subseteq L\varepsilon^{-1} = L\).
Let us show that \(L \subseteq \bigcup_{\cl(u) \in F} Lu^{-1}\).
First, suppose that \(\cl(\varepsilon) \in F\).
Then, 
\[L = L\varepsilon^{-1} \subseteq \bigcup_{\cl(u) \in F} Lu^{-1} \enspace .\]
Now suppose that \(\cl(\varepsilon)\notin F\), i.e., \(\cl(\varepsilon)\) is \(L\)-composite.
Then
\[ L = L\varepsilon^{-1} = \bigcup_{u \qln \varepsilon} Lu^{-1} = \bigcup_{\cl(u) \in F} u^{-1}L\enspace .\]
where the last equality follows from \(\cl(u) \in F \Lra \varepsilon \in \cl(u)\).
\end{proof}

\leftRightReverse*
\begin{proof}
Let \(\cH^{r}(\qr, L) = (Q, \Sigma, \delta, I, F)\) and \((\cH^{\ell}(\ql, L^R))^R = (\wt{Q}, \Sigma, \wt{\delta}, \wt{I}, \wt{F})\).
We will show that \(\cH^{r}(\qr, L)\) is isomorphic to \((\cH^{\ell}(\ql, L^R))^R\).

Let \(\varphi: Q \rightarrow \wt{Q}\) be a mapping assigning to each state \(\cl_{\qr}(u) \in Q\) with \(u \in \Sigma^*\), the state \(\cl_{\ql}(u^R) \in \wt{Q}\).
We show that \(\varphi\) is an NFA isomorphism between \(\cH^{r}(\qr, L)\) and \((\cH^{\ell}(\ql, L^R))^R\).
Observe that:
\begin{align*}
u^{-1}L = \bigcup_{x \qrn u}x^{-1}L & \Lra \quad \text{[Since \(\big(\bigcup S_i\big)^R = \bigcup S_i^R\)]} \\
(u^{-1}L)^R = \bigcup_{x \qrn u}(x^{-1}L)^R & \Lra \quad \text{[Since \((u^{-1}L)^R = L^R(u^R)^{-1} \)]}\\
L^R(u^R)^{-1} = \bigcup_{x \qrn u} L^R(x^R)^{-1} & \Lra \quad \text{[By hypothesis, \(u \qrn v \Lra u^R \qrn v^R\)]}\\
L^R(u^R)^{-1} = \bigcup_{x^R \qln u^R} L^R(x^R)^{-1} \enspace . 
\end{align*}

It follows that \(\cl_{\qr}(u)\) is \(L\)-composite if{}f \(\cl_{\ql}(u^R)\) is \(L^R\)-composite, hence \(\varphi(Q) = \wt{Q}\).

Since \(\varepsilon \in \cl_{\qr}(u) \Lra u \qr \varepsilon \Lra u^r \ql \varepsilon \Lra \varepsilon \in \cl_{\ql}(u^R)\), we have that \(\cl_{\qr}(u)\) is an initial
state of \(\cH^{r}(\qr, L)\) if{}f \(\cl_{\ql}(u^R)\) is a
final state of \(\cH^{\ell}(\ql, L^R)\), i.e.\ an initial state of \((\cH^{\ell}(\ql, L^R))^R\).
Therefore, \(\varphi(I) = \wt{I}\).

Since \(\cl_{\qr}(u) \subseteq L \Lra u \in L \Lra u^r \in L^R\), we have that \(\cl_{\qr}(u)\) is a final state of \(\cH^{r}(\qr, L)\) if{}f \(\cl_{\ql}(u^R)\) is an initial state of \(\cH^{\ell}(\ql, L^R)\), i.e.\ a final state of \((\cH^{\ell}(\ql, L^R))^R\).
Therefore, \(\varphi(F) = \wt{F}\).

It remains to show that \(q' \in \delta(q, a)\Lra \varphi(q) \in \wt{\delta}(\varphi(q'),a)\), for all \(q, q' \in Q\) and \(a \in \Sigma\).
Assume that \(q = \cl_{\qr}(u)\) for some \(u \in \Sigma^*\), \(q' = \cl_{\qr}(v)\) for some \(v \in Σ^*\) and \(q' \in \delta(q, a)\) with \(a \in \Sigma\).
Then,
\begin{align*}
\cl_{\qr}(v) \in δ(\cl_{\qr}(u), a) & \Lra \quad \text{[By Definition~\ref{def:right-const:qo}]} \\
\cl_{\qr}(u)a \subseteq \cl_{\qr}(v) & \Lra \quad \text{[By definition of \(\cl_{\qr}\) and Lemma~\ref{lemma:QObwComplete}]} \\
v \qr ua & \Lra \quad \text{[Since \(u \qr v \Lra u^R \ql v^R\) and \((ua)^R = au^R\)]} \\
v^r \ql au^R & \Lra \quad \text{[By definition of \(\cl_{\ql}\) and Lemma~\ref{lemma:QObwComplete}]} \\
a\cl_{\ql}(u^R) \subseteq \cl_{\ql}(v^R) & \Lra \quad \text{[By Definition~\ref{def:left-const:qo} and reverse automata]} \\
\cl_{\ql}(u^R) \in \wt{δ}(\cl_{\ql}(v^R), a) & \Lra \quad \text{[Definition of \(q, q'\) and \(\varphi\)]} \\
\varphi(q) \in \wt{δ}(\varphi(q'), a) \enspace .\tag*{\qedhere}
\end{align*}
\end{proof}

\numLPrimePrincipals*
\begin{proof}
We proceed by showing that for every \(L\)-prime \(\cl_{\qo_2}(u)\) there exists an \(L\)-prime \(\cl_{\qo_1}(x)\) such that \(\cl_{\qo_2}(x) = \cl_{\qo_2}(u)\).
Clearly, this entails that there are, at least, as many \(L\)-prime principals for \(\qo_1\) as there are for \(\qo_2\).

Let $\cl_{\qo_2}(u)$ be $L$-prime.

If \(\cl_{\qo_1}(u)\) is \(L\)-prime, we are done.
Otherwise, by Lemma~\ref{lemma:numprincipals}, we have that there exists $x \prec_1 u$ such that 
$\cl_{\qo_2}(u) = \cl_{\qo_2}(x)$.

We repeat the reasoning with $x$. If $\cl_{\qo_1}(x)$ is $L$-prime, we are done. Otherwise, there exists $x_1 \prec_1 x$ such that 
$\cl_{\qo_2}(u) = \cl_{\qo_2}(x) = \cl_{\qo_2}(x_1)$.

Since \(\qo_1\) induces finitely many principals, there are no infinite strictly descending chains and, therefore, there exists \(x_n\) such that \(\cl_{\qo_2}(u)=\cl_{\qo_2}(x)=\cl_{\qo_2}(x_1)=\ldots = \cl_{\qo_2}(x_n)\) and \(\cl_{\qo_1}(x_n)\) is \(L\)-prime.%
\end{proof}

\positiveAtoms*
\begin{proof}
We prove the lemma for the principals induced by \(\qrN\) and \(\qrL\).
The proofs for the left quasiorders are symmetric. 

For each \(u \in \Sigma^*\) we have that 
\begin{align*}
\cl_{\qrN}(u) &= \quad \text{[By definition of \(\cl_{\qrN}\)]}\\
\{v \in Σ^* \mid \post_u^{\cN}(I) \subseteq \post_v^{\cN}(I)\} & = \quad \text{[By definition of set inclusion]} \\
\{v \in \Sigma^* \mid  \forall q \in \post^{\cN}_u(I), q\in \post^{\cN}_v(I)\} & = \quad \text{[Since \(q \in \post^{\cN}_v(I) \Lra v \in W^{\cN}_{I,q}\)]}\\
\{v \in \Sigma^* \mid  \forall q \in \post^{\cN}_u(I), v \in W^{\cN}_{I,q}\} & = \quad \text{[By definition of intersection]}\\
\bigcap\textstyle{_{q \in \post^{\cN}_u(I)}} W_{I,q}^{\cN} & \enspace .
\end{align*}

On the other hand, 
\begin{align*}
v \in  \bigcap\textstyle{_{w \in \Sigma^*, \; w \in u^{-1}L}} Lw^{-1} & \Lra \quad \text{[By definition of intersection]} \\
\forall w \in Σ^*, \; w \in u^{-1}L \Ra v \in Lw^{-1} & \Lra \quad \text{[Since \(\forall x,y \in Σ^*, \; x \in Ly^{-1} \Lra y \in x^{-1}L\)]} \\
\forall w \in Σ^*, \; w \in u^{-1}L \Ra w \in v^{-1}L & \Lra \quad \text{[By definition of set inclusion]} \\
u^{-1}L \subseteq v^{-1}L & \Lra \quad \text{[By definition of \(\cl_{\qrL}(u)\)]} \\
v \in \cl_{\qrL}(u) \tag*{\qedhere}
\end{align*}
\end{proof}

\coResidualqrLqrN*
\begin{proof}
We have that \(\post_u^{\cN}(I) \subseteq \post_v^{\cN}(I) \Ra W_{\post_{u}^{\cN}(I),F}^{\cN} \subseteq W_{\post_{v}^{\cN}(I),F}^{\cN}\) holds for every NFA \(\cN\) and \(u, v \in Σ^*\).
Next we show that the reverse implication holds.
Let \(u, v\in Σ^*\) be such that \(W_{\post_{u}^{\cN}(I),F}^{\cN} \subseteq W_{\post_{v}^{\cN}(I),F}^{\cN}\).
Then,
\begin{align*}
q \in \post_{u}^{\cN}(I) & \Ra \quad \text{[Since \(\cN\) is co-RFA with no empty states]}\\
\exists x \in Σ^*, \; u \in W_{I,q} = Lx^{-1} & \Ra \quad \text{[Since \(u \in Lx^{-1} \Ra x \in u^{-1}L\)]} \\
x \in W_{\post_{u}^{\cN}(I), F} & \Ra \quad \text{[Since \(W_{\post_{u}^{\cN}(I),F}^{\cN} \subseteq W_{\post_{v}^{\cN}(I),F}^{\cN}\)]} \\
x \in W_{\post_{v}^{\cN}(I), F} & \Ra \quad \text{[By definition of \(W_{S,T}^{\cN}\)]} \\
\exists q' \in Q, \; x \in W_{q',F} \land v \in W_{I, q'} & \Ra \quad \text{[Since \(x \in W_{q',F} \Ra W_{I,q'} \subseteq Lx^{-1}\)]}\\
v \in  Lx^{-1} & \Ra \quad \text{[Since \(Lx^{-1} = W_{I,q}\)]} \\
v \in W_{I, q} & \Ra \quad \text{[By definition of \(\post_v^{\cN}(I)\)]} \\
q \in \post_v^{\cN}(I) \enspace .
\end{align*}
Therefore, \(W_{\post_{u}^{\cN}(I),F}^{\cN} \subseteq W_{\post_{v}^{\cN}(I),F}^{\cN} \Ra \post_u^{\cN}(I) \subseteq \post_v^{\cN}(I)\).

The proof for RFAs with no unrechable states and left quasiorders is symmetric.
\end{proof}

\theoremF*
\begin{proof}\hfill
\begin{alphaenumerate}
\item \(\lang{\cF{r}(L)} = \lang{\cF{\ell}(L)} = L = \lang{\cG{r}(\cN)} = \lang{\cG{\ell}(\cN)}\).%

By Definition~\ref{def:FG}, \(\cF{r}(L) = \cH^{r}(\qrL, L)\) and \(\cG{r}(\cN) = \cH^{r}(\qrN, L)\).
By Lemma \ref{lemma: HrGeneratesL}, \(\lang{\cH^{r}(\qrL, L)} = \lang{\cH^{r}(\qrN, L)} = L\).
Therefore, \(\lang{\cF{r}(L)} = \lang{\cG{r}(L)} = L\).
Similarly, it follows from Lemma~\ref{lemma:HlgeneratesL} that \(\lang{\cF{\ell}(L)} = \lang{\cG{\ell}(L)} =L\).

\item \(\cF{\ell}(L)\) is isomorphic to \((\cF{r}(L^R))^R\).

For every \(u,v \in \Sigma^*\):
\begin{align*}
u \qlL v & \Lra \quad \text{[By Definition~\ref{eq:Llanguage}]} \\
u^{-1}L \subseteq v^{-1}L & \Lra \quad \text{[\(A \subseteq B\Lra A^R \subseteq B^R\)]}\\
(u^{-1}L)^R \subseteq (v^{-1}L)^R & \Lra \quad\text{[Since \((u^{-1}L)^R = L^R(u^R)^{-1} \)]} \\
L^R(u^R)^{-1} \subseteq L^R(v^R)^{-1} & \Lra\quad\text{[By Definition~\ref{def:NerodeQO}]} \\
u^R \qr_{L^R} v^R \enspace . 
\end{align*}
Finally, it follows from Lemma~\ref{lemma:leftRightReverse} that \(\cF{\ell}(L)\) is isomorphic to \((\cF{r}(L^R))^R\).

\item \(\cG{\ell}(\cN)\) is isomorphic to \((\cG{r}(\cN^R))^R\).

For every \(u,v \in \Sigma^*\):
\begin{align*}
u \qlN v & \Lra \quad\text{[By Defintion~\ref{def:automataQO}]}\\
\pre_u^{\cN^R}(F) \subseteq \pre_v^{\cN^R}(F) & \Lra\quad \text{[Since \(q \in \pre^{\cN^R}_{x}(F)\) if{}f \(q \in \post^{\cN}_{x^R}(I) \)]}\\
\post_{u^R}^{\cN}(I) \subseteq \post_{v^R}^{\cN}(I) & \Lra \quad\text{[By Definition~\ref{def:automataQO}]}\\
u^R \qlN v^R \enspace .
\end{align*}
It follows from Lemma~\ref{lemma:leftRightReverse} that \(\cG{\ell}(\cN)\) is isomorphic to \(\cG{r}(\cN^R)^R\).

\item \(\cF{r}(L)\) is isomorphic to the canonical RFA for \(L\).

Let \(\cF{r}(L) = (Q, \Sigma, \delta, I, F)\)  and let \(\cC = (\widetilde{Q}, \Sigma, \eta, \widetilde{I}, \widetilde{F})\) be the canonical RFA for \(L\).
Let \(\varphi: \widetilde{Q} \rightarrow Q\) be the mapping assigning to each state \(\widetilde{q}_i \in \widetilde{Q}\) of the form \(u^{-1}L\), the state \(\cl_{\qrL}(u) \in Q\), with \(u \in \Sigma^*\).
We show that \(\varphi\) is an NFA isomorphism between \(\cC\) and \(\cF{r}(L)\).

Since \(u^{-1}L \subseteq L \Lra u \qrL \varepsilon \Lra \varepsilon \in \cl_{\qrL}(u)\), the initial states \(u^{-1}L \in \widetilde{I}\) are mapped to initial states \(\cl_{\qrL}(u)\) of \(\cC\).
Therefore, \(\varphi(\widetilde{I}) = I\).

On the other hand, since \(\varepsilon \in u^{-1}L \Lra u \in L\), each final state \(u^{-1}L \in \widetilde{F}\) is mapped to a final state \(\cl_{\qrL}(u)\) of \(\cC\).
Therefore, \(\varphi(\widetilde{F}) = F\)

Since \(\cl_{\qrL}(u) a \subseteq \cl_{\qrL}(v) \Lra v \qrL ua \Lra v^{-1}L \subseteq (ua)^{-1}L\), it is straightforward to check that \(v^{-1}L = \eta(u^{-1}L, a)\) if and only if \(\cl_{\qrL}(v) \in δ(\cl_{\qrL}(u),a)\), for all \(u^{-1}L, v^{-1}L \in \wt{q}\) and \(a \in \Sigma\).

Finally, we need to show that \(\forall u \in Σ^*, \;  \cl_{\qrL}(u) \in Q \Lra \exists q_i \in \widetilde{Q}, \; q_i = u^{-1}L\).
Observe that:
\begin{align*}
u^{-1}L = \bigcup_{x \qrn_L u} u^{-1}L & \Lra \quad \text{[By Definition~\ref{def:NerodeQO}]} \\
u^{-1}L = \bigcup_{x^{-1}L \subsetneq u^{-1}L} x^{-1}L \enspace .
\end{align*}
It follows that \(\forall u \in Σ^*, \cl_{\qrL}(u) \text{ is \(L\)-prime} \Lra u^{-1}L \text{ is prime}\) and, therefore, \(\varphi(\wt{Q}) = Q\).

\item \(\cG{r}(\cN)\) is isomorphic to a sub-automaton of \(\cN^{\text{res}}\) and \(\lang{\cG{r}(\cN)} = \lang{\cN^{\text{res}}}\).

Given \(\cN = (Q, \Sigma, \delta, I, F)\), recall that \(\cN^{\text{res}} = (Q_r, Σ, δ_r, I_r, F_r)\) is the RFA built by the residualization operation defined by Denis et al.~\cite{denis2002residual}.
Let \(\cG{r}(\cN) = (\widetilde{Q}, \Sigma, \widetilde{\delta}, \widetilde{I}, \widetilde{F})\).

We will show that there is a surjective mapping \(\varphi\) that associates states and transitions of \(\cG{r}(\cN)\) with states and transitions of \(\cN^{\text{res}}\).
Moreover, if \(q \in \widetilde{Q}\) is initial (resp.\ final) then \(\varphi(q) \in Q_r\) is initial (resp.\ final) and \(q' \in \wt{δ}(q,a) \Lra \varphi(q') \in δ_r(\varphi(q),a)\).
In this way, we conclude that \(\cG{r}(\cN)\) is isomorphic to a sub-automaton of \(\cN^{\text{res}}\).
Finally, since \(\lang{\cN^{\text{res}}} = \lang{\cN}\) then it follows from Lemma~\ref{lemma: HrGeneratesL} that \(\lang{\cN^{\text{res}}} = \lang{\cN} = \lang{\cG{r}(\cN)}\).

Let \(\varphi: \widetilde{Q} \rightarrow Q_{r}\) be the mapping assigning to each state \(\cl_{\qrN}(u) \in \widetilde{Q}\)  with \(u \in \Sigma^*\), the set \(\post_u^{\cN}(I) \in Q_{r}\).

It is straightforward to check that the initial states \(\widetilde{I} = \{\cl_{\qrN}(u) \in \widetilde{Q} \mid \varepsilon \in \cl_{\qrN}(u)\}\) of \(\cG{r}(\cN)\) are mapped into the set \(\{\post_u^{\cN}(I) \mid \post_u^{\cN}(I) \subseteq \post_\varepsilon^{\cN}(I)\}\) which are the initial states of \(\cN^{\text{res}}\).

Similarly, each final state of \(\cG{r}(\cN)\), \(\cl_{\qrN}(u)\) with \(u \in \lang{\cN}\), is mapped to \(\post_u^{\cN}(I)\) such that \(\post_u^{\cN}(I) \cap F \neq \varnothing\), hence, \(\post_u^{\cN}(I)\) is a final state of \(\cN^{\text{res}}\).

Moreover, since \(\cl_{\qrN}(u) a \subseteq \cl_{\qrN}(v) \Lra v \qrN ua \Lra \post_{v}^{\cN}(I) \subseteq \post_{ua}^{\cN}(I)\), it follows that \(\forall u, v \in Σ^*\) such that \(\post_u^{\cN}(I), \post_v^{\cN}(I) \in Q_r\), we have \(\post_v^{\cN}(I) \in δ_r(\post_u^{\cN}(I),a) \Lra \cl_{\qrN}(v) \in \wt{δ}(\cl_{\qrN}(u), a)\).

Finally, we show that \(\forall u \in Σ^*, \;  \cl_{\qrN}(u) \in \widetilde{Q} \Ra \post_u^{\cN}(I) \in Q_r\).
By definition of \(\widetilde{Q}\) and \(Q_r\), this is equivalent to showing that for every word \(u \in Σ^*\), if \(\post_u^{\cN}(I)\) is coverable then \(\cl_{\qrN}(u)\) is \(L\)-composite.
Observe that:
\begin{align*}
\post_u^{\cN}(I) = \hspace{-10pt}\bigcup_{\post_x^{\cN}(I) \subsetneq \post_u^{\cN}(I)}\hspace{-10pt} \post_x^{\cN}(I) & \Lra \quad \text{[\(x \qrn_{\mathcal{N}} u \Lra \post_x^{\cN}(I) \subsetneq \post_u^{\cN}(I)\)]} \\
\post_u^{\cN}(I) = \bigcup_{x \qrn_{\mathcal{N}} u} \post_x^{\cN}(I) & \Ra \quad \text{[Since \(W_{\post_u^{\cN}(I),F}^{\cN} = u^{-1}L\)]} \\
u^{-1}L = \bigcup_{x \qrn_{\mathcal{N}} u} x^{-1}L \enspace .
\end{align*}
It follows that if \(\post_u^{\cN}(I)\) is coverable then \(\cl_{\qrN}(u)\) is \(L\)-composite, hence \(\varphi(\wt{Q}) \subseteq Q_r\).

\item \(\cG{r}(\cG{\ell}(\cN))\) is isomorphic to \(\cF{r}(L)\).

By Lemma~\ref{lemma:HlgeneratesL}, \(\cG{\ell}(\cN)\) is a co-RFA  accepting the language \(L\) with no empty states hence, by Lemma~\ref{lemma:coResidual_qrL=qrN}, \(\cG{r}(\cG{\ell}(\cN))\) is isomorphic to \(\cF{r}(\lang{\cG{\ell}(\cN)})=\cF{r}(\lang{\cN})\).\qedhere
\end{alphaenumerate}
\end{proof}

\qrlqrNCanRes*
\begin{proof}

As shown by Theorem~\ref{theoremF}~(\ref{theorem:CanonicalRFAlanguage}), \(\cF{r}(L)\) is the canonical RFA for \(L\), hence it is strongly consistent and, by Lemma~\ref{lemma:qrHEqualqrifHsc}, we have that \(\mathord{\qr_{\cF{r}(L)}} = \mathord{\qrL}\).
On the other hand, if \(\cG{r}(\cN)\) is isomorphic to \(\cF{r}(L)\) we have that \(\mathord{\qr_{\cG{r}(\cN)}} = \mathord{\qr_{\cF{r}(L)}}\), and by Lemma~\ref{lemma:qrHEqualqrifHsc}, \(\mathord{\qr_{\cG{r}(\cN)}} =~ \qrN\).
It follows that if \(\cG{r}(\cN)\) is isomorphic to \(\cF{r}(L)\) then \(\mathord{\qrL} = \mathord{\qrN}\).

Finally, if \(\mathord{\qrL} = \mathord{\qrN}\) then \(\cH^{r}(\qrL, L) = \cH^{r}(\qrN, \lang{\cN})\), i.e.,  \( \cF{r}(L) = \cG{r}(\cN)\).
\end{proof}

\NLqo*
\begin{proof}
Let \(\Pref, \Suf \subseteq Σ^*\) be a prefix-closed and a suffix-closed finite set, respectively, and let \(\cT = (T, \Pref, \Suf)\) be the table built by algorithm NL\(^*\).
Observe that for every \(u,v \in \Pref\):
\begin{align}
u \qA v & \Lra \quad \text{[By Definition~\ref{def:finiteNerode}]} \nonumber\\
{u}^{-1}L \subseteq_{\Suf} {v}^{-1}L & \Lra \quad \text{[By definition of quotient w.r.t \(S\)]} \nonumber\\
\forall x \in S, \; ux \in L \Ra vx \in L & \Lra \quad \text{[By definition of \(\cT\)]} \nonumber\\
\forall x \in S, \; (\row(u)(x) = {+}) \Ra (\row(v)(x) = {+}) & \Lra \quad \text{[By Definition~\ref{def:coverRow}]} \nonumber\\
\row(u) \sqsubseteq \row(v) \enspace .
\label{eq:QOIffRowsSubset}
\end{align}

Moreover, for every \(u,v \in \Pref\) we have that \({u}^{-1}L =_{\Suf} {v}^{-1}L\) if{}f \(\row(u) = \row(v)\).

Next, we show that the join operator applied to rows corresponds to the set union applied to quotients w.r.t \(S\).
Let \(u,v \in \Pref\) and let \(x \in \Suf\).
Then,
\begin{align}
(\row(u) \sqcup \row(v))(x) = {+} & \Lra \quad \text{[By Definition~\ref{def:join}]} \nonumber\\
(\row(u)(x) = {+}) \lor (\row(v)(x) = {+}) & \Lra \quad \text{[By definition of row]} \nonumber\\
(ux \in L )\lor (vx \in L) & \Lra \quad \text{[By definition of quotient w.r.t \(\Suf\)]} \nonumber\\
(x \in {u}^{-1}L) \lor (x \in {v}^{-1}L )& \Lra \quad \text{[By definition of \(\cup\)]}\nonumber \\ 
x \in {u}^{-1}L \cup {v}^{-1}L \enspace .
\label{eq:joinUnion}
\end{align}
Therefore, we can prove that \(\row(u)\) is \(\cT\)-\emph{prime} if{}f \(\cl_{\qA}(u)\) is \(L_{\Suf}\)-prime w.r.t. \(\Pref\).
\begin{align*}
\row(u) = {\textstyle\bigsqcup_{v \in \Pref, \; \row(v) \sqsubset \row(u)}} \row(v) & \Lra \quad \text{[By Equation~\eqref{eq:QOIffRowsSubset}]} \\
\row(u) = {\textstyle\bigsqcup_{v \in \Pref, \; {v}^{-1}L \subsetneq_{\Suf} {u}^{-1}L}} \row(v) & \Lra \quad \text{[By Equation~\eqref{eq:joinUnion}]} \\
{u}^{-1}L =_{\Suf} {\textstyle\bigcup_{v \in \Pref, \; {v}^{-1}L \subsetneq_{\Suf} {u}^{-1}L}} {v}^{-1}L & \Lra \quad \text{[\({v}^{-1}L \subsetneq_{\Suf} {u}^{-1}L \Lra u \qAn v\)]} \\
{u}^{-1}L =_{\Suf} {\textstyle\bigcup_{v \in \Pref, \; u \qAn v}} {v}^{-1}L  \enspace .
\end{align*}

It follows from Definitions~\ref{def:ClosedCons} (\ref{def:ClosedCons:Closed}) and~\ref{def:Table} (\ref{def:Table:closed}) and Equation~\eqref{eq:joinUnion} that \(\cT\) is closed if{}f \(\qA\) is closed.
Moreover, it follows from Definitions~\ref{def:ClosedCons} (\ref{def:ClosedCons:Cons}) and~\ref{def:Table} (\ref{def:Table:Consistent}) that \(\cT\) is consistent if{}f \(\qA\) is consistent.

On the other hand, for every \(u,v \in \Pref, a \in Σ\) and \(x \in \Suf\) we have that:
\begin{align*}
(\row(u) \subseteq \row(v)) \land (\row(ua)(x) = {+}) \land (\row(va)(x) = {-} )& \Lra \quad \text{[By Equation~\eqref{eq:QOIffRowsSubset}]} \\
(u \qA v )\land (uax \in L) \land (vax \notin L) & %
\end{align*}
It follows that if \(\cT\) and \(\qA\) are not consistent then both NL\(^*\) and NL\(^{\qo}\) can find the same word \(ax \in Σ   \Suf\) and add it to \(\Suf\).
Similarly, it is straightforward to check that if \(\row(ua)\) with \(u \in \Pref\) and \(a \in Σ\) break consistency, i.e.\ it is \(\cT\)-prime and it is not equal to any \(\row(v)\) with \(v \in \Pref\), then \(\cl_{\qA}(ua)\) is \(L_{\Suf}\)-prime for \(\Pref\) and not equal to any \(\cl_{\qA}(v)\) with \(v \in \Pref\).
Thus, if \(\cT\) and \(\qA\) are not closed then both NL\(^*\) and NL\(^{\qo}\) can find the same word \(ua\) and add it to \(\Pref\).

It remains to show that both algorithms build the same automaton modulo isomorphism, i.e., \(\cR(\cT) = (\widetilde{Q}, Σ, \wt(δ), \widetilde{I}, \wt{F})\) is isomorphic to \(\cL(\qA, \Pref) = (Q, Σ, δ, I, F)\).
Define the mapping \(\varphi: Q \to \wt{Q}\) as \(\varphi(\cl_{\qA}(u)) = \row(u)\).
Then:
\begin{align*}
\varphi(Q) & = \{\varphi(\cl_{\qA}(u)) \mid u \in \cP \land \cl_{\qA}(u) \text{ is \(L_{\Suf}\)-prime w.r.t. \(\Pref\)}\} \\
& = \{\row(u) \mid u \in \cP \land \row(u) \text{ is \(\cT\)-prime}\} = \wt{Q} \enspace .\\
\varphi(I) & = \{\varphi(\cl_{\qA}(u)) \mid \varepsilon \in \cl_{\qA}(u)\} = \{\row(u) \mid u \qA \varepsilon\} \\
& = \{\row(u) \mid \row(u) \sqsubseteq \row(\varepsilon)\} = \wt{I} \enspace .\\
\varphi(F) & = \{\varphi(\cl_{\qA}(u)) \mid u \in L \cap \cP\} = \{\row(u) \mid u \in L \cap \cP\} \\
& = \{\row(u) \mid \row(u)(\varepsilon) = {+}\} = \wt{F}\enspace .\\
\varphi(\delta(\cl_{\qA}(u),a)) &= \varphi(\cl_{\qA}(ua)) = \{\row(v) \mid \cl_{\qA}(u) \in Q \land \cl_{\qA}(u)a \subseteq \cl_{\qA}(v)\} \\
& = \{\row(v) \mid \row(v) \in \wt{Q} \land v \qA ua\}  \\
& = \{\row(v) \mid \row(v) \in \wt{Q} \land \row(v) \sqsubseteq \row(ua)\} \\
& = \wt{\delta}(\row(u),a) = \wt{\delta}(\varphi(\cl_{\qA}(u)),a) \enspace .
\end{align*}

Finally, we show that \(\varphi\) is an isomorphism.
Clearly, the function \(\varphi\) is surjective since, for every \(u \in \Pref\), we have that \(\row(u) = \varphi(\cl_{\qA}(u))\).
Moreover \(\varphi\) is injective since for every \(u,v \in \Pref\), \(\row(u) = \row(v) \Lra {u}^{-1}L =_{\Suf} {v}^{-1}L\), hence \(\row(u) = \row(v) \Lra \cl_{\qA}(u) = \cl_{\qA}(v)\).

We conclude that \(\varphi\) is an NFA isomorphism between \(\cL(\qA,\Pref))\) and \(\cR(\cT)\).
Therefore NL\(^*\) and NL\(^{\qo}\) exhibit the same behavior, provided that both algorithms resolve nondeterminism in the same way, as they both maintain the same sets \(\Pref\) and \(\Suf\) and build the same automata at each step.
\end{proof}

\end{document}